\renewcommand{\S}{\mathcal{S}}
\newcommand{\V}{\textup{var}}
\newcommand{\vol}{\textup{vol}}
\newcommand{\R}{\mathbb{R}}
\newcommand{\E}{\mathbb{E}}
\newcommand{\D}{\mathcal{D}}
\newcommand{\argmax}[1]{\underset{#1}{\operatorname{arg}\,\operatorname{max}}\;}
\newcommand{\power}{\text{Power}}
\newcommand{\LRT}{\text{LRT}}
\newcommand{\CF}{\text{CF}}
\newcommand{\subsplit}{\text{subsplit}}
\newcommand{\spl}{\text{split}}
\newcommand{\RIPR}{\text{RIPR}}
\newcommand{\KL}{\text{KL}}
\newcommand{\one}{\mathbbm{1}}
\renewcommand{\dim}{\text{dim}}
\renewcommand{\hat}{\widehat}
\renewcommand{\P}{\mathbb{P}}
\renewcommand{\S}{\mathcal{S}}
\renewcommand*\bar[1]{%
   \hbox{%
     \vbox{%
       \hrule height 0.5pt 
       \kern0.5ex
       \hbox{%
         \kern-0.1em
         \ensuremath{#1}%
         \kern-0.1em
       }%
     }%
   }%
} 
\newtheorem{theorem}{Theorem}
\newtheorem{lemma}{Lemma}
\begin{document}

\nolinenumbers

\jname{Biometrika}



\title{Gaussian universal likelihood ratio testing}

\author{Robin Dunn}
\affil{Novartis Pharmaceuticals Corporation, Advanced Methodology and Data Science,\\ East Hanover, New Jersey 07936, U.S.A.
\email{robin.dunn@novartis.com}}

\author{Aaditya Ramdas, Sivaraman Balakrishnan, \and Larry Wasserman}
\affil{Department of Statistics \& Data Science, Carnegie Mellon University,\\ Pittsburgh, Pennsylvania 15213, U.S.A.
\email{aramdas@stat.cmu.edu} \email{siva@stat.cmu.edu} \email{larry@stat.cmu.edu}}

\maketitle

\begin{abstract}
The classical likelihood ratio test (LRT) based on the asymptotic chi-squared distribution of the log likelihood is one of the fundamental tools of statistical inference. A recent universal LRT approach based on sample splitting provides valid hypothesis tests and confidence sets in any setting for which we can compute the split likelihood ratio statistic (or, more generally, an upper bound on the null maximum likelihood). The universal LRT is valid in finite samples and without regularity conditions. This test empowers statisticians to construct tests in settings for which no valid hypothesis test previously existed. For the simple but fundamental case of testing the population mean of $d$-dimensional Gaussian data with identity covariance matrix, the classical LRT itself applies. Thus, this setting serves as a perfect test bed to compare the classical LRT against the universal LRT. This work presents the first in-depth exploration of the size, power, and relationships between several universal LRT variants. We show that a repeated subsampling approach is the best choice in terms of size and power. For large numbers of subsamples, the repeated subsampling set is approximately spherical. We observe reasonable performance even in a high-dimensional setting, where the expected squared radius of the best universal LRT's confidence set is approximately 3/2 times the squared radius of the classical LRT's spherical confidence set. We illustrate the benefits of the universal LRT through testing a non-convex doughnut-shaped null hypothesis, where a universal inference procedure can have higher power than a standard approach.
\end{abstract}

\begin{keywords}
Hypothesis testing; Sample splitting; Universal inference.
\end{keywords}

\section{Introduction}

Suppose we have data from an unknown distribution $P_{\theta^*}$ which belongs to some set of distributions $(P_\theta : \theta \in \Theta)$. We wish to test the composite null hypothesis $H_0: \theta^* \in \Theta_0$. We use the observed data to construct a test statistic $T_n$ and reject $H_0$ if $T_n$ exceeds a level $\alpha$ critical value, which we denote $c_\alpha$. A level $\alpha$ test is valid in finite samples if 
\begin{equation}
\sup_{\theta^* \in \Theta_0} P_{\theta^*}(T_n > c_\alpha) \leq \alpha. \label{eq:level_alpha_test}
\end{equation}
The test is asymptotically valid at level $\alpha$ if 
\begin{equation}
\lim_{n\to\infty} \sup_{\theta^* \in \Theta_0} P_{\theta^*}(T_n > c_\alpha) \leq \alpha. \label{eq:asymp_level_alpha_test}
\end{equation}

We are primarily interested in universal test statistics that satisfy (\ref{eq:level_alpha_test}). For completeness, though, we will compare against a common hypothesis testing approach that satisfies (\ref{eq:level_alpha_test}) in our specific Gaussian setting of interest and satisfies (\ref{eq:asymp_level_alpha_test}) more generally.
Consider the alternative $H_1: \theta \in \Theta \: \backslash \: \Theta_0$. The generalized likelihood ratio statistic is $\mathcal{L}(\hat{\theta}) \: / \: \mathcal{L}(\hat{\theta}_0)$, where $\hat{\theta}$ is the maximum likelihood estimate in $\Theta$ and $\hat{\theta}_0$ is the maximum likelihood estimate in $\Theta_0$. Let $\dim(\Theta)$ represent the dimension of $\Theta$ in Euclidean space, and likewise for $\Theta_0$. We reject $H_0$ when $2 \log\{\mathcal{L}(\hat{\theta}) \: / \: \mathcal{L}(\hat{\theta}_0)\} > c_{\alpha, d}$, where $c_{\alpha, d}$ is the upper $\alpha$ quantile of the $\chi^2_d$ distribution and $d = \dim(\Theta) - \dim(\Theta_0)$. This construction arises from Wilks' Theorem \citep{wilks1938}, which states that $2\log\{\mathcal{L}(\hat{\theta}) \: / \: \mathcal{L}(\hat{\theta}_0)\}$ has an asymptotic $\chi^2_d$ distribution under certain regularity conditions. This will apply, for instance, when we have independent and identically distributed data from an exponential family, $\Theta_0$ is a subset of $\Theta$, and $\Theta$ and $\Theta_0$ are linear subspaces in Euclidean space \citep[Theorem 4.6]{van2000asymptotic}. 
We can invert the likelihood ratio test (LRT) to produce an asymptotically valid $100(1-\alpha)\%$ confidence region of the form:
\begin{equation*}
C_n^\LRT(\alpha) = \left\{\theta\in\Theta : 2\log \left\{\mathcal{L}(\hat{\theta}) \: / \:\mathcal{L}(\theta)\right\} \leq c_{\alpha, d} \right\}.
\end{equation*}
We reject $H_0$ if and only if $C_n^\LRT(\alpha) \cap \Theta_0 = \emptyset$, which is equivalent to rejecting $H_0$ if and only if $2 \log\{\mathcal{L}(\hat{\theta}) \: / \: \mathcal{L}(\hat{\theta}_0)\} > c_{\alpha, d}$. We refer to this testing framework as the classical LRT. Some composite nulls are irregular, meaning that Wilks' theorem does not apply and calculating a threshold can be hard due to intractable asymptotics.

The universal inference approach developed by \cite{wasserman2020universal} provides a new likelihood ratio testing framework that addresses situations where the classical LRT is not valid in finite samples, or potentially even asymptotically. This new LRT relies on sample splitting to construct a test and confidence set that are valid in finite samples and without regularity conditions. This universal inference method allows one to construct valid tests in settings for which no hypothesis test with type I error control and finite sample guarantees previously existed. The statistical literature has repeatedly emphasized the inadequacy of the asymptotic $\chi^2$ approximation in the small sample setting. Examples include \cite{bartlett1937properties}, \cite{lehmann2012likelihood}, and \cite{medeiros2017small}. Small sample sizes also pose a recurrent problem across biological science research. For instance, researchers have noted the prevalence of low-powered studies in neuroscience \citep{button2013power} and the need for clinical trial designs that account for the small sample sizes common to rare disease and pediatric population research \citep{ildstad2001small, mcmahon2016stratification}.

Many basic questions remain unanswered about the universal LRT, since its power even in very simple settings remains unknown. Further, \cite{wasserman2020universal} describe numerous settings in which the universal LRT is the first hypothesis test with finite sample validity. These settings include testing the number of components in mixture models \citep{hartigan1985, mclachlan1987, chen2009, li2010} and testing whether the underlying density satisfies the shape constraint of log-concavity \citep{cule2010}. As a precursor to studying the power in these important but as-yet intractable settings, we first study the universal LRT in the fundamental case of constructing confidence regions or hypothesis tests for the population mean $\theta^*\in\R^d$ when $Y_1, \ldots, Y_n \sim N(\theta^*, I_d)$. In this setting --- where the classical LRT is not only valid but also exact --- our results showcase the reasonable performance of the universal LRT in comparison to the classical approach. The universal LRT will still apply if the covariance matrix is unknown, but this requires fitting the maximum likelihood estimates of both the mean and covariance matrix. Furthermore, confidence regions that are spherical under the identity covariance matrix may no longer be spherical in the general covariance matrix setting. With more technical effort, it is possible to characterize the distribution of the split LRT test statistic beyond the Gaussian setting. For instance, \cite{strieder2022choice} derive a non-central split chi-squared distribution, which governs the asymptotic behavior of the split LRT statistic under local alternatives in regular settings where the classical LRT is valid. 

This work provides two main contributions: First, we provide a careful analysis of several variants of the universal LRT in the Gaussian case. We show that a repeated subsampling approach is the best choice in terms of size and power. We observe reasonable performance in a high-dimensional setting, where the expected squared radius of the best universal LRT confidence set is approximately 3/2 times the squared radius of the set constructed through the classical approach. Thus, in particular, the power of the universal approach has the same behavior in $n,d,\alpha$ as the classical approach. Second, we show an example of a hypothesis test on normally distributed data where universal LRT methods have higher power than classical testing methods. Specifically, when testing the non-convex ``doughnut'' null $H_0: \|\theta^*\|\in [0.5, 1]$ versus $H_1: \|\theta^*\|\notin [0.5,1]$ on $N(\theta^*, I_d)$ data, a universal LRT approach can have higher power than a standard approach that uses the classical LRT confidence set. A test of this form could examine, for instance, whether trial outcomes or biomarker levels are within an acceptable range.

\section{Universal LRT confidence sets}

\subsection{Universal LRT background}

\cite{wasserman2020universal} presented an alternative to the LRT that is valid in finite samples without requiring regularity conditions. Suppose we have $n$ independent and identically distributed observations $Y_1, \ldots, Y_n \sim P_{\theta^*}$, where $P_{\theta^*}$ is from a family ${(P_\theta: \theta\in\Theta)}$. Each $P_\theta$ has a density denoted by $p_\theta$. To implement the test, first partition the data into $\D_0$ and $\D_1$. Let $\hat{\theta}_1$ be an estimator constructed from $\D_1$. The parameter $\hat{\theta}_1$ could be the maximum likelihood estimate, but any parameter that is fixed given $\D_1$ is valid. Certain choices of $\hat{\theta}_1$ may be more efficient. Using the data in $\D_0$, the likelihood function is $\mathcal{L}_0(\theta) = \Pi_{Y_i\in \D_0} p_\theta(Y_i)$. Define the split LRT statistic as 
\begin{align*}
T_n(\theta) &= \mathcal{L}_0(\hat{\theta}_1) / \mathcal{L}_0(\theta).
\end{align*}
The universal confidence set for $\theta^*$ using the split LRT is
\begin{align*} 
C_n^{\spl}(\alpha) &= \{\theta \in \Theta: T_n(\theta) < 1/\alpha \}.
\end{align*}

\begin{theorem} \label{thm:valid}
$C_n^{\spl}(\alpha)$ is a valid $100(1-\alpha)\%$ confidence set for $\theta^*$. As a consequence, and equivalently, when testing an arbitrary composite null $H_0: \theta^*\in\Theta_0$ versus $H_1: \theta^*\in\Theta\:\backslash\:\Theta_0$, rejecting $H_0$ when $\Theta_0 \cap C_n^{\spl}(\alpha) = \emptyset$ provides a valid level $\alpha$ hypothesis test. The latter rule reduces to rejecting if $T_n(\hat \theta_0) \geq 1/\alpha$, where $\hat \theta_0 = \arg\max_{\theta \in \Theta_0} \mathcal{L}_0(\theta)$ is the maximum likelihood estimate under $H_0$.
\end{theorem} 
Theorem~\ref{thm:valid} is due to \cite{wasserman2020universal}. The validity of the universal test does not depend on large samples or regularity conditions. The proof establishes that $E_{\theta^*}\left\{T_n(\theta^*) \right\} \leq 1$ and then invokes Markov's inequality. See Section~\ref{sec:supp_thm} of the supplement for more details. This property on the expectation makes $T_n(\theta^*)$ an e-variable.  An instantiation of an e-variable is an e-value. For related work under varying terminology, see also the research on e-variables \citep{vovk2021values, grunwald2020safe}, betting scores \citep{shafer2021testing}, supermartingales \citep{shafer2011test, howard2020time, ignatiadis2022values}, the prediction-based-ratio protocol \citep{zhang2011asymptotically}, and the game theoretic version of e-variable-based confidence sets, which are called warranty sets, by \cite{shafer2021testing}.

The validity of $C_n^\spl(\alpha)$ only depends on 
the fact that $\E_{\theta^*}\{T_n(\theta^*)\} \leq 1$. If we consider multiple test statistics that each satisfy this condition, then the average of those test statistics will satisfy the condition as well. Therefore, the average of test statistics $T_n(\theta^*)$ across multiple data splits is also a valid test statistic. In fact, the ability to combine e-values through averaging without adjusting $\alpha$ is one benefit of e-values over p-values. E-value averaging is a common theme in the discussions of \cite{shafer2021testing}, including \cite{vovk2020note}.

\subsection{Classical test in normal setting}

Assume $Y_1, \ldots, Y_n$ are $d$-dimensional independent and identically distributed vectors drawn from $N(\theta^*, I_d)$ with $\theta^*\in\R^d$. Where $c_{\alpha, d}$ is the upper $\alpha$ quantile of the $\chi^2_d$ distribution, the classical LRT confidence set for $\theta^*$ is 
\begin{align}
C_n^\LRT(\alpha) &= \left\{\theta\in\Theta : \|\theta-\bar{Y}\|^2 \leq c_{\alpha, d} / n \right\}. \label{eq:C_n_usual}
\end{align}
See Section~\ref{sec:supp_eq} of the supplement for a derivation of (\ref{eq:C_n_usual}). In this case, $C_n^\LRT(\alpha)$ is valid in finite samples, since $n\|\theta^* - \bar{Y}\|^2$ follows a $\chi^2_d$ distribution. We compare $C_n^\LRT(\alpha)$ to the split LRT set and several universal confidence sets that are variants of the split LRT set.

\subsection{Split, cross-fit, and subsampling sets in normal setting}

First, we consider two universal LRT variants based on a single split of the data. Assume we split the $n$ observations in half, such that $\D_0$ and $\D_1$ each contain $n/2$ observations. Define $\bar{Y}_0 = (2/n)\sum_{Y_i \in \D_0} Y_i$ and $\bar{Y}_1 = (2/n)\sum_{Y_i \in \D_1} Y_i$. Then the confidence set for $\theta^*$ based on the split likelihood ratio is
\begin{align*}
C_n^{\spl}(\alpha) &= \left\{\theta\in\Theta: \exp\left(-\frac{n}{4}\|\bar{Y}_0 - \bar{Y}_1\|^2 + \frac{n}{4}\|\bar{Y}_0 - \theta\|^2\right) < \frac{1}{\alpha} \right\} \\
&= \left\{\theta \in \Theta: \|\theta - \bar{Y}_0\|^2 < (4/n)\log(1/\alpha) + \|\bar{Y}_0 - \bar{Y}_1\|^2 \right\}. \stepcounter{equation}\tag{\theequation}\label{eq:C_n_split}
\end{align*}

See Section~\ref{sec:supp_eq} of the supplement for a derivation of (\ref{eq:C_n_split}). Using the same split, we define the cross-fit statistic as $S_n(\theta) = \{T_n(\theta) + T_n^{\text{swap}}(\theta)\}/2$, where $T_n^{\text{swap}}(\theta)$ is computed by switching the roles of $\D_0$ and $\D_1$. 
Then the cross-fit confidence set is a valid $100(1-\alpha)\%$ set given by
\begin{equation*}
C_n^\CF(\alpha) = \Bigg\{\theta\in\Theta : \frac{1}{2} \exp\left(-\frac{n}{4}\|\bar{Y}_0 - \bar{Y}_1\|^2\right) \left\{\exp\left(\frac{n}{4}\|\bar{Y}_0 - \theta\|^2\right) + \exp\left(\frac{n}{4}\|\bar{Y}_1 - \theta\|^2 \right) \right\} < \frac{1}{\alpha} \Bigg\}.
\end{equation*}

The split and cross-fit sets have both statistical randomness, due to the random sampling of observations, and algorithmic randomness, due to the randomness in splitting the data into $\D_0$ and $\D_1$. In contrast, the classical LRT only has statistical randomness, since the test is deterministic for a given set of observations. We now consider a repeated subsampling approach. This universal method attempts to mitigate the algorithmic randomness from the split and cross-fit LRTs by splitting the observations many times and averaging the test statistics. Algorithm~\ref{alg:subsampling} shows how to compute the subsampling test statistic $T_n(\theta)$ at a given $\theta\in\R^d$. 

\begin{algorithm} 
\caption{Compute the subsampling test statistic $T_n(\theta)$.}
\label{alg:subsampling}
\begin{tabbing}
   \qquad \textit{Input:} $n$ independent $d$-dimensional observations $Y_1, \ldots, Y_n \sim N(\theta^*, I_d)$ ($\theta^*$ unknown),\\ 
   \qquad \qquad \quad a value of $\theta\in\R^d$, number of subsamples $B$. \\
   \qquad \textit{Output:} The subsampling test statistic $T_n(\theta)$. \\
   \qquad \enspace For $b=1$ to $b = B$: \\
   \qquad \qquad Randomly split the data into $\D_{0,b}$ and $\D_{1,b}$, each containing $n/2$ values of $Y_i$. \\
   \qquad \qquad Let $\bar{Y}_{0,b} = (2/n) \sum_{Y_i\in \D_{0,b}} Y_i$ and let $\bar{Y}_{1,b} = (2/n) \sum_{Y_i\in \D_{1,b}} Y_i$. \\
   \qquad \qquad Compute $T_{n,b}(\theta) = \exp\left(-\frac{n}{4}\| \bar{Y}_{0,b} - \bar{Y}_{1,b}\|^2 + \frac{n}{4}\| \bar{Y}_{0,b} - \theta\|^2 \right)$. \\
\qquad \enspace Output the subsampling test statistic $T_n(\theta) = B^{-1} \sum_{b=1}^B T_{n,b}(\theta)$.
\end{tabbing}
\end{algorithm}
As noted earlier,
this method is also valid. The $100(1-\alpha)\%$ subsampling confidence set is  
\begin{equation*}
C_n^{\subsplit}(\alpha) = \left\{\theta \in \Theta: \frac{1}{B} \sum_{b=1}^{B} \exp\left(-\frac{n}{4}\| \bar{Y}_{0,b} - \bar{Y}_{1,b}\|^2 + \frac{n}{4}\| \bar{Y}_{0,b} - \theta\|^2 \right) < \frac{1}{\alpha} \right\}.
\end{equation*}

Figure~\ref{fig:fixed_data_regions} shows coverage regions of the classical LRT, split LRT, cross-fit LRT, and subsampling LRT at $B = 100$ from six simulations with $\theta^* = (0, 0)$. We generate 1000 observations from $N(\theta^*, I_2)$, and we use this sample for all simulations. Hence, the variation in the split, cross-fit, and subsampling LRTs across simulations is due to algorithmic randomness. We use \texttt{ggConvexHull} for plotting because these confidence sets are all convex. See Section~\ref{sec:convex} of the  supplement for proof that these sets are convex.

The coverage regions in Figure~\ref{fig:fixed_data_regions} suggest several relationships that we will formalize. We see that the classical LRT provides the smallest confidence regions. This is not surprising since, even in finite samples, the classical LRT statistic follows a chi-squared distribution under $H_0: \theta = \theta^*$ in the Gaussian case. The volume of the cross-fit LRT set is less than or equal to the volume of the split LRT set, although the cross-fit set is not entirely contained within the split set. The split and cross-fit approaches both use a single split of the data, but there is a notable improvement from cross-fitting. The subsampling set also has less volume than the split LRT set. Recall that we construct the subsampling test statistic by performing the split LRT over repeated splits of the data and then averaging the test statistics $T_{n,b}(\theta)$. While any individual split LRT region is guaranteed to be spherical, the subsampling set is not necessarily a spherical region. For large $B$, however, we see that the subsampling region is approximately spherical. Thus, although the subsampling approach is computationally intensive, this hints that it may be possible to derive a formulaic approximation to the limiting subsampling set.

\begin{figure}
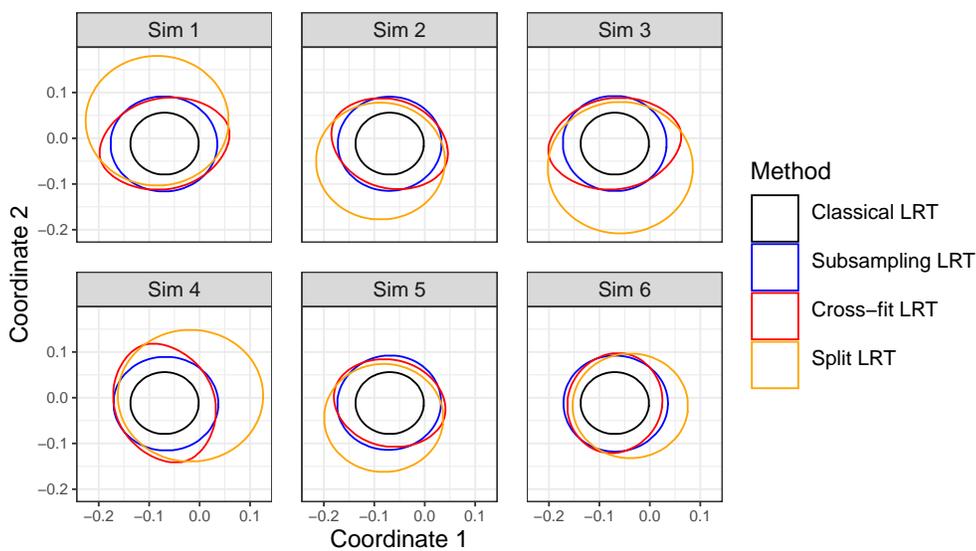

\figuresize{.65}
\figurebox{20pc}{25pc}{}[Figure1.pdf]
\caption{Coverage regions of classical LRT (black), subsampling LRT (blue), cross-fit LRT (red), and split LRT (orange) at $\alpha = 0.1$. The six simulations use the same 1000 observations from $N(\theta^*, I_2)$ under $\theta^* = (0, 0)$.}
\label{fig:fixed_data_regions}
\end{figure}

\subsection{Limit of subsampling region}

We are particularly interested in the behavior of the subsampling confidence set as $B\to\infty$. Since $B^{-1} \sum_{b=1}^B T_{n,b}(\theta) \to \E\{T_n(\theta) \mid \D\}$ as $B\to\infty$, the limiting subsampling set has no algorithmic randomness. We see hints of this in Figure~\ref{fig:fixed_data_regions}, where the subsampling set at $B=100$ does not vary much across six simulations on the same data. Theorem~\ref{thm:limiting_ratio} describes conditions for the convergence of $\E\{T_n(\theta) \mid \D\}$ over an approximation to 1. We have been suppressing the $n$ subscript when it is clear we are working with a single dataset with $n$ observations. Theorem~\ref{thm:limiting_ratio} considers a sequence of datasets, so we use the $n$ subscript to index the datasets.

\begin{theorem} \label{thm:limiting_ratio}
Assume we have a sequence of datasets $(\D_{n})_{n\in 2\mathbb{N}}$, where $\D_n = \{Y_{n1}, \ldots, Y_{nn}\}$ and each $Y_{ni}$ is an independent observation from $N(\theta^*, I_d)$. Let $\D_{0,n}$ be a sample of $n/2$ observations from $\D_n$, and let $\D_{1,n} = \D_n \backslash \D_{0,n}$. Define $\bar{Y}_n = (1/n)\sum_{i=1}^n Y_{ni}$, $\bar{Y}_{0,n} = (2/n) \sum_{Y_{ni} \in \D_{0,n}} Y_{ni}$, and $\bar{Y}_{1,n} = (2/n) \sum_{Y_{ni} \in \D_{1,n}} Y_{ni}$. Let $c > 0$, and let $(\theta_n)$ be a sequence that satisfies $\|\bar{Y}_n - \theta_n\| \leq c/\sqrt{n}$ for all $n$. Then
\begin{align}
\E\{T_n(\theta_n) \mid \D_n\} \: / \: \left\{\exp\left(\frac{3n}{10} \|\bar{Y}_n - \theta_n\|^2 \right) \left(\frac{2}{5}\right)^{d/2} \right\} &= 1 + o_P(1).
\end{align}
\end{theorem}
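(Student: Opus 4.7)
The plan is to reduce $\E\{T_n(\theta_n)\mid\D_n\}$ to a Gaussian integral by an algebraic identity, and then to justify the Gaussian approximation of the conditional law of $\bar{Y}_{0,n}$ given $\D_n$ via a finite-population central limit theorem. Exploiting the identity $\bar{Y}_n = (\bar{Y}_{0,n} + \bar{Y}_{1,n})/2$ (which holds because the split is exactly in half) gives $\bar{Y}_{0,n} - \bar{Y}_{1,n} = 2Z_n$ and $\bar{Y}_{0,n} - \theta_n = Z_n + (\bar{Y}_n - \theta_n)$, where $Z_n := \bar{Y}_{0,n} - \bar{Y}_n$. Expanding the exponent of $T_n(\theta_n)$, collecting terms, and setting $W_n := \sqrt{n}\,Z_n$ and $u_n := \sqrt{n}(\bar{Y}_n - \theta_n)$ (so $\|u_n\|\le c$ by hypothesis) produces
\[
\E\{T_n(\theta_n)\mid\D_n\} \;=\; \exp\!\left(\tfrac{n}{4}\|\bar{Y}_n-\theta_n\|^2\right)\,M_n, \qquad M_n := \E\{\exp(-\tfrac{3}{4}\|W_n\|^2 + \tfrac{1}{2}u_n^{\T} W_n)\mid\D_n\}.
\]
For $W\sim N(0,I_d)$, completing the square via $-\tfrac{5}{4}\|w\|^2+\tfrac{1}{2}u^{\T}w = -\tfrac{5}{4}\|w - u/5\|^2 + \|u\|^2/20$ identifies the Gaussian target $g(u):=\E\{\exp(-\tfrac{3}{4}\|W\|^2+\tfrac{1}{2}u^{\T}W)\} = (2/5)^{d/2}\exp(\|u\|^2/20)$. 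Since $\tfrac{1}{4}+\tfrac{1}{20}=\tfrac{3}{10}$, the claim reduces to showing $M_n = (1+o_P(1))\,g(u_n)$.

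For the central step, write $\xi_{n,i} := 2\,\one\{Y_{ni}\in\D_{0,n}\} - 1\in\{\pm1\}$, so $W_n = n^{-1/2}\sum_{i=1}^n \xi_{n,i}\,Y_{ni}$, and given $\D_n$ the vector $\xi_n$ is uniform on $\{\pm 1\}^n\cap\{\sum_i\xi_{n,i}=0\}$. A direct calculation yields $\E[\xi_{n,i}]=0$ and $\text{Cov}(\xi_{n,i},\xi_{n,j}) = -1/(n-1)$ for $i\ne j$, whence $\E\{W_n\mid\D_n\}=0$ and $\text{Cov}\{W_n\mid\D_n\} = (n-1)^{-1}\sum_i(Y_{ni}-\bar{Y}_n)(Y_{ni}-\bar{Y}_n)^{\T}$; the right-hand side converges a.s.\ to $I_d$ by the strong law. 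Combined with $\max_i\|Y_{ni}\|^2/n = O_P((\log n)/n)$ for a Gaussian population, the Lindeberg-type condition for H\'ajek's finite-population CLT is satisfied; applied to one-dimensional projections $v^{\T}W_n$ and combined with the Cram\'er--Wold device, this gives $W_n\mid\D_n\Rightarrow N(0,I_d)$ weakly for a.e.\ realisation of $(\D_n)_n$.

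To conclude, observe that $f_u(w) := \exp(-\tfrac{3}{4}\|w\|^2 + \tfrac{1}{2}u^{\T} w)$ is bounded and continuous on $\R^d$ with $\sup_w f_u(w) = \exp(\|u\|^2/12)\le\exp(c^2/12)$ uniformly in $\|u\|\le c$, and a mean-value bound yields $\|f_u - f_{u'}\|_\infty \le C_c\|u-u'\|$ on $B(0,c)$. Along any subsequence, extract a further subsequence on which $u_{n_k}\to u^\ast$ and the sample covariance converges; the portmanteau theorem (with the Lipschitz bound used to exchange $u_{n_k}$ for $u^\ast$) then gives $M_{n_k}\to g(u^\ast)$, while continuity of $g$ gives $g(u_{n_k})\to g(u^\ast)$, so $M_n - g(u_n) = o_P(1)$. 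Since $g(u_n)$ is bounded away from $0$ and $\infty$, this upgrades to $M_n/g(u_n) = 1 + o_P(1)$, proving the theorem. The main obstacle is the careful application of H\'ajek's CLT in the vector-valued setting together with the subsequence argument needed to accommodate the data-dependent bounded sequence $u_n$; the remaining steps reduce to routine algebra and a standard Gaussian integral.
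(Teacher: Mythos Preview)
Your proposal is correct and follows essentially the same route as the paper: the same algebraic reduction of $T_n(\theta_n)$ to $\exp(\tfrac14\|u_n\|^2)\,\E\{\exp(-\tfrac34\|W_n\|^2+\tfrac12 u_n^\T W_n)\mid\D_n\}$, the same Gaussian-integral identification of the target $(2/5)^{d/2}\exp(\|u\|^2/20)$, and the same appeal to H\'ajek's finite-population CLT vectorised via Cram\'er--Wold with the Lindeberg/Li--Ding max condition verified almost surely. The only cosmetic difference lies in how the data-dependent parameter $u_n$ is handled: the paper (its Lemma~2) uses an explicit step-function approximation of $f(x,v)$ on a compact product set $I\times \mathcal{B}_1(0)$, whereas you exploit the Lipschitz-in-$u$ bound on $f_u$ together with a pointwise subsequence/compactness argument; both are standard Portmanteau-type devices leading to the same $o_P(1)$ conclusion.
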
 

In words,
the subsampling statistic is approximately given by
$R(\theta)^{3/5}(2/5)^{d/2}$
where
$R(\theta)= \mathcal{L}(\hat\theta)/\mathcal{L}(\theta)$
is the usual likelihood ratio statistic.

Section~\ref{sec:supp_thm} of the supplement contains a proof of Theorem~\ref{thm:limiting_ratio}. The proof relies critically on the finite sample central limit theorems from \cite{hajek1960limiting} and \cite{li2017general} and on the Portmanteau Theorem proof techniques from \cite{van2000asymptotic}.
 
Since
\begin{align}
\E\{T_n(\theta) \mid \D\} \approx \exp\left(\frac{3n}{10} \|\bar{Y}-\theta\|^2 \right) \left(\frac{2}{5}\right)^{d/2}, \label{eq:ddimapprox}
\end{align} 
the subsampling confidence region is approximately
\begin{align*}
C_n^{\subsplit}(\alpha) &= \left\{\theta \in \Theta: \lim_{B\to\infty} \frac{1}{B} \sum_{b=1}^{B} \exp\left(-\frac{n}{4}\| \bar{Y}_{0,b} - \bar{Y}_{1,b}\|^2 + \frac{n}{4}\| \bar{Y}_{0,b} - \theta\|^2 \right) < \frac{1}{\alpha} \right\} \\
&\approx \left\{\theta\in\Theta: \|\bar{Y}-\theta\|^2 < \frac{10}{3n} \log\left( \frac{(5/2)^{d/2}}{\alpha} \right)  \right\}. \stepcounter{equation}\tag{\theequation} \label{eq:subsplit_radius}
\end{align*}

The approximations in (\ref{eq:ddimapprox}) and (\ref{eq:subsplit_radius}) only formally hold in the setting described in Theorem~\ref{thm:limiting_ratio}. Still, Figure~\ref{fig:expect_analytical_compare} validates (\ref{eq:ddimapprox}) as a reasonable approximation. We simulate one sample $Y_1, \ldots, Y_n \sim N(0, I_d)$ at $d=1$ and $d=20$, where $n = 1000$. We consider $\theta$ values of the form $\theta = c\vec{1}$. Through $B = 100,000$ subsampling simulations, we estimate 
\begin{align*}
\E\{T_n(\theta) \mid \D\} &\approx \frac{1}{B} \sum_{b=1}^{B} \exp\left(-\frac{n}{4}\| \bar{Y}_{0,b} - \bar{Y}_{1,b}\|^2 + \frac{n}{4} \| \bar{Y}_{0,b} - \theta\|^2 \right).
\end{align*}
The black dots represent this average at each value of $c$, and the red curve traces out $\exp((3n/10) \|\bar{Y} - \theta\|^2) (2/5)^{d/2}$ from (\ref{eq:ddimapprox}). Except for the most difficult setting of $(d = 20, n = 10)$, the simulated and analytical estimates align well. At $\alpha = 0.1$, the confidence region includes all values of $\theta$ such that the test statistic is at most $1/0.1$. The horizontal dashed black line represents this value. Thus, test statistics constructed from the simulated and analytical approaches would produce similar confidence regions.

\begin{figure}
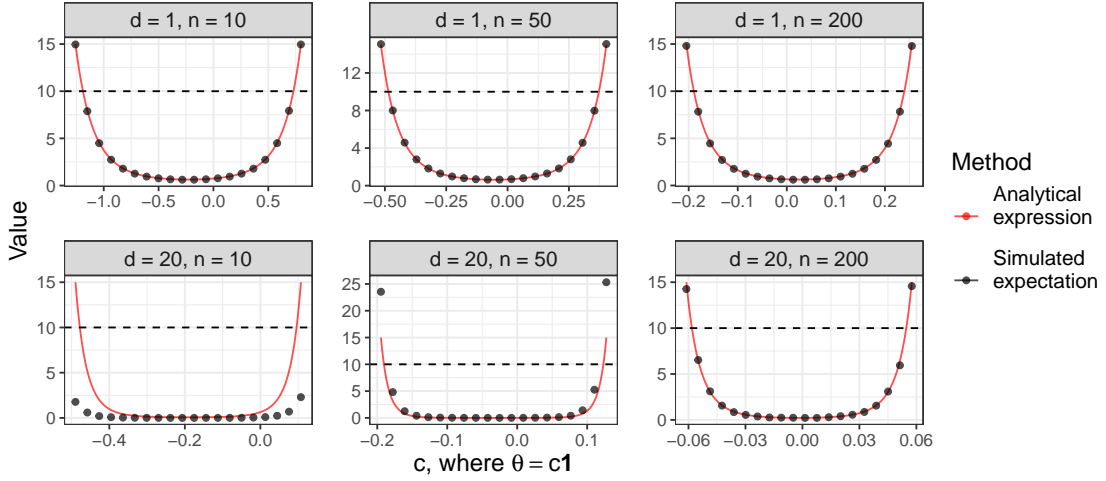

\figuresize{.65}
\figurebox{20pc}{25pc}{}[Figure2.pdf]
\caption{Analytical (red curve) and simulated (black dots) approximations of the limiting test statistic $\lim_{B\to\infty} \frac{1}{B}\sum_{b=1}^B T_{n,b}(\theta)$ at various dimensions $d$ and numbers of observations $n$. The test points equal $\theta = c\vec{1}$ for various $c$. The horizontal dashed black line at $1/0.1$ is the cutoff for an $\alpha = 0.1$ confidence region.}
\label{fig:expect_analytical_compare}
\end{figure}

\section{Comparison of universal LRT sets}

\subsection{Optimal split proportions}

We have been assuming that the universal LRTs place $n/2$ observations in $\D_0$ and $n/2$ observations in $\D_1$. The statement $\E_{\theta^*}\{T_n(\theta^*) \} \leq 1$ holds regardless of the proportion of observations in $\D_0$ versus $\D_1$, though. Let $p_0$ denote the proportion of observations that we place in $\D_0$. Recalling from expression~(\ref{eq:C_n_split}) that $C_n^\spl(\alpha)$ is a spherical set, let $r^2\{C_n^{\spl}(\alpha)\}$ denote the squared radius of $C_n^{\spl}(\alpha)$. Theorem~\ref{thm:split_p0} solves for the value of $p_0$ that minimizes $\E[r^2\{C_n^{\spl}(\alpha)\}]$, using the fact that $\|\bar{Y}_0 - \bar{Y}_1\|^2 \sim (4/n)\chi^2_d$.

\begin{theorem} \label{thm:split_p0}
Let $Y_1, \ldots, Y_n \sim N(\theta^*, I_d)$. The splitting proportion that minimizes $\E[r^2\{C_n^{\spl}(\alpha)\}]$ is 
\begin{align}
p_0^* &= 1 - \frac{\sqrt{4d^2 + 8d\log\left(1/\alpha\right)} - 2d}{4\log\left(1/\alpha\right)}.
\end{align}
\end{theorem}

As $d\to\infty$ for fixed $\alpha$, the optimal split proportion $p_0^*$ converges to 0.5. See Section~\ref{sec:supp_thm} of the supplement for a proof of Theorem~\ref{thm:split_p0} and a derivation of this fact. Alternatively, as $\alpha\to 0$ for fixed $d$, the proportion $p_0^*$ converges to 1, suggesting that one should use nearly all data for likelihood estimation. This is not an issue for reasonable $\alpha$ levels, though. For instance, at $d=1$, one would need to set $\alpha < \exp(-40)$ to produce an optimal split proportion $p_0^*$ that exceeds 0.90. 

Figure~\ref{fig:p0_sq_rad} shows the average squared radius of the split LRT at $p_0^*$ and at surrounding choices of $p_0$. The expected squared radius, given by the red curve, is more sensitive to changes in $p_0$ at higher values of $d$. That is, use of the optimal $p_0^*$ has a greater effect on the split LRT squared radius in higher dimensions. In high dimensions, though, $p_0^*$ is close to 0.5. It is thus a reasonable choice to use $p_0 = 0.5$ in all dimensions. We use $p_0 = 0.5$ for all remaining analyses.

\begin{figure}
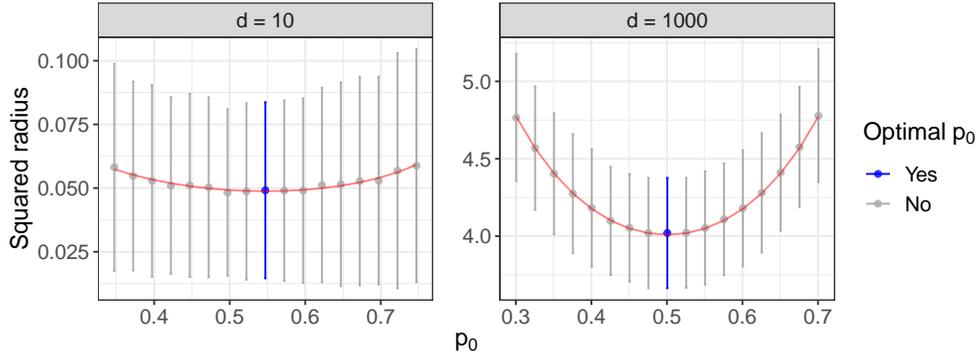

\figuresize{.65}
\figurebox{20pc}{25pc}{}[Figure3.pdf]
\caption{Squared radius of multivariate normal split LRT with varying $p_0$. We simulate $Y_1, \ldots, Y_{1000} \sim N(0, I_d)$ and compute the split LRT region at $\alpha = 0.10$ and varying $p_0$. We repeat this simulation 1000 times. At each $p_0$, the circular point is the mean squared radius and the error bar represents the mean squared radius $\pm$ 1.96 standard deviations. Hence, the error bars represent a typical range of squared radius values for each $d$ and $p_0$.  Blue points/lines correspond to $p_0^*$. The red curve is the expected squared radius. See Theorem~\ref{thm:split_p0} proof in the supplement for a derivation of the expected squared radius at $p_0$.}
\label{fig:p0_sq_rad}
\end{figure}

Recent work by \cite{strieder2022choice} arrives at somewhat different conclusions about the optimal split proportion in a setting that is similar to this work. Here we have solved for the optimal $p_0^*$ in the context of minimizing $\E[r^2\{C_n^\spl(\alpha)\}]$ for confidence set construction. \cite{strieder2022choice} similarly show results on $d$-dimensional multivariate normal data, but they test the null hypothesis that the first $d-k$ values of the $d$-dimensional mean vector equal zero. In settings where $k\approx d$, the split proportion that maximizes the power is similar to our $p_0^*$ and also converges to 0.5 in high dimensions. In settings where $k \ll d$, their optimal split ratio uses $p_0 < 0.5$, and the gap between their split ratio and 0.5 becomes larger in higher dimensions. Hence, if using the standard version of the split LRT, the optimal split ratio may vary depending on whether the goal is confidence set construction or hypothesis testing.

In the cross-fit case, we conjecture that $p_0 = 0.5$ minimizes the expected squared diameter. Simulations in Section~\ref{sec:supp_sim_CF} of the supplement support this claim. Intuitively, since the cross-fit approach uses both $\D_0$ and $\D_1$ once for parameter estimation and once for likelihood computation, we should not gain any efficiency by using unbalanced sets.

\subsection{Split versus cross-fit volume}

In Figure~\ref{fig:fixed_data_regions}, we see that the cross-fit LRT set $C_n^\CF(\alpha)$ is not a subset of the split LRT set $C_n^{\spl}(\alpha)$. Nevertheless, both empirically and theoretically, the volume of $C_n^{\CF}(\alpha)$, denoted $\vol\{C_n^\CF(\alpha)\}$, is less than or equal to $\vol\{C_n^{\spl}(\alpha) \}$. Theorem~\ref{thm:CF_subset} proves that the cross-fit LRT constructs smaller confidence sets than the split LRT. See Section~\ref{sec:supp_thm} of the supplement for a proof.

\begin{theorem} \label{thm:CF_subset}
Suppose $Y_1, \ldots, Y_n$ are independent and identically distributed observations from  $N(\theta^*, I_d)$. Split the sample such that $\D_0$ and $\D_1$ each contain $n/2$ observations. Use $\D_0$ and $\D_1$ to define the split and cross-fit sets. Then $C_n^\CF(\alpha)$ is a subset of a translation of the split LRT set, recentered at $\bar{Y}$. That is, $C_n^\CF(\alpha) \subseteq \left\{\theta \in \Theta: \|\theta - \bar{Y}\|^2 < (4/n)\log(1/\alpha) + \|\bar{Y}_0 - \bar{Y}_1\|^2 \right\}$, and hence $\vol\{C_n^\CF(\alpha)\} \leq \vol\{C_n^{\spl}(\alpha)\}$. Furthermore, if and only if $\bar{Y}_0 = \bar{Y}_1$, $C_n^\CF(\alpha)$ and $C_n^\spl(\alpha)$ have equal volume and are in fact the same set.
\end{theorem}

Out of all universal methods, our simulations have shown that the subsampling approach tends to produce the smallest sets. Constructing a subsampling region can be computationally intensive, though, especially when the limiting subsampling test statistic is intractable. The cross-fit approach may be a reasonable compromise in settings where repeated subsampling is computationally prohibitive.

\subsection{Bounds on the size of universal LRT sets} \label{sec:comparative_size}

Figure~\ref{fig:fixed_data_regions} demonstrated the appearance of the four LRT regions in the $d=2$ case at $\alpha = 0.1$. We observe that the classical LRT and the split LRT produce the smallest and largest confidence regions, respectively. While the split LRT region's radius appears to be approximately twice the classical LRT region's radius, we consider whether the ratio of their squared radii diverges in high dimensions or for very small $\alpha$. We characterize the ratio of squared radii in terms of the expected ratio. The expected squared radius of $C_n^{\spl}(\alpha)$ is 
\begin{align}
\E[r^2\{C_n^{\spl}(\alpha)\}] &= (4/n)\log(1/\alpha) + (4/n)d. \label{eq:sq_rad_split}
\end{align}
Thus, the expected ratio of the split LRT squared radius over the classical LRT radius is
\begin{align}
\frac{\E[r^2\{C_n^{\spl}(\alpha)\}]}{r^2\{C_n^\LRT(\alpha)\}} &= \frac{(4/n)\log(1/\alpha) + (4/n)d}{c_{\alpha,d}/n} = \frac{4\log(1/\alpha) + 4d}{c_{\alpha,d}}. \label{eq:split_usual}
\end{align}
For $d\geq 2$ and $\alpha \leq 0.17$,
\begin{align}
\frac{4\log(1/\alpha) + 4d}{2\log(1/\alpha) + d + 2\sqrt{d\log(1/\alpha)}} \leq \frac{\E[r^2\{C_n^{\spl}(\alpha)\}]}{r^2\{C_n^\LRT(\alpha)\}} \leq \frac{4\log(1/\alpha) + 4d}{2\log(1/\alpha) + d - 5/2}. \label{eq:ratio_sq_rad_1}
\end{align}
For $d = 1$ and $\alpha \leq \exp\left(-\frac{5(1+\sqrt{5})}{4} \right),$ 
\begin{align}
\frac{4\log(1/\alpha) + 4}{2\log(1/\alpha) + 1 + 2\sqrt{\log(1/\alpha)}} \leq \frac{\E[r^2\{C_n^{\spl}(\alpha)\}]}{r^2\{C_n^\LRT(\alpha)\}} \leq \frac{4\log(1/\alpha) + 4}{2\log(1/\alpha) + 9 - 4\sqrt{5 + 2\log(1/\alpha)}}. \label{eq:ratio_sq_rad_2}
\end{align}

See Section~\ref{sec:supp_eq} of the supplement for derivations of (\ref{eq:sq_rad_split}), (\ref{eq:ratio_sq_rad_1}), and (\ref{eq:ratio_sq_rad_2}). The derivation of (\ref{eq:ratio_sq_rad_1}) relies on chi square quantile bounds from Theorem A and Proposition 5.1 of \cite{inglot2010inequalities}. The derivation of the upper bound in (\ref{eq:ratio_sq_rad_2}) involves a bound from \cite{pollard2015} and \cite{feller1968probability}. The restrictions on $\alpha$ and $d$ are necessary for the upper bounds to be valid. The lower bound is valid for any $d\geq 1$ and $\alpha\in (0,1)$. The upper and lower bounds both converge to 4 as $d \to \infty$. In addition, all bounds converge to 2 as $\alpha \to 0$. Figure~\ref{fig:split_usual} shows the true value of $\E[r^2\{C_n^{\spl}(\alpha)\}] \: / \: r^2\{C_n^\LRT(\alpha)\}$ as well as the proved lower and upper bounds on this expectation at $d = 10$ and $d = 100,000$. We observe that the bounds converge to 2 for very small $\alpha$ relative to the dimension, and we observe  that the bounds converge to 4 for high dimensions relative to $\alpha$. Interestingly, we see that the expected value of the ratio is not monotone increasing in $\alpha$.

\begin{figure}
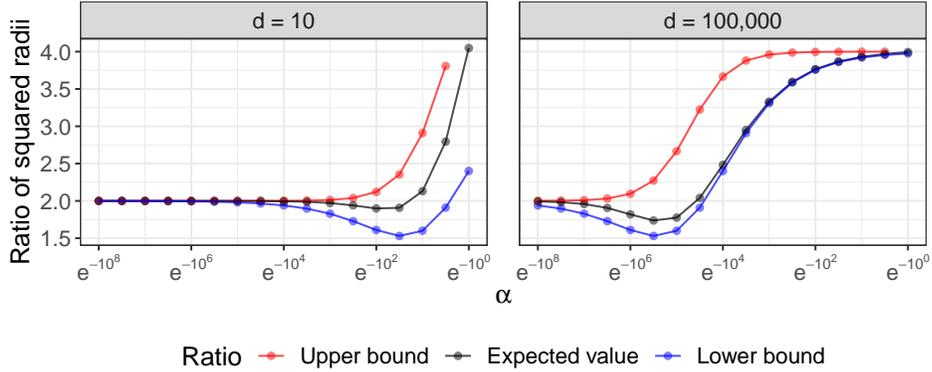

\figuresize{.7}
\figurebox{20pc}{25pc}{}[Figure4.pdf]
\caption{Expectation (black), lower bound (blue), and upper bound (red) of $E\left[r^2\left\{C_n^{\spl}\right(\alpha)\} / r^2\{C_n^\LRT(\alpha)\}\right]$. The expected value equals the expression from (\ref{eq:split_usual}). The lower and upper bounds correspond to the bounds in (\ref{eq:ratio_sq_rad_1}). Data points correspond to values at $\alpha = \exp(-10^x)$ for $x$ from 8 to 0 in increments of $-0.5$.}
\label{fig:split_usual}
\end{figure}

Furthermore, $r\{C_n^{\spl}(\alpha)\} / r\{C_n^\LRT(\alpha)\} \leq 2$ with probability of approximately $1-\alpha$ in high dimensions. Theorem~\ref{thm:bounds} formalizes this result. See Section~\ref{sec:supp_thm} of the supplement for a proof.

\begin{theorem} \label{thm:bounds}
Let $f_d(x)$ be the probability density function of the $\chi^2_d$ distribution, and let $c_{\alpha, d}$ be the upper $\alpha$ quantile of the $\chi^2_d$ distribution. Assume $c_{\alpha, d} + \log(\alpha) > d - 2$. Then
\begin{align*}
\P\left[r\{C_n^{\spl}(\alpha)\} / r\{C_n^\LRT(\alpha)\} \leq 2\right] &\geq 1 - \alpha - \log(1/\alpha) f_d\{c_{\alpha, d} + \log(\alpha)\} \\ 
\text{and} \quad \P\left[r\{C_n^{\spl}(\alpha)\} / r\{C_n^\LRT(\alpha)\} \leq 2\right] &\leq 1 - \alpha - \log(1/\alpha) f_d(c_{\alpha, d}).
\end{align*}
As $d\to\infty$ for fixed $\alpha \leq 0.17$, $\log(1/\alpha) f_d\{c_{\alpha, d} + \log(\alpha)\}$ and $\log(1/\alpha) f_d(c_{\alpha, d})$ both converge to 0.
\end{theorem}

As one sufficient condition for Theorem~\ref{thm:bounds}, if $d \geq 2$ and $\alpha \leq 0.17$, then it holds that $c_{\alpha, d} + \log(\alpha) > d - 2$. Figure~\ref{fig:ratio_leq_4_lowdim} shows that in high dimensions, the bounds from Theorem~\ref{thm:bounds} are close to $1-\alpha$ and, hence, close to each other. Both theoretically and empirically, the ratio of radii $r\{C_n^{\spl}(\alpha)\} / r\{C_n^\LRT(\alpha)\}$ is less than 2 with probably slightly below $1-\alpha$ in higher dimensions. 

\begin{figure}
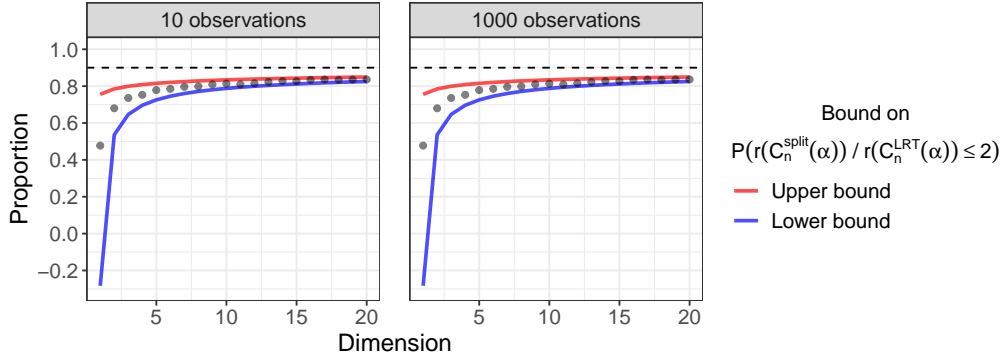

\figuresize{.65}
\figurebox{20pc}{25pc}{}[Figure5.pdf]
\caption{We perform 10,000 simulations in which we simulate a data sample $Y_1, \ldots, Y_{1000} \sim N(0, I_2)$, construct the split and classical LRT confidence sets, and compute the squared radii. The points represent the proportion of these simulations in which $r\{C_n^{\spl}(\alpha)\} / r\{C_n^\LRT(\alpha)\} \leq 2$. The red and blue curves are the lower and upper bounds on $\P[r\{C_n^{\spl}(\alpha)\} \: / \: r\{C_n^\LRT(\alpha)\} \leq 2]$ from Theorem~\ref{thm:bounds} at $\alpha = 0.1$.}
\label{fig:ratio_leq_4_lowdim} 
\end{figure}

From (\ref{eq:subsplit_radius}) and (\ref{eq:sq_rad_split}), we can see that 
\begin{align}
\frac{r^2\{C_n^\subsplit(\alpha)\}}{\E[r^2\{C_n^\spl(\alpha)\}]} \approx \frac{5}{6} \left\{\frac{(d/2)\log(5/2) + \log(1/\alpha)}{d + \log(1/\alpha)} \right\}. \label{eq:subsplit_split}
\end{align}
Combining (\ref{eq:ratio_sq_rad_1}) and (\ref{eq:subsplit_split}), $r^2\{C_n^\subsplit(\alpha)\} / r^2\{C_n^\LRT(\alpha)\}$ is approximately $4(5/12) \log(5/2) \approx 3/2$ as $d\to\infty$ for fixed $\alpha$, and the ratio is approximately $2(5/6) = 5/3$ as $\alpha \to 0$ for fixed $d$. Equivalently, $r\{C_n^\subsplit(\alpha)\} / r\{C_n^\LRT(\alpha)\} \approx 1.24$ as $d\to\infty$ for fixed $\alpha$, and $r\{C_n^\subsplit(\alpha)\} / r\{C_n^\LRT(\alpha)\} \approx 1.29$ as $\alpha\to 0$ for fixed $d$. Recall that the classical LRT cutoff is dimension dependent and uses the exact distribution's quantile, while the universal LRT cutoff is dimension independent. Regardless, in the extreme cases of $d\to\infty$ or $\alpha \to 0$, the ratio of the classical LRT region's radius to the subsampling universal LRT region's radius is less than 2. Although the ratio of radii is bounded by a constant, the ratio of volumes can still become large in high dimensions.

\subsection{Power}

While the universal methods provide conservative confidence regions for $\theta^*$, we establish that the universal tests can still have high power. Suppose we wish to test $H_0: \theta^* = 0$ versus $H_1: \theta^* \neq 0$ at level $\alpha$. We reject $H_0$ if $0 \notin C_n(\alpha),$ where $C_n(\alpha)$ is the confidence set defined by some likelihood ratio test. The power of the test at $\theta^* \neq 0$ is $\P_{\theta^*}\{0 \notin C_n(\alpha)\}.$

First, we consider the classical LRT, stated in (\ref{eq:C_n_usual}). The power of the classical LRT at $\theta^*$ is 
\begin{align}
\power\{C_n^\LRT(\alpha); \theta^*\} &= \P_{\theta^*}\left(\|\bar{Y}\|^2 >  c_{\alpha,d} / n \right) \approx \Phi\left\{\frac{d + n\|\theta^*\|^2 - c_{\alpha, d}}{\sqrt{2(d + 2n\|\theta^*\|^2)}} \right\}. \label{eq:power_usual}
\end{align}
We can find a similar representation for the approximate power of the limiting subsampling LRT as $B \to \infty$:
\begin{align*}
\power\{C_n^{\subsplit}(\alpha); \theta^*\} &\approx \P_{\theta^*}\left[n\|\bar{Y}\|^2 \geq \frac{10}{3} \log\left\{\left(\frac{5}{2}\right)^{d/2} \frac{1}{\alpha} \right\} \right] \\
&\approx \Phi\left(\frac{1}{\sqrt{2(d + 2n\|\theta^*\|^2)}} \left[d + n\|\theta^*\|^2 - \frac{10}{3} \log\left\{\left(\frac{5}{2}\right)^{d/2} \frac{1}{\alpha} \right\} \right] \right). \stepcounter{equation}\tag{\theequation} \label{eq:power_subsplit}
\end{align*}
Since $n\|\bar{Y}\|^2 \sim \chi^2\left(df = d, \lambda = n \|\theta^*\|^2 \right)$, (\ref{eq:power_usual}) and (\ref{eq:power_subsplit}) use the normal approximation to the non-central $\chi^2$ distribution with a large noncentrality parameter $\lambda$ \citep{chun2009normal}. See Section~\ref{sec:supp_eq} of the supplement for derivations of (\ref{eq:power_usual}) and (\ref{eq:power_subsplit}). 

The power of the split LRT is 
\begin{align*}
\power\{C_n^{\spl}(\alpha); \theta^*\} &= \P_{\theta^*}\left\{\|\bar{Y}_0\|^2 \geq (4/n) \log(1/\alpha) + \|\bar{Y}_0 - \bar{Y}_1\|^2 \right\}
\end{align*}
and the power of the cross-fit LRT is
\begin{align*}
\power\{C_n^\CF(\alpha); \theta^*\} &= \P_{\theta^*}\left[ \exp\left(-\frac{n}{4}\|\bar{Y}_0 - \bar{Y}_1\|^2\right) \left\{\exp\left( \frac{n}{4}\|\bar{Y}_0\|^2 \right) + \exp\left(\frac{n}{4}\|\bar{Y}_1\|^2  \right) \right\} \geq \frac{2}{\alpha} \right]. 
\end{align*}

Consider the approximate power of $C_n^\LRT(\alpha)$ and $C_n^\subsplit(\alpha)$ for fixed $\alpha$ if $n\|\theta^*\|^2$ is constant. In this setting, the increase in data as $n\to\infty$, which makes rejecting $H_0$ easier, is offset by moving $\theta^*$ closer to the null, which makes rejecting $H_0$ more challenging.  If $n\|\theta^*\|^2$ is constant, then the approximate power expressions in (\ref{eq:power_usual}) and (\ref{eq:power_subsplit}) are both constant as well. In fact, for both the classical and subsampled split LRTs, $1/\sqrt{n}$ is the exact rate at which to shrink $\|\theta^*\|$ such that the approximate power stays constant as $n$ increases.

As $n\|\theta^*\|^2 \to \infty$ for fixed $\alpha$, the power of the tests approaches 1. Importantly, this shows that although the universal methods are conservative, they will all have high power for sufficiently large $n$ or for $\|\theta^*\|$ sufficiently far from $0$. As $\alpha \to 0$, the power approaches 0.

Figure~\ref{fig:power_vs_norm_2} plots the power of the LRTs against $\|\theta^*\|^2$. Each vector $\theta^*$ has the form $c \vec{1}.$ To plot the classical and subsampling LRT power, this figure uses the standard normal cumulative distribution function approximation to the non-central $\chi^2$ cumulative distribution function. We use simulations to approximate the power of the split and cross-fit LRTs. For a given value of $\theta^*$, we simulate $n=1000$ observations $Y_1, \ldots, Y_n \sim N(\theta^*, I_d)$. We construct split LRT and cross-fit LRT confidence sets from this sample. Then we test whether $\theta = 0$ is in each confidence set. We repeat this procedure 5000 times at each $\theta^*$, and each procedure's  estimated power at $\theta^*$ is the proportion of times that $0 \notin C_n(\alpha)$. 

As we would expect, the power is higher when $\theta^*$ is farther from $0$. In addition, the classical LRT has the highest power, followed in order by the subsampling LRT, the cross-fit LRT, and the split LRT. Interestingly, at $d=1$ the subsampling and cross-fit LRT have nearly identical approximate power. As $d$ increases, the difference between the subsampling and cross-fit LRT power increases.

\begin{figure}
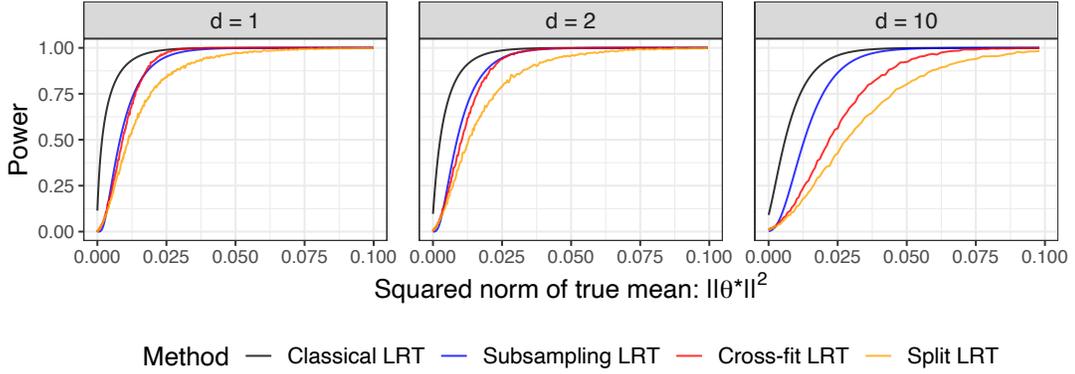

\figuresize{.7}
\figurebox{20pc}{25pc}{}[Figure6.pdf]
\caption{Estimated power of classical LRT (black), limiting subsampling LRT (blue), cross-fit LRT (red), and split LRT (orange). We are testing $H_0: \theta^* = 0$ versus $H_1: \theta^* \neq 0$ across varying true $\|\theta^*\|^2$. We use the standard normal cumulative distribution function approximation for the classical and subsampling LRT power calculations, and we use simulations to estimate the cross-fit and split LRT power.}
\label{fig:power_vs_norm_2}
\end{figure}

\section{Example: hypothesis testing a doughnut null set} \label{sec:example}

Instead of presenting a simulation which further confirms our theoretical findings, we instead present here an example of a nontrivial testing problem that appears to be beyond the current reach of our mathematical analysis. Below, a procedure based on universal inference can have higher power than a more standard intersection approach using the classical, exact confidence set. This motivates the need for further study of the pros and cons of such methods. 

Suppose we observe an independent and identically distributed sample $Y_1, \ldots, Y_n \sim N(\theta^*, I_d)$, and we wish to test
\[H_0: \|\theta^*\| \in [0.5, 1.0] \text{ versus } H_1: \|\theta^*\| \notin [0.5, 1.0].
\]
Then $\Theta_0 = \{\theta\in\R^d : \|\theta\| \in [0.5, 1.0]\}$ and $\Theta_1 = \{\theta\in\R^d : \|\theta\| \notin [0.5, 1.0]\}$. 
The nonconvex structure of $\Theta_0$ makes it unclear how to construct a valid test based on a limiting distribution. Nevertheless, we can use alternative methods, including universal inference tools, to construct valid hypothesis tests for $H_0: \|\theta^*\| \in [0.5, 1.0]$. We compare three approaches to this test. 

\smallskip
\textit{Approach 1: Intersect confidence set with $\Theta_0$.} $C_n^\LRT(\alpha) = \{\theta \in \Theta: \|\theta - \bar{Y}\|^2 \leq c_{\alpha, d} / n \}$ is a level $\alpha$ confidence set for $\theta^*$, where $c_{\alpha, d}$ is the upper $\alpha$ quantile of the $\chi^2_d$ distribution. Suppose we reject $H_0$ if and only if $C_n^\LRT(\alpha) \cap \Theta_0 = \emptyset$. We can see that this test has valid type I error control. Assume $\theta^* \in \Theta_0$. Then
\begin{align*}
\P_{\theta^*}\left\{ C_n^\LRT(\alpha) \cap \Theta_0 = \emptyset \right\} &\leq \P_{\theta^*}\left\{\theta^* \notin C_n^\LRT(\alpha) \cup \theta^* \notin \Theta_0 \right\} \\
&= \P_{\theta^*}\left\{ \theta^* \notin C_n^\LRT(\alpha) \right\} \\
&= \alpha.
\end{align*} 
To implement this test, we need to check whether the intersection $C_n^\LRT(\alpha) \cap \Theta_0$ is empty. First, we set $\hat{\theta}^{\text{proj}}$ to the projection of $\bar{Y}$ onto $\Theta_0$. That is, 
\[\hat{\theta}^{\text{proj}} = \begin{cases}
0.5 \: \bar{Y} / \|\bar{Y}\| &\text{if  } \|\bar{Y}\| < 0.5 \\
\bar{Y} &\text{if  } \|\bar{Y}\| \in [0.5, 1.0] \\
\bar{Y} / \|\bar{Y}\| &\text{if  } \|\bar{Y}\| > 1
\end{cases}.
\]
Now $\hat{\theta}^{\text{proj}}$ minimizes $\|\theta - \bar{Y}\|^2$ out of all $\theta\in\Theta_0$. So $C_n^\LRT(\alpha) \cap \Theta_0 = \emptyset$ if and only if \mbox{$\hat{\theta}^{\text{proj}} \notin C_n^{\LRT}(\alpha)$.}

\smallskip
\textit{Approach 2: Subsampled split LRT.} To implement the subsampled split LRT, we repeatedly split the observations into $\D_{0,b}$ and $\D_{1,b}$. Let $\hat{\theta}_{1,b}$ be any parameter estimated on the data in $\D_{1,b}$. Let $\hat{\theta}_{0,b}^\spl$ be the maximum likelihood estimate under $H_0: \|\theta^*\| \in [0.5, 1.0]$, estimated on the data in $\D_{0,b}$. Table~\ref{tab:subsample_split_ripr} presents the chosen expression for $\hat{\theta}_{1,b}$ and the maximum likelihood estimate of $\hat{\theta}_{0,b}^\spl$. The subsampled split LRT rejects $H_0$ if $B^{-1} \sum_{b=1}^B U_{n,b} \geq 1/\alpha$, where
\begin{align*}
U_{n,b} = \mathcal{L}_{0,b}(\hat{\theta}_{1,b}) \: / \: \mathcal{L}_{0,b}(\hat{\theta}_{0,b}^\spl) = \prod_{Y_i\in \D_{0,b}} \{p_{\hat{\theta}_{1,b}}(Y_i) \: / \: p_{\hat{\theta}_{0,b}^\spl}(Y_i) \}.
\end{align*}

\smallskip
\textit{Approach 3: Subsampled hybrid LRT.} As an alternative to the split LRT, \cite{wasserman2020universal} establish a test based on the reversed information projection (RIPR); also see \cite{grunwald2020safe}. We first define the RIPR, following Definition~4.2 of the PhD thesis by \cite{li1999estimation}. Let $Q$ be a distribution with density $q$, and let $\mathcal{P}_\Theta$ be a convex set of densities, or redefine it as its convex hull. Let $D_\KL(\cdot \: \| \: \cdot)$ be the Kullback-Leibler divergence. The RIPR of $q$ onto $\mathcal{P}_\Theta$ is a (sub-)density $p^*$ such that for arbitrary sequences $p_n$ in $\mathcal{P}_\Theta$, $D_\KL(q \: \| \: p_n) \to \inf_{\theta\in\Theta} D_\KL(q \: \| \: p_\theta)$ implies $\log(p_n) \to \log(p^*)$ in $L^1(Q)$. Lemma 4.1 of \cite{li1999estimation} proves that $p^*$ exists and is unique; further, $p^*$ satisfies $D_\KL(q \: \| \: p^*) = \inf_{\theta\in\Theta} D_\KL(q \: \| \: p_\theta)$, and if $Y \sim q$, then for all $\theta\in\Theta$, $\E_q\{p_\theta(Y) / p^*(Y)\} \leq 1$.

Using similar logic to Theorem~\ref{thm:valid}, \cite{wasserman2020universal} apply this property to construct a  split RIPR LRT. Let $\mathcal{P}_{\Theta_0}$ be the set of all densities in $H_0$ or its convex hull. Suppose $\hat{\theta}_1$ is an estimator constructed on $\D_1$. 
Let $p_0^*$ be the RIPR of $p_{\hat{\theta}_1}$ onto $\mathcal{P}_{\Theta_0}$. If the true $p_{\theta^*} \in \mathcal{P}_{\Theta_0}$, then $\E_{\theta^*}\{p_{\hat{\theta}_1}(Y) / p_0^*(Y)\} = \E_{\hat{\theta}_1}\{p_{\theta^*}(Y) / p_0^*(Y)\} \leq 1$. Then a level $\alpha$ hypothesis test rejects $H_0$ if $R_n \geq 1/\alpha$, where
\begin{equation*}
R_n = \prod_{Y_i \in \D_0} \{p_{\hat{\theta}_1}(Y_i) \: / \: p_0^*(Y_i)\}.
\end{equation*}
This test is valid because if $\theta^* \in \Theta_0$, then
$\P_{\theta^*}(R_n \geq 1/\alpha) \leq \alpha \E_{\theta^*}\{p_{\hat{\theta}_1}(Y) / p_0^*(Y) \} \leq \alpha.$
Furthermore, the RIPR test statistic will always exceed the split LRT statistic when the two tests use the same numerator, since the split LRT denominator maximizes the likelihood under $H_0$ on $\D_0$. Thus, the RIPR test will have higher power than the split LRT. More generally, one can project $p_{\hat{\theta}_1}^{|\D_0|}$ onto $\mathcal{P}_{\Theta_0}^{|\D_0|}$, but we omit this discussion for brevity.

In the doughnut test setting, we let $\mathcal{P}_{\Theta_0}$ be the set of all convex combinations of $N(\theta, I_d)$ densities such that $\|\theta\| \in [0.5, 1]$. To implement the subsampled hybrid LRT for this test, we also repeatedly split the observations into $\D_{0,b}$ and $\D_{1,b}$. Depending on the value of $\|\bar{Y}_{1,b}\|$, we take one of three approaches:

\begin{enumerate}
\item If $\|\bar{Y}_{1,b}\| < 0.5$, use the split LRT on the $b^{th}$ subsample. We define $\hat{\theta}_{1,b}$ and $\hat{\theta}_{0,b}^\spl$ as in Table~\ref{tab:subsample_split_ripr}, and the split LRT statistic is $U_{n,b} = \mathcal{L}_{0,b}(\hat{\theta}_{1,b}) / \mathcal{L}_{0,b}(\hat{\theta}_{0,b}^\spl)$. 
\item If $\|\bar{Y}_{1,b}\| \in [0.5, 1]$, set the $b^{th}$ subsample's test statistic to 1.
\item If $\|\bar{Y}_{1,b}\| > 1$, use the RIPR LRT on the $b^{th}$ subsample.  We define $\hat{\theta}_{1,b}$ and $\hat{\theta}_{0,b}^\RIPR$ as in Table~\ref{tab:subsample_split_ripr}, and the RIPR statistic is $R_{n,b} = \mathcal{L}_{0,b}(\hat{\theta}_{1,b}) / \mathcal{L}_{0,b}(\hat{\theta}_{0,b}^\RIPR)$. 
\end{enumerate}
Theorem~\ref{thm:hybrid_valid} defines a valid test based on this approach, as proved in Section~\ref{sec:supp_thm} of the supplement.

\begin{theorem} \label{thm:hybrid_valid}
In the doughnut null hypothesis test setting, assume the subsampled test statistics $U_{n,b}$ and $R_{n,b}$, $1\leq b\leq B$, as defined above. A valid level $\alpha$ test rejects $H_0$ when $$\frac{1}{B} \sum_{b=1}^B \left\{U_{n,b} \one(\|\bar{Y}_{1,b}\| < 0.5) + \one(\|\bar{Y}_{1,b}\| \in [0.5, 1]) + R_{n,b} \one(\|\bar{Y}_{1,b}\| > 1) \right\} \geq 1/\alpha.$$
\end{theorem}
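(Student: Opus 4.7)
The plan is to verify that the proposed combined statistic is an e-variable under any $\theta^*\in\Theta_0$ and then apply Markov's inequality, exactly mirroring the logic behind Theorem~\ref{thm:valid}. Write the proposed averaged statistic as $W_n = B^{-1}\sum_{b=1}^B S_b$, where
\[
S_b = U_{n,b}\,\one(\|\bar Y_{1,b}\|<0.5) + \one(\|\bar Y_{1,b}\|\in[0.5,1]) + R_{n,b}\,\one(\|\bar Y_{1,b}\|>1).
\]
I would first argue that it suffices to show $\E_{\theta^*}[S_b]\le 1$ for every $b$, since then $\E_{\theta^*}[W_n]\le 1$ and Markov's inequality gives $\P_{\theta^*}(W_n\ge 1/\alpha)\le \alpha$.

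To bound $\E_{\theta^*}[S_b]$, I condition on $\D_{1,b}$. Because $\|\bar Y_{1,b}\|$ is measurable with respect to $\D_{1,b}$, conditionally on $\D_{1,b}$ exactly one of the three indicators is nonzero, so $S_b$ reduces to a single one of $U_{n,b}$, $1$, or $R_{n,b}$. I then treat the three regions separately. In the middle region, $S_b\equiv 1$, so the conditional expectation equals $1$ trivially. In the inner region $\|\bar Y_{1,b}\|<0.5$, the statistic coincides with the standard split LRT statistic of Theorem~\ref{thm:valid}: since $\hat\theta_{0,b}^{\spl}$ is the MLE under $H_0$ on $\D_{0,b}$ and $\theta^*\in\Theta_0$, we have $\mathcal{L}_{0,b}(\hat\theta_{0,b}^{\spl})\ge \mathcal{L}_{0,b}(\theta^*)$, and therefore
\[
\E_{\theta^*}[U_{n,b}\mid \D_{1,b}] \le \E_{\theta^*}\!\left[\frac{\mathcal{L}_{0,b}(\hat\theta_{1,b})}{\mathcal{L}_{0,b}(\theta^*)}\,\Big|\,\D_{1,b}\right] = 1,
\]
by the standard likelihood identity applied to the independent sample $\D_{0,b}$.

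For the outer region $\|\bar Y_{1,b}\|>1$, I invoke the RIPR construction recalled just before the theorem. Given $\D_{1,b}$, let $p_0^*$ denote the RIPR of $p_{\hat\theta_{1,b}}$ onto the convex hull $\mathcal{P}_{\Theta_0}$. Since $\theta^*\in\Theta_0$ implies $p_{\theta^*}\in\mathcal{P}_{\Theta_0}$, the defining property of the RIPR (Lemma~4.1 of \cite{li1999estimation}) yields $\E_{\theta^*}[p_{\hat\theta_{1,b}}(Y)/p_0^*(Y)\mid \D_{1,b}]\le 1$ for a single $Y\sim P_{\theta^*}$. Iterating over the independent observations in $\D_{0,b}$ gives $\E_{\theta^*}[R_{n,b}\mid \D_{1,b}]\le 1$. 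Taking tower expectations and combining the three cases yields $\E_{\theta^*}[S_b]\le 1$, hence $\E_{\theta^*}[W_n]\le 1$, and Markov's inequality closes the argument.

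The main delicate point is the outer-region step: one must be careful that the ``RIPR denominator'' used in $R_{n,b}$ coincides with the Kullback--Leibler projection of $p_{\hat\theta_{1,b}}$ onto the convex hull $\mathcal{P}_{\Theta_0}$ (not merely onto $\{p_\theta:\|\theta\|\in[0.5,1]\}$), so that the e-variable property transfers across the product over $\D_{0,b}$. Once this is in place, everything else amounts to linearity of expectation and the case analysis on $\|\bar Y_{1,b}\|$; the validity of the data-driven choice of which statistic to use in each subsample is automatic because the choice depends only on $\D_{1,b}$, which is what we condition on.
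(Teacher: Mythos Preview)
Your approach is essentially the same as the paper's: condition on $\D_{1,b}$, split into three cases according to $\|\bar Y_{1,b}\|$, bound each conditional expectation by $1$, and apply Markov's inequality to the average. The inner-region and middle-region arguments match the paper exactly.

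Two points deserve comment. First, the ``delicate point'' you flag is precisely what the paper isolates as a separate lemma (Lemma~\ref{lemma:ripr}): it shows, via an explicit KL computation and a geometric argument about the angle between $\hat\theta_1/\|\hat\theta_1\|-\hat\theta_1$ and $\hat\theta_1/\|\hat\theta_1\|-\theta_k$, that when $\|\hat\theta_{1,b}\|>1$ the density $p_{\hat\theta_{1,b}/\|\hat\theta_{1,b}\|}$ really is the RIPR of $p_{\hat\theta_{1,b}}$ onto the convex hull $\mathcal{P}_{\Theta_0}$. You correctly identify that this is the nontrivial ingredient, but you do not carry it out; the paper does, and without it the proof is incomplete since $R_{n,b}$ is defined via the concrete denominator $\hat\theta_{0,b}^{\RIPR}=\hat\theta_{1,b}/\|\hat\theta_{1,b}\|$, not via an abstract projection.

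Second, your invocation of the RIPR property is stated slightly loosely. Li's lemma gives $\E_q[p_\theta(Y)/p_0^*(Y)]\le 1$ for $Y\sim q=p_{\hat\theta_{1,b}}$ and $p_\theta\in\mathcal{P}_{\Theta_0}$, whereas you need the expectation under $P_{\theta^*}$ of $p_{\hat\theta_{1,b}}(Y)/p_0^*(Y)$. The paper bridges this with the change-of-measure identity $\E_{\theta^*}\{p_{\hat\theta_1}(Y)/p_0^*(Y)\}=\E_{\hat\theta_1}\{p_{\theta^*}(Y)/p_0^*(Y)\}$, after which Li's inequality applies with $p_\theta=p_{\theta^*}\in\mathcal{P}_{\Theta_0}$. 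This step is easy but should be made explicit.
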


To justify the hybrid approach, recall that the RIPR test will have higher power than the split LRT when it is possible to implement the RIPR. Based on the construction of $\hat{\theta}_{1,b}$, if $\|\bar{Y}_{1,b}\| > 1$, then $\|\hat{\theta}_{1,b}\| > 1$. In this setting, the proof of Theorem~\ref{thm:hybrid_valid} shows that the density $p_{\theta}$, with $\theta = \hat{\theta}_{1,b} / \|\hat{\theta}_{1,b}\|$, is the RIPR of $\hat{\theta}_{1,b}$ onto $\mathcal{P}_{\Theta_0}$. On the other hand, it is unclear how to implement the RIPR when $\|\bar{Y}_{1,b}\| < 0.5$, in which case $\|\hat{\theta}_{1,b}\| < 0.5$. The hybrid approach allows us to use the RIPR when it is implementable, and it relies on the split LRT to provide a valid test when the RIPR is not implementable.

\begin{table}
\def~{\hphantom{0}}
\renewcommand{\arraystretch}{1.3}
\tbl{Requirements and choices for the numerator and denominator in a single subsample of the split LRT and RIPR LRT statistics}{%
\begin{tabular}{p{1.8cm}p{6.1cm}p{5.2cm}} \hline
Method & Split LRT & RIPR LRT \\ \hline
Restrictions on use & None & $\|\bar{Y}_1\| > 1$. This is a computational restriction. RIPR unknown for $\|\bar{Y}_1\| = \|\hat{\theta}_1\| < 0.5$. \\ \hline
Numerator & $p_{\hat{\theta}_{1}}$, where $\hat{\theta}_{1}$ is any parameter fit on $\D_1$. & $p_{\hat{\theta}_{1}}$, where $\hat{\theta}_{1}$ is any parameter fit on $\D_1$. \\
Fitted value & Choose $\hat{\theta}_{1} = \bar{Y}_1$. & Choose $\hat{\theta}_{1} = \bar{Y}_1$. \\ \hline
Denominator & $p_{\hat{\theta}_0}$, where $\hat{\theta}_0$ is the maximum likelihood estimate under $H_0$, constructed from $\D_0$. & $p_0^*$ is the RIPR of $p_{\hat{\theta}_1}$ onto $\mathcal{P}_{\Theta_0}$. \\ 
Fitted value & No choices.  & No choices. \\
& $\hat{\theta}_{0}^\spl = \begin{cases}
0.5 \left(\bar{Y}_0 / \|\bar{Y}_0\|\right) & : \|\bar{Y}_0\| < 0.5 \\
\bar{Y}_0 & : \|\bar{Y}_0\| \in [0.5, 1] \\
\bar{Y}_0 / \|\bar{Y}_0\| & : \|\bar{Y}_0\| > 1
\end{cases}$ & Since $\|\hat{\theta}_1\| > 1$, $p_0^* = p_\theta$, where $\theta = \hat{\theta}_0^{\RIPR} = \hat{\theta}_1 / \|\hat{\theta}_1\|$. \\ \hline
\end{tabular}}
\label{tab:subsample_split_ripr}
\end{table}

Figure~\ref{fig:power_interval} shows the simulated power of these three tests of $H_0: \|\theta^*\| \in [0.5, 1.0]$ versus $H_1: \|\theta^*\| \notin [0.5, 1.0]$. The intersection method and the subsampled hybrid LRT have the highest power. Interestingly, out of those two methods, the test with higher power varies across dimensions. When $d=2$ or $d=1000$, the simulated power of the subsampled hybrid LRT is less than or equal to the power of the standard intersection approach. At the intermediate dimensions of $d = 10$ and $d=100$, the simulated power of the subsampled hybrid LRT is greater than or equal to the power of the standard intersection approach for $\|\theta^*\| > 1$. The latter two cases show that even in the Gaussian setting, hypothesis tests based on a universal LRT can have higher power than tests based on the exact confidence set. When $\|\theta^*\| < 0.5$, the hybrid test and the split test have approximately the same power. When $\|\theta^*\| > 1$, the hybrid test has higher power than the split test. We see that the intersection method always has higher power than the subsampled split LRT. One might consider whether we could combine the RIPR with the intersection method instead of combining the RIPR with the split LRT. It is unclear how to construct such a valid test, though, since the RIPR approach uses both sample splitting and subsampling while the intersection approach uses neither.

\begin{figure}
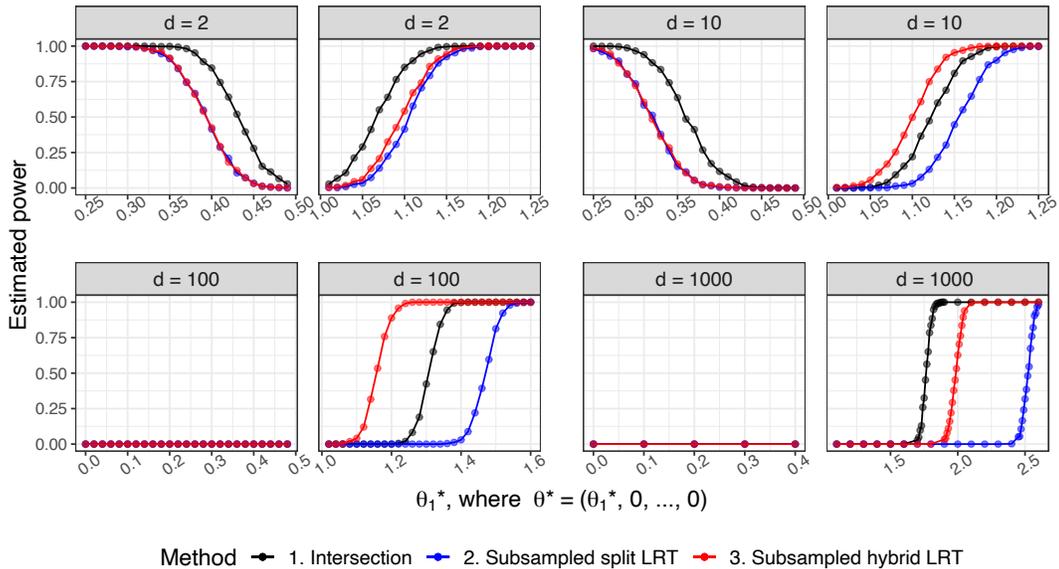

\figuresize{.62}
\figurebox{20pc}{25pc}{}[Figure7.pdf]
\caption{Estimated power of $H_0: \|\theta^*\| \in [0.5, 1.0]$ versus $H_1: \|\theta^*\| \notin [0.5, 1.0]$ using the intersection (black), subsampled split LRT (blue), and subsampled hybrid LRT (red) methods. In these simulations, we set $\theta^* = (\theta_1^*, 0, \ldots, 0)$. The x-axis is the value of $\theta_1^* = \|\theta^*\|$ for each simulation. For each dimension, the left panel satisfies $\|\theta^*\| < 0.5$, and the right panel satisfies $\|\theta^*\| > 1$. We set $\alpha = 0.10$ and $n=1000$, and we perform 1000 simulations at each value of $\|\theta^*\|$. We subsample $B = 100$ times.}
\label{fig:power_interval}
\end{figure}

We can provide a partial theoretical justification for Figure~\ref{fig:power_interval}. For one, it is possible to derive an exact formula for the power of the intersection approach. Using the fact that $n\|\bar{Y}\|^2$ follows a non-central $\chi^2$ distribution, we can write the power of the intersection method in terms of the non-central $\chi^2$ cumulative distribution function. When $d=100$ or $d=1000$, the hybrid method has no power at $\|\theta^*\| = 0$, though we would expect this case to have the highest power out of $\|\theta^*\| < 0.5$. At $d=100$ and $\|\theta^*\| = 0$, the hybrid method satisfies $\|\bar{Y}_{1,b}\| < 0.5$ in most simulations, but the test statistic is too small to reject $H_0$. At $d = 1000$ and $\|\theta^*\| = 0$, $(n/2)\|\bar{Y}_{1,b}\|^2 \sim \chi^2_d$ is approximately $d$ \citep[Lemma 2]{dasgupta2007probabilistic}. Hence $\|\bar{Y}_{1,b}\| \approx \sqrt{2}$, which means the hybrid approach selects the ``incorrect'' case of $\|\bar{Y}_{1,b}\| >1$. This test also has approximately zero power. See Section~\ref{sec:supp_intersect_test} of the supplement for more details. In addition, for any given subsample, the hybrid LRT power is provably greater than or equal to the split LRT power. This holds because the RIPR test statistic is always larger than the split test statistic when both tests use the same numerator \citep{wasserman2020universal}. The theoretical justification behind the relative power of the intersection and subsampled hybrid methods remains an open question, since the power of the latter method is not easily tractable.

\section{Discussion}

The recent development of the universal LRT provides a hypothesis testing framework that is valid in finite samples and does not rely on regularity conditions. We have explored the performance of several universal LRT variants in the simple but fundamental case of testing for the mean $\theta^*$ when data arise from a $N(\theta^*, I_d)$ distribution. We have seen that even in high dimensions or for very small $\alpha$, the ratio of the radius of the limiting subsampling universal LRT confidence set over the radius of an exact confidence set is less than 2. While the universal method tests the likelihood ratio against a dimension-independent cutoff, the universal LRT can still exhibit reasonable performance in high dimensions.  

Future research directions may focus on settings where hypothesis tests were previously intractable or only asymptotically valid. Researchers can apply the universal LRT in any setting where it is possible to write a likelihood ratio or, more generally, upper bound the maximum likelihood under the null hypothesis. This allows for the development of valid tests for the number of components in mixture models and for log-concavity of the underlying density. Additionally, we have shown proof of concept that the universal LRT can be more powerful than existing valid tests. In the Gaussian setting, this phenomenon may apply more generally across other tests of non-convex null parameter spaces. \cite{wasserman2020universal} also describe how the universal LRT can be used to test independence versus conditional independence in a Gaussian setting. Recent work by \cite{guo2020} also provides a valid test in that setting, but the relative power of these two approaches is currently unknown.

\section*{Acknowledgement}
RD is currently employed at Novartis Pharmaceuticals Corporation. This work was primarily conducted while RD was at Carnegie Mellon University. RD's research was supported by the National Science Foundation Graduate Research Fellowship Program under Grant Nos.\ DGE 1252522 and DGE 1745016. AR's research is supported by the Adobe Faculty Research Award, an ARL Large Grant, and the National Science Foundation under Grant Nos.\ DMS 2053804, DMS 1916320, and DMS (CAREER)    1945266. Any opinions, findings, and conclusions or recommendations expressed in this material are those of the authors and do not necessarily reflect the views of the National Science Foundation. This work used the Extreme Science and Engineering Discovery Environment (XSEDE) \citep{xsede}, which is supported by National Science Foundation grant number ACI-1548562. Specifically, it used the Bridges system \citep{xsedebridges}, which is supported by NSF award number ACI-1445606, at the Pittsburgh Supercomputing Center (PSC). This work made extensive use of the \texttt{R} statistical software \citep{Rcore}, as well as the \texttt{cowplot} \citep{cowplot}, \texttt{data.table} \citep{datatable}, \texttt{ggConvexHull} \citep{ggconvexhull}, \texttt{gtable} \citep{gtable}, \texttt{latex2exp} \citep{latex2exp}, \texttt{MASS} \citep{mass}, \texttt{progress} \citep{progress}, \texttt{Rcpp} \citep{Rcpp1, Rcpp2, Rcpp3}, \texttt{splancs} \citep{splancs}, \texttt{tidyverse} \citep{tidyverse}, and \texttt{tripack} \citep{tripack} packages.

\bibliographystyle{biometrika}
\bibliography{paper}

\renewcommand{\thesection}{S\arabic{section}} 
\renewcommand{\thefigure}{S\arabic{figure}}
\renewcommand{\theequation}{S\arabic{equation}}
\setcounter{section}{0}
\setcounter{figure}{0}
\setcounter{theorem}{0}
\setcounter{equation}{0}

\newpage
\section*{\textbf{Appendix}}

\section{Proofs of Theorems} \label{sec:supp_thm}

\begin{theorem} 
$C_n^{\spl}(\alpha)$ is a valid $100(1-\alpha)\%$ confidence set for $\theta^*$. As a consequence, and equivalently, when testing an arbitrary composite null $H_0: \theta^*\in\Theta_0$ versus $H_1: \theta^*\in\Theta\:\backslash\:\Theta_0$, rejecting $H_0$ when $\Theta_0 \cap C_n^{\spl}(\alpha) = \emptyset$ provides a valid level $\alpha$ hypothesis test. The latter rule reduces to rejecting if $T_n(\hat \theta_0) \geq 1/\alpha$, where $\hat \theta_0 = \arg\max_{\theta \in \Theta_0} \mathcal{L}_0(\theta)$ is the maximum likelihood estimate under $H_0$.
\end{theorem} 

\begin{proof}
This result is due to \cite{wasserman2020universal}. To prove this fact, we show that $E_{\theta^*}\left[T_n(\theta^*) \mid \D_1\right] \leq 1$. First, we use only the data in $\D_1$ to fit a parameter $\hat{\theta}_1$. Let $\mathcal{M}(\theta) = support(P_\theta)$. Let $\mathcal{M}(\theta)^{|\D_0|}$ be the Cartesian product of $|\D_0|$ sets $\mathcal{M}(\theta)$, i.e., the support of $|\D_0|$ iid observations from $P_\theta$. We see
\begin{align*}
\E_{\theta^*}&\left[T_n(\theta^*) \mid \D_1\right] = \E_{\theta^*}\left[\frac{\mathcal{L}_0(\hat{\theta}_1)}{\mathcal{L}_0(\theta^*)} \: \Bigg| \: \D_1 \right] = \int_{\mathcal{M}(\theta^*)^{|\D_0|}} \frac{\prod_{y_i \in \D_0} p_{\hat{\theta}_1}(y_i)}{\prod_{y_i \in \D_0} p_{\theta^*}(y_i)} \prod_{y_i \in \D_0} p_{\theta^*}(y_i) dy_i \\
&= \int_{\mathcal{M}(\theta^*)^{|\D_0|}} \prod_{y_i \in \D_0} p_{\hat{\theta}_1}(y_i) dy_i \leq \int_{\mathcal{M}(\hat{\theta}_1)^{|\D_0|}} \prod_{y_i \in \D_0} p_{\hat{\theta}_1}(y_i) dy_i \\
&\overset{\text{iid}}{=} \prod_{y_i \in \D_0} \left[\int_{\mathcal{M}(\hat{\theta}_1)} p_{\hat{\theta}_1}(y_i) dy_i \right] = 1.
\end{align*}
Applying Markov's inequality and the above fact,
\begin{equation*}
\P_{\theta^*}\left(\theta^* \notin C_n^{\spl}(\alpha)\right) = \P_{\theta^*}\left(T_n(\theta^*) \geq 1/\alpha \right) \leq \alpha \E_{\theta^*}[T_n(\theta^*)] = \alpha \E_{\theta^*} \left[ \E_{\theta^*}\left[ T_n(\theta^*) \mid \D_1\right] \right] \leq \alpha.
\end{equation*}
This shows that $\theta^* \in C_n^{\spl}(\alpha)$ with probability at least $1-\alpha$. Alternatively, suppose we want to test $H_0: \theta^*\in\Theta_0$ versus $H_1: \theta^*\in\Theta\:\backslash\:\Theta_0$. Suppose we reject $H_0$ when $\Theta_0 \cap C_n^{\spl}(\alpha) = \emptyset$. Under $H_0$,
\begin{equation*}
\P_{\theta^*}\left\{\Theta_0 \cap C_n^{\spl}(\alpha) = \emptyset \right\} \leq \P_{\theta^*}\left\{\theta^* \notin \Theta_0 \cap C_n^{\spl}(\alpha)\right\} = \P_{\theta^*}\left\{\theta^* \notin C_n^{\spl}(\alpha)\right\} \leq \alpha. 
\end{equation*}
Hence, rejecting $H_0$ when $\Theta_0 \cap C_n^{\spl}(\alpha) = \emptyset$ provides a valid level $\alpha$ hypothesis test. 
\end{proof}

Before proving Theorem~\ref{thm:limiting_ratio}, we establish Lemma~\ref{lemma:normal_limit} and Lemma~\ref{lemma:expectation}. We draw heavily on finite population central limit theorem results from \cite{hajek1960limiting} and \cite{li2017general}. Lemma~\ref{lemma:normal_limit} combines key results from these two papers and adapts them to our setting.

\begin{lemma} \label{lemma:normal_limit}
Let $(\D_n)_{n\in 2\mathbb{N}}$ be a sequence of datasets, where $\D_n = \{Y_{n1}, \ldots, Y_{nn}\}$ and each $Y_{ni}$ is an independent observation from $N(\theta^*, I_d)$. Let $\D_{0,n}$ be a sample of $n/2$ observations from $\D_n$. Define $\bar{Y}_n = \frac{1}{n} \sum_{i=1}^n Y_{ni}$ and $\bar{Y}_{0,n} = \frac{2}{n} \sum_{Y_{ni} \in \D_{0,n}} Y_{ni}$. As $n\to\infty$, $\sqrt{n}(\bar{Y}_{0,n} - \bar{Y}_n)$ converges in distribution to $N(0, I_d)$ with probability 1.
\end{lemma}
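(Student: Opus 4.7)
The plan is to condition on the dataset $\D_n$ and treat $\bar{Y}_{0,n}$ as the sample mean of a simple random sample of size $n/2$ drawn without replacement from the finite population $\{Y_{n1},\dots,Y_{nn}\}$. The claim is then that, for almost every realization of the data, the conditional distribution of $\sqrt{n}(\bar{Y}_{0,n}-\bar{Y}_n)$ converges weakly to $N(0,I_d)$. This reframes the problem as an application of the finite-population CLT of \cite{hajek1960limiting} (in its multivariate form from \cite{li2017general}), with the extra step of verifying its hypotheses almost surely over the generating Gaussian sequence.

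First, I would record the conditional moments. Standard SRS-without-replacement formulas give $\E\{\bar{Y}_{0,n}\mid\D_n\} = \bar{Y}_n$ and
\[
\text{Var}\{\sqrt{n}(\bar{Y}_{0,n}-\bar{Y}_n)\mid \D_n\} = \frac{1}{n-1}\sum_{i=1}^n (Y_{ni}-\bar{Y}_n)(Y_{ni}-\bar{Y}_n)^\T,
\]
which converges to $I_d$ almost surely by the strong law of large numbers. So the target covariance is correct in the limit.

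Second, to get weak convergence of the conditional laws, I would pass to one-dimensional projections via the Cramér--Wold device: for each fixed $t\in\R^d$, apply Hájek's CLT to the scalar population $\{t^\T Y_{ni}\}_{i=1}^n$ with sample size $n/2$. The key hypothesis to check is the Lindeberg-type condition
\[
\frac{1}{n}\sum_{i=1}^n \bigl(t^\T(Y_{ni}-\bar{Y}_n)\bigr)^2\,\one\bigl\{|t^\T(Y_{ni}-\bar{Y}_n)|>\varepsilon\sqrt{n}\bigr\}\longrightarrow 0 \quad\text{a.s.}
\]
This holds for any $\varepsilon>0$ because the summands are bounded above by $\max_i (t^\T(Y_{ni}-\bar{Y}_n))^2/n$, and for iid Gaussian data a standard truncation plus Borel--Cantelli argument shows $\max_i \|Y_{ni}-\bar{Y}_n\|^2/n \to 0$ a.s. So, on a set of probability one, Hájek's conditions are satisfied for every rational $t$; combined with the a.s.\ convergence of the conditional covariance, this gives $\sqrt{n}\,t^\T(\bar{Y}_{0,n}-\bar{Y}_n)\mid\D_n \Rightarrow N(0,\|t\|^2)$ for every $t$, and Cramér--Wold delivers the multivariate conclusion.

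The main obstacle is bookkeeping around the mode of convergence: I need the Lindeberg condition and the covariance convergence to hold simultaneously on a single event of probability one that does not depend on $t$, so that the conditional weak convergence is well-defined. This is handled by choosing a countable dense set of projection directions, taking the countable intersection of the corresponding a.s.\ events, and then using continuity of characteristic functions to extend to all $t\in\R^d$. Everything else — the moment computation, the SLLN step, and the appeal to the cited finite-population CLT — is routine.
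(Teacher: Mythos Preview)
Your proposal is correct and follows essentially the same route as the paper: both apply H\'ajek's finite-population CLT to the simple random sample $\D_{0,n}$ drawn from $\D_n$, verify the Lindeberg-type condition almost surely via $\max_i \|Y_{ni}-\bar Y_n\|^2/n \to 0$ (the paper cites \cite{li2017general} for this, you sketch Borel--Cantelli), and lift to $d>1$ by Cram\'er--Wold. If anything, you are slightly more careful than the paper in noting that the almost-sure event must be chosen uniformly over projection directions, which you handle by a countable-dense-set argument.
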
 

\begin{proof}
We show a highlight of the proof of Lemma~\ref{lemma:normal_limit}, in five steps. 

\textbf{Step 1 \citep{hajek1960limiting}}: Show that simple random sampling and Poisson sampling approaches produce the same limiting distributions.

In the notation of \cite{hajek1960limiting}, suppose we have an infinite sequence of simple random sample experiments indexed by $\nu$. Experiment $\nu$ draws a simple random sample of size $n_\nu$ from a population of size $N_\nu$ given by $\{Y_{\nu 1}, \ldots, Y_{\nu N_\nu}\}$. We assume that $n_\nu \to \infty$ and $N_\nu - n_\nu \to \infty$. In the simple random sampling set-up, a subset $s_k$ of indices $\{1, \ldots, N_\nu\}$ is chosen with probability
\[P(s_k) = \begin{cases} 
      \binom{N_\nu}{n_\nu}^{-1} & \text{if } |s_k| = n_\nu  \\
      0 & \text{otherwise}
   \end{cases}.
\]
In contrast, in a Poisson sampling approach with mean sample size $n_\nu$, a subset $s_k$ is chosen with probability 
$$P(s_k) = \left(\frac{n_\nu}{N_\nu}\right)^k \left(1 - \frac{n_\nu}{N_\nu}\right)^{N_\nu - k}.$$

We say that each experiment produces a simple random sample (SRS) $s_n$ and a Poisson sample $s_k$ such that $s_n \subseteq s_k$ or $s_k \subseteq s_n$. To construct these samples, we take two steps:
\begin{enumerate}
\item[(i)] Draw $k \sim \text{Binom}(N_\nu, n_\nu / N_\nu)$.
\item[(ii)] If $k = n$, choose SRS $s_n$, and set $s_k = s_n$. \\
If $k > n$, choose SRS $s_k$, and then let $s_n$ be an SRS of size $n$ from $s_k$. \\
If $k < n$, choose SRS $s_n$, and then let $s_k$ be an SRS of size $k$ from $s_n$.
\end{enumerate}
Using the two samples, we define two random variables:
$$\eta_\nu = \sum_{i\in s_n} (Y_{\nu i} - \bar{Y}_\nu) \qquad \text{and} \qquad \eta_\nu^* = \sum_{i\in s_k} (Y_{\nu i} - \bar{Y}_\nu).$$
We can show that the variance of $\eta_\nu^*$ is $$D\eta_\nu^* = \V(\eta_\nu^*) = \frac{n_\nu}{N_\nu} \left(1 - \frac{n_\nu}{N_\nu}\right) \sum_{i=1}^{N_\nu} (Y_{\nu i} - \bar{Y}_\nu)^2~.$$
Under the assumption that $n_\nu \to \infty$ and $N - n_\nu \to \infty$, we can then show that 
\begin{equation} \label{eq:same_limit_dist}
\lim_{\nu\to\infty} \frac{\E[(\eta_\nu - \eta_\nu^*)^2]}{D\eta_\nu^*} = 0.
\end{equation}
Remark 2.1 of \cite{hajek1960limiting} states that
(\ref{eq:same_limit_dist}) implies that the limiting distributions of $\eta_\nu / \sqrt{D\eta_\nu^*}$ and $\eta^*_\nu / \sqrt{D\eta_\nu^*}$ are the same if they exist, and they exist under the same conditions. To see this, we use Chebyshev's inequality. For $\epsilon > 0$,
$$\P\left(\left|\frac{\eta_\nu}{\sqrt{D\eta_\nu^*}} - \frac{\eta_\nu^*}{\sqrt{D\eta_\nu^*}} \right| > \epsilon \right) \leq \frac{1}{\epsilon^2} \V\left(\frac{\eta_\nu - \eta^*_\nu}{\sqrt{D\eta_\nu^*}}\right) = \frac{1}{\epsilon^2} \frac{\E[(\eta_\nu - \eta_\nu^*)^2]}{D\eta_\nu^*} \overset{\nu\to\infty}{\to} 0.$$
This means that $\left|\eta_\nu / \sqrt{D\eta_\nu^*} - \eta_\nu^* / \sqrt{D\eta_\nu^*} \right| \overset{p}{\to} 0.$ Under this condition, for any distribution $W$, $\eta_\nu / \sqrt{D\eta_\nu^*} \rightsquigarrow W$ if and only if $\eta_\nu^* / \sqrt{D\eta_\nu^*} \rightsquigarrow W$.

Since $\eta_\nu^*$ is a sum of independent random variables, it will be easier to work with $\eta^*_\nu / D\eta_\nu^*$ than to work with $\eta_\nu / D\eta_\nu^*$. 

\textbf{Step 2 \citep{hajek1960limiting}}: Find conditions such that $\eta_\nu / \sqrt{D\eta_\nu^*} \rightsquigarrow N(0, 1)$. (We can think of $\eta_\nu$ as $(n/2)(\bar{Y}_{0,n} - \bar{Y}_n)$ and $D\eta_\nu^*$ as $\V(\sum_{i=1}^n B_i(Y_{ni} - \bar{Y}_n))$ for $B_i \overset{\text{iid}}{\sim} \text{Bernoulli}(1/2)$.)

Theorem 3.1 in \cite{hajek1960limiting} is the key result for asymptotic normality. We present an intermediate result from the proof of Theorem 3.1.

{\em
Let $\xi_\nu = \sum_{i\in s_{n,\nu}} Y_{\nu, i}$. (So $\eta_\nu = \xi_\nu - n_\nu \bar{Y}_\nu$.) Let $D\xi_\nu$ be the variance of $\xi_\nu$. Let $S_{\nu\tau}$ be the subset of $S_\nu = \{1, \ldots, N_\nu\}$ on which the inequality $$|Y_{\nu i} - \bar{Y}_\nu| > \tau \sqrt{D\xi_\nu}$$ holds. Suppose that $n_\nu \to \infty$ and $N_\nu - n_\nu \to \infty$. If 
\begin{equation} \label{eq:asymp_normal_cond}
\lim_{\nu\to\infty} \frac{\sum_{i\in S_{\nu\tau}} (Y_{\nu i} - \bar{Y}_\nu)^2}{\sum_{i\in S_{\nu}} (Y_{\nu i} - \bar{Y}_\nu)^2} = 0 \quad \text{for any $\tau > 0$},
\end{equation}
then $\eta_\nu / \sqrt{D\eta_\nu^*} \rightsquigarrow N(0, 1)$.
}

We will show that $\eta_\nu^* / \sqrt{D\eta_\nu^*} \rightsquigarrow N(0, 1)$, and then we can appeal to Step 1's result. $\eta_\nu^*$ is the centered sum of the Poisson sampling terms. We can write $\eta_\nu^*$ as $$\eta_\nu^* = \sum_{i=1}^{N_\nu} \zeta_{\nu i}, \: \text{where } \: \zeta_{\nu i} =
     \begin{cases}
       Y_{\nu i} - \bar{Y}_\nu & \text{with probabilty } n_\nu / N_\nu \\
       0 & \text{with probabilty } 1 - n_\nu / N_\nu
     \end{cases}.$$
In this setting, Lindeberg's condition for $\eta_\nu^* / \sqrt{D\eta_\nu^*} \rightsquigarrow N(0, 1)$ is for all $\tau > 0$,
$$\lim_{\nu \to \infty} \frac{1}{D\eta_\nu^*} \sum_{i = 1}^{N_\nu} \E\left[(\zeta_{\nu i} - \E[\zeta_{\nu i}])^2 \cdot \one\left(|\zeta_{\nu i} - \E[\zeta_{\nu i}| > \tau\sqrt{D\eta_\nu^*} \right) \right] = 0.$$ We can show that (\ref{eq:asymp_normal_cond}) implies that the Lindeberg condition is satisfied. Since Step 1 implies that the limiting distribution of $\eta_\nu / \sqrt{D\eta_\nu^*}$ must be the same as the limiting distribution of $\eta_\nu^* / \sqrt{D\eta_\nu^*}$, we conclude that $\eta_\nu / \sqrt{D\eta_\nu^*} \rightsquigarrow N(0, 1).$

\textbf{Step 3:} If $d=1$, show that $\eta_\nu / \sqrt{D\eta_\nu^*} \rightsquigarrow N(0, 1)$ implies $\sqrt{n}(\bar{Y}_{0,n} - \bar{Y}_n) \rightsquigarrow N(0, 1)$.

This is mostly a matter of adapting Step 2's result to our setting. When $n_\nu / N_\nu = 1/2$, $\eta_\nu$ is the same random variable as $(n/2)(\bar{Y}_{0,n} - \bar{Y}_n)$. Using the formula for $D\eta_\nu^*$,
$$\frac{\sqrt{n}(\bar{Y}_{0,n} - \bar{Y}_n)}{\sqrt{\frac{1}{n} \sum_{i=1}^n (Y_{ni} - \bar{Y}_n)^2}} = \frac{(n/2)(\bar{Y}_{0,n} - \bar{Y}_n)}{\sqrt{\frac{1}{4} \sum_{i=1}^n (Y_{ni} - \bar{Y}_n)^2}} \overset{d}{=} \frac{\eta_\nu}{\sqrt{D\eta_\nu^*}} \rightsquigarrow N(0,1).$$
In addition, $\sqrt{\frac{1}{n} \sum_{i=1}^n (Y_{ni} - \bar{Y}_n)^2} / \sqrt{\V(Y_{ni})} \overset{p}{\to} 1.$ By Slutsky's Theorem, $\sqrt{n}(\bar{Y}_{0,n} - \bar{Y}_n) \rightsquigarrow N(0, 1)$.

\textbf{Step 4 \citep{li2017general}}: If $Y_{n1}, \ldots, Y_{nn} \sim N(\theta^*, 1)$, show that the condition of Step 2 is satisfied with probability 1.

These results come from page 2 of the appendix of \cite{li2017general}. The authors show that if the $Y_{ni}$s are iid draws from a superpopulation with $2 + \epsilon$ ($\epsilon > 0$) absolute moments and nonzero variance, then $(1/n) \max_{1\leq i \leq n} (Y_{ni} - \bar{Y}_n)^2 \equiv  m_n / n \to 0$ with probability 1. Furthermore, they show that $m_n / n \to 0$ implies their condition (A2), which is a rewriting of \cite{hajek1960limiting}'s condition (\ref{eq:asymp_normal_cond}).

Since $N(\theta^*, 1)$ satisfies the superpopulation conditions, condition (\ref{eq:asymp_normal_cond}) is satisfied with probability 1. Then following Steps 2 and 3, $\sqrt{n}(\bar{Y}_{0,n} - \bar{Y}_n) \rightsquigarrow N(0,1)$. 

\textbf{Step 5 \citep{hajek1960limiting}}: Extend results to $d > 1$.

In $d$ dimensions, suppose $Y_{n1}, \ldots, Y_{nn} \sim N(\theta^*, I_d)$. Remark 3.2 of \cite{hajek1960limiting} notes that we can user the Cram\'er-Wold device to extend the results to the multivariate case. Let $Z = (Z^{(1)}, \ldots, Z^{(d)})$ represent the $N(0, I_d)$ distribution. Then for each component, $Z^{(j)} \sim N(0, 1)$. By the Cram\'er-Wold device, we can say that $\sqrt{n}(\bar{Y}_{0,n} - \bar{Y}_n) \rightsquigarrow Z$ if and only if for any $\lambda \in \R^d$, $\sum_{j=1}^d \lambda^{(j)} \sqrt{n} (\bar{Y}_{0,n}^{(j)} - \bar{Y}_n^{(j)}) \rightsquigarrow \sum_{j=1}^d \lambda^{(j)} Z^{(j)}.$

For any dimension $j$, we can think of $Y_{n1}^{(j)}, \ldots, Y_{nn}^{(j)}$ as draws from a $N(\theta^{*(j)}, 1)$ superpopulation. So the superpopulation conditions from Step 4 are satisfied, which means $\sqrt{n}(\bar{Y}_{0,n}^{(j)} - \bar{Y}_n^{(j)}) \rightsquigarrow Z^{(j)}$. We conclude that $\sqrt{n}(\bar{Y}_{0,n} - \bar{Y}_n) \rightsquigarrow N(0, I_d)$.
\end{proof}

\begin{lemma} \label{lemma:expectation}
Assume $(\D_n)_{n\in 2\mathbb{N}}$ is a sequence of data sets such that $\D_n = \{Y_{n1}, Y_{n2}, \ldots, Y_{nn}\}$ with observations $Y_{nj}\overset{\text{iid}}{\sim} N(\theta^*, I_d)$. Let $\D_{0,n}$ be a sample of $n/2$ observations from $\D_n$. Define $\bar{Y}_n = (1/n)\sum_{i=1}^n Y_{ni}$ and $\bar{Y}_{0,n} = (2/n) \sum_{Y_{ni} \in \D_{0,n}} Y_{ni}$. Let $c>0$, and let $(\theta_n)$ be a sequence that satisfies $\|\bar{Y}_n - \theta_n\| \leq c/\sqrt{n}$ for all $n$. Define $X_n \equiv \sqrt{n}(\bar{Y}_{0,n} - \bar{Y}_n)$. Let $Z$ denote a $N(0, I_d)$ random variable. Then
{\footnotesize
\begin{align*}
\E\left[\exp\left(-\frac{3}{4} X_n^T X_n + \frac{\sqrt{n}}{2} X_n^T\left(\bar{Y}_n - \theta_n \right) \right) \mid \D_n \right] - \E\left[\exp\left(-\frac{3}{4} Z^T Z + \frac{\sqrt{n}}{2} Z^T\left(\bar{Y}_n - \theta_n \right) \right) \mid \D_n \right] = o_P(1).
\end{align*}
}
\end{lemma}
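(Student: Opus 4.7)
The strategy is a uniform Portmanteau argument that upgrades the conditional weak convergence $X_n \rightsquigarrow N(0, I_d)$ from Lemma~\ref{lemma:normal_limit} to convergence of expectations against a family of exponential integrands indexed by $n$.

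\emph{Setup and reduction.} Write $a_n := (\sqrt{n}/2)(\bar{Y}_n - \theta_n)$, which is $\D_n$-measurable and by hypothesis satisfies $\|a_n\| \leq c/2$ for every $n$. For $a \in \R^d$ define $\phi_a(x) := \exp\{-\tfrac{3}{4}\|x\|^2 + x^T a\}$ and $\Phi(a) := \E\,\phi_a(Z)$. Since $Z$ is independent of $\D_n$, $\E[\phi_{a_n}(Z) \mid \D_n] = \Phi(a_n)$, so the lemma reduces to showing
\begin{equation*}
\E[\phi_{a_n}(X_n) \mid \D_n] - \Phi(a_n) = o_P(1).
\end{equation*}

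\emph{Uniform control on the integrand family.} Let $B := \{a \in \R^d : \|a\| \leq c/2\}$. Completing the square in $\phi_a$'s exponent yields $\sup_x \phi_a(x) = \exp(\|a\|^2/3) \leq \exp(c^2/12)$, so $\{\phi_a\}_{a \in B}$ is uniformly bounded. Applying the mean value theorem to $e^{(\cdot)}$ on $[-\|x\|c/2, \|x\|c/2]$ yields
\begin{equation*}
|\phi_a(x) - \phi_b(x)| \leq \exp\bigl\{-\tfrac{3}{4}\|x\|^2 + \tfrac{c}{2}\|x\|\bigr\}\,\|x\|\,\|a - b\|,
\end{equation*}
and the $x$-dependent factor is bounded by a constant $C' = C'(c)$. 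Hence $a \mapsto \phi_a$ is Lipschitz from $B$ into $(C_b(\R^d), \|\cdot\|_\infty)$, and by compactness of $B$, for every $\epsilon > 0$ there is a finite set $\{a_1, \ldots, a_N\} \subset B$ with $\min_i \|\phi_a - \phi_{a_i}\|_\infty \leq \epsilon$ for every $a \in B$.

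\emph{Uniform Portmanteau and conclusion.} Lemma~\ref{lemma:normal_limit} furnishes an event $\Omega^*$ with $\P(\Omega^*) = 1$ on which $X_n \rightsquigarrow N(0, I_d)$ conditionally on $\D_n$. Fix $\omega \in \Omega^*$. Each $\phi_{a_i}$ is bounded and continuous, so the Portmanteau theorem gives $\E[\phi_{a_i}(X_n) \mid \D_n](\omega) \to \Phi(a_i)$ for $i = 1, \ldots, N$, and thus $\max_i |\E[\phi_{a_i}(X_n) \mid \D_n](\omega) - \Phi(a_i)| < \epsilon$ for all large $n$. For an arbitrary $a \in B$, choose the nearest $a_i$ in the net and triangulate:
\begin{equation*}
|\E[\phi_a(X_n) \mid \D_n](\omega) - \Phi(a)| \leq 2\|\phi_a - \phi_{a_i}\|_\infty + |\E[\phi_{a_i}(X_n) \mid \D_n](\omega) - \Phi(a_i)| \leq 3\epsilon.
\end{equation*}
Taking $\sup_{a \in B}$ and specializing at the $\D_n$-measurable $a_n(\omega) \in B$ yields $|\E[\phi_{a_n}(X_n) \mid \D_n](\omega) - \Phi(a_n(\omega))| \to 0$ on $\Omega^*$, and hence in probability.

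\emph{Main obstacle.} The crux is that the integrand depends on $n$ through the random but bounded coefficient $a_n$, so a single application of Portmanteau to a fixed bounded continuous function is not enough. The uniform Portmanteau step resolves this by leveraging uniform boundedness and sup-norm Lipschitzness of $a \mapsto \phi_a$, together with compactness of $B$, to reduce to a finite net of anchor functions to which the standard Portmanteau theorem applies.
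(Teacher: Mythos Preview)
Your proof is correct and follows the same overall strategy as the paper: upgrade the conditional weak convergence of Lemma~\ref{lemma:normal_limit} to convergence of expectations against an $n$-indexed family of bounded continuous integrands by exploiting compactness of the parameter that indexes the family, reducing to a finite collection of fixed test functions to which the ordinary Portmanteau theorem applies.

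The execution differs in a useful way. The paper parameterizes by $v_n\in\mathcal{B}_1(0)$, then truncates the $x$-variable to a large compact rectangle $I$ with $\P(Z\notin I)<\epsilon$, invokes uniform continuity of $f$ on the product $I\times\mathcal{B}_1(0)$, partitions that product into finitely many cells, and controls the tail $\{X_n\notin I\}$ separately via the uniform bound $\exp(c^2/12)$. You instead establish the stronger fact that $a\mapsto\phi_a$ is Lipschitz from $B$ into $(C_b(\R^d),\|\cdot\|_\infty)$, which lets you take the $\epsilon$-net in the parameter $a$ alone and dispense entirely with the $x$-space truncation and tail bookkeeping. This is a genuine simplification: the quadratic decay $-\tfrac{3}{4}\|x\|^2$ absorbs the linear growth of $\|x\|e^{c\|x\|/2}$ globally, so no compact exhaustion in $x$ is needed. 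Both arguments ultimately rest on the same two ingredients---the uniform bound $\sup_{a\in B}\|\phi_a\|_\infty\le e^{c^2/12}$ and the almost-sure conditional CLT from Lemma~\ref{lemma:normal_limit}---but yours packages them more cleanly.
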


\begin{proof}
Since $(\theta_n)$ is chosen such that $\|\bar{Y}_n - \theta_n\| \leq c/\sqrt{n}$, we can re-write $\theta_n = \bar{Y}_n + (c/\sqrt{n}) v_n$, where $v_n \in \R^d$ satisfies $\|v_n\| \leq 1$ for all $n$.

Define a function $f$ by $$f(x_n, v_n) \equiv \exp\left(-\frac{3}{4} x_n^T x_n - \frac{c}{2} x_n^T v_n \right).$$ $f$ is clearly a continuous function. We can also show that $f$ is bounded. Define $$g(x_n, v_n) \equiv -\frac{3}{4} x_n^T x_n - \frac{c}{2} x_n^T v_n$$ so that $f(x_n, v_n) = \exp(g(x_n, v_n))$. We can see that $$\frac{\partial}{\partial x_n} g(x_n, v_n) =  -\frac{3}{2} x_n - \frac{c}{2} v_n \overset{\text{set}}{=} \vec{0}$$ is solved by $x_n = -(c/3) v_n$. Since $g(x_n, v_n)$ is concave in $x_n$, $g(x_n, v_n)$ is maximized at $x_n = -(c/3)v_n$ for any $v_n$. Since $f(x_n, v_n) = \exp(g(x_n, v_n))$, $f(x_n, v_n)$ is also maximized at this value of $x_n$ for any $v_n$. Under the assumption that $\|v_n\| \leq 1$, we see
\begin{align*}
f(x_n, v_n) &\leq \exp\left(-\frac{3}{4}\left(-\frac{c}{3}\right)^2 v_n^T v_n - \frac{c}{2} \left(-\frac{c}{3}\right) v_n^T v_n \right) \\
&= \exp\left( -\frac{c^2}{12} \|v_n\|^2 + \frac{c^2}{6} \|v_n\|^2 \right) \\
&\leq \exp\left(\frac{c^2}{12} \right).
\end{align*}
Thus, $f(x_n, v_n)$ is a continuous and bounded function.

The claim of Lemma~\ref{lemma:expectation} is equivalent to $\E[f(X_n, v_n) \mid \D_n] - \E[f(Z, v_n) \mid \D_n] = o_P(1)$. The Portmanteau Theorem provides several equivalent definitions of convergence in distribution, including that $X_n \rightsquigarrow Z$ if and only if $\E[h(X_n)] \to \E[h(Z)]$ for every continuous, bounded function $h$.  We prove the result on $f(X_n, v_n)$ by modifying the \cite{van2000asymptotic}, Chapter 2, proof of this Portmanteau Theorem result.

Let $\gamma > 0$. Fix $\epsilon > 0$ such that 
\begin{equation}
\epsilon < \gamma \: / \: (3 + 3\exp(c^2/12))~. \label{eq:epsilon_gamma}
\end{equation}
Choose a large enough compact rectangle $I$ such that 
\begin{equation}
\P(Z\notin I) < \epsilon~. \label{eq:Z_epsilon}
\end{equation}

Let $\mathcal{B}_1(0)$ be the $d$-dimensional ball of radius 1 centered at 0. By construction, each $v_n \in \mathcal{B}_1(0)$. Since $f$ is continuous and $I \times \mathcal{B}_1(0)$ is compact, $f(x_n, v_n)$ is uniformly continuous on $I \times \mathcal{B}_1(0)$. We can thus partition $I \times \mathcal{B}_1(0)$ into $J$ compact regions $I_j \times V_j$ where $I \times \mathcal{B}_1(0) = \cup_{j=1}^J (I_j \times V_j)$ such that for any $j$ and for any $(x_{n1}, v_{n1}), (x_{n2}, v_{n2}) \in I_j \times V_j$, $|f(x_{n1}, v_{n1}) - f(x_{n2}, v_{n2})| < \epsilon$. (For instance the $I_j$ regions may be rectangles and the $V_j$ regions may be rectangles truncated at the boundaries of $\mathcal{B}_1(0)$. These rectangular regions may be appropriately sized such that within a region $I_j \times V_j$, $d((x_{n1}, v_{n1}), (x_{n2}, v_{n2}))$ is small enough that $|f(x_{n1}, v_{n1}) - f(x_{n2}, v_{n2})| < \epsilon$.)

Select a point $(x_j', v_j')$ from each $I_j\times V_j$. Define $$f_\epsilon(x, v) = \sum_{j=1}^J f(x_j', v_j') \one((x, v) \in I_j\times V_j)~.$$ For a given sample $\D_n$, we note that there are $\binom{n}{n/2}$ possible values of $X_n$, since there are $\binom{n}{n/2}$ possible values of $\bar{Y}_{0,n}$. We denote the sum over all possible values of $X_n$ as $\sum_{X_n}$~.

Note that
\begin{align*}
&\left|\E[f(X_n, v_n) \mid \D_n] - \E[f_\epsilon(X_n, v_n) \mid \D_n] \right| \\
&= \left|\binom{n}{n/2}^{-1} \sum_{X_n} f(X_n, v_n) -  \binom{n}{n/2}^{-1} \sum_{X_n} f_\epsilon(X_n, v_n)\right| \\
&= \Bigg|\binom{n}{n/2}^{-1} \sum_{X_n} \big[(f(X_n, v_n) - f_\epsilon(X_n, v_n)) \one(X_n \in I) +  (f(X_n, v_n) - f_\epsilon(X_n, v_n)) \one(X_n \notin I) \big] \Bigg| \\
&\leq \binom{n}{n/2}^{-1} \sum_{X_n} |f(X_n, v_n) - f_\epsilon(X_n, v_n)| \one(X_n \in I) + \\
&\qquad \binom{n}{n/2}^{-1} \sum_{X_n} |f(X_n, v_n) - f_\epsilon(X_n, v_n)| \one(X_n \notin I) \\
&= \binom{n}{n/2}^{-1} \sum_{X_n} |f(X_n, v_n) - f_\epsilon(X_n, v_n)| \one(X_n \in I, v_n \in \mathcal{B}_1(0)) + \\
&\qquad \binom{n}{n/2}^{-1} \sum_{X_n} |f(X_n, v_n) - f_\epsilon(X_n, v_n)| \one(X_n \notin I) \\
&< \binom{n}{n/2}^{-1} \sum_{X_n} \epsilon + \binom{n}{n/2}^{-1} \sum_{X_n} |f(X_n, v_n)| \one(X_n \notin I) \\
&\leq \epsilon + \exp\left(c^2 / 12 \right) \P(X_n \notin I \mid \D_n)~. \stepcounter{equation}\tag{\theequation}\label{eq:conv_prob_pt1}
\end{align*}

Similarly, we show that
\begin{align*}
&\Big|\E[f(Z, v_n) \mid \D_n] - \E[f_\epsilon(Z, v_n) \mid \D_n] \Big| \\
&= \Big| \E\big[(f(Z, v_n) - f_\epsilon(Z, v_n)) \one(Z \in I) + (f(Z, v_n) - f_\epsilon(Z, v_n)) \one(Z \notin I) \mid \D_n \big] \Big| \\
&\leq \E\left[\Big|f(Z, v_n) - f_\epsilon(Z, v_n)\Big| \one(Z \in I) \mid \D_n \right] + \E\left[\Big|f(Z, v_n) - f_\epsilon(Z, v_n)\Big| \one(Z \notin I) \mid \D_n \right] \\
&= \E\left[\Big|f(Z, v_n) - f_\epsilon(Z, v_n)\Big| \one(Z \in I, v_n \in \mathcal{B}_1(0)) \mid \D_n \right] + \E\left[\Big|f(Z, v_n) - f_\epsilon(Z, v_n)\Big| \one(Z \notin I) \mid \D_n \right] \\
&< \epsilon + \exp(c^2 / 12) \P(Z\notin I \mid \D_n) \\
&= \epsilon + \exp(c^2 / 12) \P(Z\notin I) \\
&< \epsilon + \epsilon\exp(c^2 / 12)~. \stepcounter{equation}\tag{\theequation}\label{eq:conv_prob_pt2}
\end{align*}

In addition, we see that
\begin{align*}
&\big|\E\left[f_\epsilon(X_n, v_n) \mid \D_n \right] - \E\left[f_\epsilon(Z, v_n) \mid \D_n \right] \big| \\
&= \left|\binom{n}{n/2}^{-1} \sum_{X_n} f_\epsilon(X_n, v_n) - \E[f_\epsilon(Z, v_n)] \right| \\
&= \left|\binom{n}{n/2}^{-1} \sum_{X_n} \sum_{j=1}^J f(x_j', v_j') \one((X_n, v_n) \in I_j \times V_j) - \sum_{j=1}^J f(x_j', v_j') \P(Z \in I_j) \one(v_n \in V_j) \right| \\
&\leq \sum_{j=1}^J \left| \binom{n}{n/2}^{-1} \sum_{X_n} f(x_j', v_j') \one(X_n \in I_j) \one(v_n \in V_j) - f(x_j', v_j') \P(Z \in I_j) \one(v_n \in V_j) \right| \\
&\leq \sum_{j=1}^J \left| \binom{n}{n/2}^{-1} \sum_{X_n} f(x_j', v_j') \one(X_n \in I_j) - f(x_j', v_j') \P(Z \in I_j) \right| \\
&= \sum_{j=1}^J \left|f(x_j', v_j') \left[\binom{n}{n/2}^{-1} \sum_{X_n} \one(X_n \in I_j) - \P(Z\in I_j) \right] \right| \\
&\leq \sum_{j=1}^J \left|\P(X_n \in I_j \mid \D_n) - \P(Z \in I_j) \right| \times \left| f(x_j', v_j') \right|~. \stepcounter{equation}\tag{\theequation}\label{eq:conv_prob_pt3}
\end{align*}

For the sequence of datasets $(\D_n)_{n\in 2\mathbb{N}}$, Lemma~\ref{lemma:normal_limit} establishes that $X_n \rightsquigarrow N(0, I_d)$ with probability 1. This tells us that with probability 1 over the randomness in sequences $(\D_n)_{n\in 2\mathbb{N}}$, $\lim_{n\to\infty} \P(X_n \in I \mid \D_n) = \P(Z\in I)$. Since almost sure convergence implies convergence in probability, for any $\delta > 0$, 
\begin{align} 
\lim_{n\to\infty} \P\big(|\P(X_n \in I \mid \D_n) - \P(Z\in I)| > \delta\big) &= 0 \label{eq:conv_in_prob_I} \\
\text{and } \lim_{n\to\infty} \P\big(|\P(X_n \in I_j \mid \D_n) - \P(Z\in I_j)| > \delta\big) &= 0 \text{ for $1\leq j\leq J$}~. \label{eq:conv_in_prob_Ij}
\end{align}
The outer probability is over the randomness in the sequences $(\D_n)_{n\in 2\mathbb{N}}$.

Now we see
\begin{align*}
&\lim_{n\to\infty} \P\big( \big| \E[f(X_n, v_n) \mid \D_n] - \E[f(Z, v_n) \mid \D_n] \big| > \gamma) \\
&\leq \lim_{n\to\infty} \P\big( \big| \E[f(X_n, v_n) \mid \D_n] - \E[f_\epsilon(X_n, v_n) \mid \D_n] \big| + \\
&\hspace*{5em} \big| \E[f_\epsilon(X_n, v_n) \mid \D_n] - \E[f_\epsilon(Z, v_n) \mid \D_n] \big| + \\
&\hspace*{5em} \big| \E[f_\epsilon(Z, v_n) \mid \D_n] - \E[f(Z, v_n) \mid \D_n] \big| > \gamma \big) \\
&\leq \lim_{n\to\infty} \P \big(\big| \E[f(X_n, v_n) \mid \D_n] - \E[f_\epsilon(X_n, v_n) \mid \D_n] \big| > \gamma/3 \big) + \\
&\hspace*{2em} \lim_{n\to\infty} \P \big(\big| \E[f_\epsilon(X_n, v_n) \mid \D_n] - \E[f_\epsilon(Z, v_n) \mid \D_n] \big| > \gamma/3 \big) + \\
&\hspace*{2em} \lim_{n\to\infty} \P \big(\big| \E[f_\epsilon(Z, v_n) \mid \D_n] - \E[f(Z, v_n) \mid \D_n] \big| > \gamma/3 \big) \\
&\leq \lim_{n\to\infty} \P \big(\epsilon + \exp(c^2/12) \P(X_n \notin I \mid \D_n) > \gamma/3\big) + \lim_{n\to\infty} \P \big( \epsilon + \epsilon \exp(c^2/12) > \gamma/3) + \\
&\hspace*{2em} \lim_{n\to\infty} \P \left(\sum_{j=1}^J \big| \P(X_n \in I_j \mid \D_n) - \P(Z\in I_j) \big| \times |f(x_j', v_j')| > \gamma/3  \right) \text{ by (\ref{eq:conv_prob_pt1}), (\ref{eq:conv_prob_pt2}), and (\ref{eq:conv_prob_pt3})} \\
&= \lim_{n\to\infty} \P \big(\epsilon + \exp(c^2/12) \P(X_n \notin I \mid \D_n) > \gamma/3\big) + \\
&\hspace*{2em} \lim_{n\to\infty} \P \left(\sum_{j=1}^J \big| \P(X_n \in I_j \mid \D_n) - \P(Z\in I_j) \big| \times |f(x_j', v_j')| > \gamma/3  \right) \text{ by (\ref{eq:epsilon_gamma})} \\
&\leq \lim_{n\to\infty} \P\big(\epsilon + \exp(c^2/12)\left(\P(X_n \notin I\mid \D_n) - \P(Z\notin I)\right) > \gamma/3 - \exp(c^2/12) \P(Z\notin I) \big) + \\
&\hspace*{2em} \lim_{n\to\infty} \sum_{j=1}^J \P\left( \big| \P(X_n\in I_j \mid \D_n) - \P(Z\in I_j)\big|  > (\gamma/3) |f(x_j', v_j')|^{-1} \right) \\
&\leq \lim_{n\to\infty} \P\left(\epsilon + \exp(c^2/12) (\P(X_n \notin I\mid \D_n) - \P(Z\notin I)) > \gamma/3 - \epsilon \exp(c^2/12) \right)\text{ by (\ref{eq:Z_epsilon}) and (\ref{eq:conv_in_prob_Ij})} \\
&= \lim_{n\to\infty} \P\left(\P(X_n\notin I\mid \D_n) - \P(Z\notin I) > \frac{\gamma - 3\epsilon - 3\epsilon \exp(c^2/12)}{3\exp(c^2/12)} \right) \\
&= 0 \text{ by (\ref{eq:epsilon_gamma}) and (\ref{eq:conv_in_prob_I})}.
\end{align*}

We have shown that for arbitrary $\gamma > 0$,
\begin{align*}
\lim_{n\to\infty} \P\big( \big| \E[f(X_n, v_n) \mid \D_n] - \E[f(Z, v_n) \mid \D_n] \big| > \gamma) = 0.
\end{align*}
We conclude that $\E[f(X_n, v_n) \mid \D_n] - \E[f(Z, v_n) \mid \D_n] = o_P(1)$. 
\end{proof}

\begin{theorem}
Assume we have a sequence of datasets $(\D_{n})_{n\in 2\mathbb{N}}$, where $\D_n = \{Y_{n1}, \ldots, Y_{nn}\}$ and each $Y_{ni}$ is an independent observation from $N(\theta^*, I_d)$. Let $\D_{0,n}$ be a sample of $n/2$ observations from $\D_n$, and let $\D_{1,n} = \D_n \backslash \D_{0,n}$. Define $\bar{Y}_n = (1/n)\sum_{i=1}^n Y_{ni}$, $\bar{Y}_{0,n} = (2/n) \sum_{Y_{ni} \in \D_{0,n}} Y_{ni}$, and $\bar{Y}_{1,n} = (2/n) \sum_{Y_{ni} \in \D_{1,n}} Y_{ni}$. Let $c > 0$, and let $(\theta_n)$ be a sequence that satisfies $\|\bar{Y}_n - \theta_n\| \leq c/\sqrt{n}$ for all $n$. Then
\begin{align*}
\E\{T_n(\theta_n) \mid \D_n\} \: / \: \left\{\exp\left(\frac{3n}{10} \|\bar{Y}_n - \theta_n\|^2 \right) \left(\frac{2}{5}\right)^{d/2} \right\} &= 1 + o_P(1).
\end{align*}
\end{theorem}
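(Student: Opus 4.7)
The plan is to rewrite $T_n(\theta_n)$ so that the randomness over the split is concentrated in a single centered vector, apply Lemma~\ref{lemma:expectation} to replace that vector by a Gaussian, and then compute the resulting Gaussian integral explicitly by completing the square. Since $\bar{Y}_{0,n} + \bar{Y}_{1,n} = 2\bar{Y}_n$, we get $\bar{Y}_{0,n} - \bar{Y}_{1,n} = 2(\bar{Y}_{0,n} - \bar{Y}_n)$, so setting $U_n = \sqrt{n}(\bar{Y}_{0,n} - \bar{Y}_n)$ and expanding $\|\bar{Y}_{0,n} - \theta_n\|^2$ around $\bar{Y}_n$ yields, after straightforward algebra,
\begin{align*}
T_n(\theta_n) = \exp\!\left(\tfrac{n}{4}\|\bar{Y}_n - \theta_n\|^2\right)\, \exp\!\left(-\tfrac{3}{4}\|U_n\|^2 + \tfrac{\sqrt{n}}{2}\, U_n^{\T}(\bar{Y}_n - \theta_n)\right).
\end{align*}
The first factor is $\D_n$-measurable and pulls out of the conditional expectation, and the remaining expectation is precisely the object treated by Lemma~\ref{lemma:expectation} (writing $\bar{Y}_n - \theta_n = -(c/\sqrt{n}) v_n$ with $\|v_n\|\le 1$).

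Next I would apply Lemma~\ref{lemma:expectation} to replace $U_n$ by $Z \sim N(0,I_d)$ at the cost of an additive $o_P(1)$ error, and then compute the resulting Gaussian expectation in closed form. Setting $a_n = \sqrt{n}(\bar{Y}_n - \theta_n)/2$ and completing the square,
\begin{align*}
\E\!\left[\exp\!\left(-\tfrac{3}{4}\|Z\|^2 + Z^{\T} a_n\right)\right] = (2\pi)^{-d/2}\!\int\!\exp\!\left(-\tfrac{5}{4}\|z - \tfrac{2}{5}a_n\|^2 + \tfrac{\|a_n\|^2}{5}\right)dz = \left(\tfrac{2}{5}\right)^{\!d/2}\!\exp\!\left(\tfrac{n}{20}\|\bar{Y}_n - \theta_n\|^2\right).
\end{align*}
Multiplying by the prefactor $\exp(\tfrac{n}{4}\|\bar{Y}_n - \theta_n\|^2)$ and using $\tfrac{1}{4} + \tfrac{1}{20} = \tfrac{3}{10}$ gives the target $\exp(\tfrac{3n}{10}\|\bar{Y}_n - \theta_n\|^2)(2/5)^{d/2}$.

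To finish, I would convert the additive $o_P(1)$ into the multiplicative $1+o_P(1)$ demanded by the statement. Under the hypothesis $\|\bar{Y}_n - \theta_n\| \leq c/\sqrt{n}$, the prefactor $\exp(\tfrac{n}{4}\|\bar{Y}_n - \theta_n\|^2)$ is bounded uniformly by $\exp(c^2/4)$, so the error term it multiplies remains $o_P(1)$. The leading approximation $\exp(\tfrac{3n}{10}\|\bar{Y}_n - \theta_n\|^2)(2/5)^{d/2}$ is bounded below by the positive constant $(2/5)^{d/2}$, so dividing produces a ratio of the form $1 + o_P(1)$.

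The only real obstacle is Lemma~\ref{lemma:expectation} itself: the conditional distribution of $U_n$ under simple random sampling is not Gaussian, and we need to justify replacing it by a Gaussian inside a nonlinear functional uniformly in a sequence $v_n$ living in the unit ball. This is where the finite-sample central limit theorems of \cite{hajek1960limiting} and \cite{li2017general} together with a Portmanteau-style truncation enter. The scaling $\|\bar{Y}_n - \theta_n\| \leq c/\sqrt{n}$ is exactly what is needed both to keep the integrand uniformly bounded (so the Portmanteau argument applies) and to keep the completed-square center $\tfrac{2}{5}a_n$ of order one; everything after Lemma~\ref{lemma:expectation} is algebra.
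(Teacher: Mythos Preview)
Your proposal is correct and follows essentially the same route as the paper's proof: the same substitution $U_n=\sqrt{n}(\bar Y_{0,n}-\bar Y_n)$ (the paper calls it $X_n$), the same algebraic reduction to $\exp(\tfrac{n}{4}\|\bar Y_n-\theta_n\|^2)\,\E[\exp(-\tfrac34\|U_n\|^2+\tfrac{\sqrt n}{2}U_n^{\T}(\bar Y_n-\theta_n))\mid\D_n]$, the same appeal to Lemma~\ref{lemma:expectation}, the same completion of the square giving $(2/5)^{d/2}\exp(\tfrac{n}{20}\|\bar Y_n-\theta_n\|^2)$, and the same boundedness argument to pass from additive $o_P(1)$ to multiplicative $1+o_P(1)$. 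The only cosmetic difference is that the paper writes the target ratio first and then identifies it with the ratio of the $U_n$- and $Z$-expectations, whereas you pull out the deterministic prefactor before invoking the lemma; the content is identical.
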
 

\begin{proof}
Define $X_n \equiv \sqrt{n}(\bar{Y}_{0,n} - \bar{Y}_n)$ and let $Z \sim N(0,I_d)$. In addition, define $\mu_n \equiv (\sqrt{n} / 5) (\bar{Y}_n - \theta_n)$ and $\Omega \equiv (2/5)I_d$. Then

{\small
\begin{align*}
\E&[T_n(\theta_n) \mid \D_n] \: / \: \left\{\exp\left(\frac{3n}{10} \|\bar{Y}_n - \theta_n\|^2 \right) \left(\frac{2}{5}\right)^{d/2} \right\}  \\
&= \E\left[\exp\left(-\frac{n}{4}\|\bar{Y}_{0,n} - \bar{Y}_{1,n}\|^2 + \frac{n}{4}\|\bar{Y}_{0,n} -\theta_n\|^2 \right) \mid \D_n \right] \: / \: \left\{\exp\left(\frac{3n}{10} \|\bar{Y}_n - \theta_n\|^2 \right) \left(\frac{2}{5}\right)^{d/2} \right\}  \\
&= \E\left[\exp\left(-\frac{n}{4}\|2\bar{Y}_{0,n} - 2\bar{Y}_n\|^2 + \frac{n}{4}\|\bar{Y}_{0,n} -\theta_n\|^2 \right) \mid \D_n \right] \exp\left(-\frac{3n}{10} \|\bar{Y}_n - \theta_n\|^2 \right) \left(\frac{2}{5}\right)^{-d/2} \\
&= \E\left[\exp\left(-n \|\bar{Y}_{0,n} - \bar{Y}_n\|^2 + \frac{n}{4}\|\bar{Y}_{0,n} - \bar{Y}_n + \bar{Y}_n -\theta_n\|^2 \right) \mid \D_n \right] \exp\left(-\frac{3n}{10} \|\bar{Y}_n - \theta_n\|^2 \right) \left(\frac{2}{5}\right)^{-d/2}  \\
&= \E\left[\exp\left( -\frac{3n}{4} \|\bar{Y}_{0,n} - \bar{Y}_n \|^2 + \frac{n}{2}(\bar{Y}_{0,n} - \bar{Y}_n)^T (\bar{Y}_n - \theta_n) + \frac{n}{4} \|\bar{Y}_n - \theta_n\|^2 \right) \mid \D_n \right] \times \\
&\qquad \exp\left(-\frac{3n}{10} \|\bar{Y}_n - \theta_n\|^2 \right) \left(\frac{2}{5}\right)^{-d/2}  \\
&= \E\left[\exp\left(-\frac{3}{4} X_n^T X_n + \frac{\sqrt{n}}{2} X_n^T\left(\bar{Y}_n - \theta_n \right) \right) \mid \D_n \right] \exp\left(-\frac{n}{20} \|\bar{Y}_n - \theta_n\|^2 \right) \left(\frac{2}{5}\right)^{-d/2}  \\
&= \E\left[\exp\left(-\frac{3}{4} X_n^T X_n + \frac{\sqrt{n}}{2} X_n^T\left(\bar{Y}_n - \theta_n \right) \right) \mid \D_n \right] / \E\left[\exp\left(-\frac{3}{4} Z^T Z + \frac{\sqrt{n}}{2} Z^T\left(\bar{Y}_n - \theta_n \right) \right) \mid \D_n \right] \stepcounter{equation}\tag{\theequation}\label{eq:exp_z} \\
&= 1 + o_P(1). \stepcounter{equation}\tag{\theequation}\label{eq:divide_o_1}
\end{align*}
}

Step (\ref{eq:exp_z}) holds because
{\small
\begin{align*}
\E&\left[\exp\left(-\frac{3}{4} Z^T Z + \frac{\sqrt{n}}{2} Z^T\left(\bar{Y}_n - \theta_n \right) \right) \mid \D_n \right] \\
&= \int_{\R^d} \Bigg[ \frac{1}{(2\pi)^{d/2} |I_d|^{1/2}} \exp\left(-\frac{1}{2}z^T z \right) \exp\left(-\frac{3}{4} z^T z + \frac{\sqrt{n}}{2} z^T\left(\bar{Y}_n - \theta_n\right)\right)\Bigg] dz \\
&= \int_{\R^d} \Bigg[ \frac{1}{(2\pi)^{d/2}} \exp\left(-\frac{5}{4}z^T z + \frac{\sqrt{n}}{2} z^T \left(\bar{Y}_n - \theta_n\right) \right)\Bigg] dz  \\
&= |\Omega|^{1/2} \int_{\R^d} \Bigg[ \frac{1}{(2\pi)^{d/2} |\Omega|^{1/2}} \exp\left(-\frac{1}{2} (z - \mu_n)^T \Omega^{-1} (z - \mu_n) + \frac{n}{20} \|\bar{Y}_n - \theta_n\|^2  \right)\Bigg] dz \stepcounter{equation}\tag{\theequation}\label{eq:exp_plugin} \\
&= \exp\left(\frac{n}{20} \|\bar{Y}_n - \theta_n\|^2 \right) |\Omega|^{1/2}\\
&= \exp\left(\frac{n}{20} \|\bar{Y}_n - \theta_n\|^2 \right) \left( \frac{2}{5} \right)^{d/2}. 
\end{align*}
}

Step (\ref{eq:exp_plugin}) uses the following equality:
{\small
\begin{align*}
-\frac{5}{4} &z^T z + \frac{\sqrt{n}}{2} z^T (\bar{Y}_n - \theta_n) \\
&= -\frac{5}{4} \left[z^T z - \frac{2\sqrt{n}}{5} z^T (\bar{Y}_n - \theta_n) + \frac{n}{25} (\bar{Y}_n - \theta_n)^T (\bar{Y}_n - \theta_n) - \frac{n}{25} (\bar{Y}_n - \theta_n)^T (\bar{Y}_n - \theta_n) \right] \\
&= -\frac{5}{4} \left(z - \frac{\sqrt{n}}{5} (\bar{Y}_n - \theta_n) \right)^T \left(z - \frac{\sqrt{n}}{5} (\bar{Y}_n - \theta_n) \right) + \frac{n}{20} \|\bar{Y}_n - \theta_n\|^2 \\
&= -\frac{1}{2} \left(z - \frac{\sqrt{n}}{5} (\bar{Y}_n - \theta_n) \right)^T \left(\frac{5}{2} I_d \right) \left(z - \frac{\sqrt{n}}{5} (\bar{Y}_n - \theta_n) \right) + \frac{n}{20} \|\bar{Y}_n - \theta_n\|^2 \\
&= -\frac{1}{2} (z-\mu_n)^T \Omega^{-1} (z-\mu_n) + \frac{n}{20} \|\bar{Y}_n - \theta_n\|^2~.
\end{align*}
}

To justify step (\ref{eq:divide_o_1}), note that $\E\left[\exp\left(-\frac{3}{4} Z^T Z + \frac{\sqrt{n}}{2} Z^T\left(\bar{Y}_n - \theta_n \right) \right) \mid \D_n \right]$, which equals $\exp\left(\frac{n}{20} \|\bar{Y}_n - \theta_n\|^2 \right) \left(\frac{2}{5}\right)^{d/2}$, is bounded between $(2/5)^{d/2}$ and $\exp(c^2/20) (2/5)^{d/2}$ under the assumption that $\|\bar{Y}_n - \theta_n\| \leq c/\sqrt{n}$. By Lemma~\ref{lemma:expectation}, 
{\footnotesize
\begin{equation*}
\E\left[\exp\left(-\frac{3}{4} X_n^T X_n + \frac{\sqrt{n}}{2} X_n^T\left(\bar{Y}_n - \theta_n \right) \right) \mid \D_n \right] - \E\left[\exp\left(-\frac{3}{4} Z^T Z + \frac{\sqrt{n}}{2} Z^T\left(\bar{Y}_n - \theta_n \right) \right) \mid \D_n \right] = o_P(1).
\end{equation*}
}
Combining these two facts, we conclude that
{\footnotesize
\begin{equation*}
\E\left[\exp\left(-\frac{3}{4} X_n^T X_n + \frac{\sqrt{n}}{2} X_n^T\left(\bar{Y}_n - \theta_n \right) \right) \mid \D_n \right] / \E\left[\exp\left(-\frac{3}{4} Z^T Z + \frac{\sqrt{n}}{2} Z^T\left(\bar{Y}_n - \theta_n \right) \right) \mid \D_n \right] = 1 + o_P(1).
\end{equation*}
}
\end{proof}

\begin{theorem}
Let $Y_1, \ldots, Y_n \sim N(\theta^*, I_d)$. The splitting proportion that minimizes $\E[r^2\{C_n^{\spl}(\alpha)\}]$ is 
\begin{align*}
p_0^* &= 1 - \frac{\sqrt{4d^2 + 8d\log\left(\frac{1}{\alpha}\right)} - 2d}{4\log\left(\frac{1}{\alpha}\right)}~.
\end{align*}
\end{theorem}

\begin{proof}
Recall that $p_0$ represents the proportion of observations that we place in $\D_0$. 

We know that
\begin{align*}
\bar{Y}_0 &\sim N\left(\theta^*, \: \V = \frac{1}{np_0} I_d\right) \\
\bar{Y}_1 &\sim N\left(\theta^*, \: \V = \frac{1}{n(1-p_0)} I_d\right).
\end{align*}
Since all observations in $\D_0$ and $\D_1$ are mutually independent, this implies
\begin{align*}
\bar{Y}_0 - \bar{Y}_1 &\sim N\left(0, \V = \left(\frac{1}{np_0} + \frac{1}{n(1-p_0)}\right)I_d \right) \stepcounter{equation}\tag{\theequation}
\end{align*}
and, hence,
\begin{align*}
\left(\frac{1}{np_0} + \frac{1}{n(1-p_0)}\right)^{-1/2} \left(\bar{Y}_0 - \bar{Y}_1\right) &\sim N\left(0, I_d \right).
\end{align*}
We now see
\begin{align*}
\|\bar{Y}_0 - \bar{Y}_1\|^2 &= \left(\frac{1}{np_0} + \frac{1}{n(1-p_0)}\right) \left\| \left(\frac{1}{np_0} + \frac{1}{n(1-p_0)}\right)^{-1/2} (\bar{Y}_0 - \bar{Y}_1) \right\|^2 \\
&\overset{d}{=} \left(\frac{1}{np_0} + \frac{1}{n(1-p_0)}\right) \chi^2_d~. \stepcounter{equation}\tag{\theequation}\label{eq:p0_chisq}
\end{align*}
When $p_0 = \frac{1}{2}$, this expression is $\frac{4}{n}\chi^2_d$, in agreement with the derivation of equation~\ref{eq:sq_rad_split}.

Setting $\hat{\theta}_1 = \bar{Y}_1$, at $\theta\in \R^d$ we construct the test statistic:
\begin{align*}
T_n(\theta) &= \frac{\prod_{Y_{i}\in \D_0} \exp\left(-\frac{1}{2}(Y_{i} - \hat{\theta}_1)^T (Y_{i} - \hat{\theta}_1) \right)}{\prod_{Y_{i}\in \D_0} \exp\left(-\frac{1}{2}(Y_{i} - \theta)^T (Y_{i} - \theta) \right)} \\
&= \exp\left(\sum_{Y_{i}\in \D_0} \left(-\frac{1}{2}(\bar{Y}_0 - \bar{Y}_1)^T (\bar{Y}_0 - \bar{Y}_1) + \frac{1}{2} (\bar{Y}_0 -\theta)^T (\bar{Y}_0-\theta) \right) \right) \\
&= \exp\left( -\frac{np_0}{2} \| \bar{Y}_0 - \bar{Y}_1 \|^2 + \frac{np_0}{2} \| \bar{Y}_0 -\theta \|^2 \right)~. 
\end{align*}
Using a split proportion of $p_0$, the split LRT confidence set is now
\begin{align*}
C_n^{\spl} &= \left\{\theta\in\Theta: \exp\left( -\frac{np_0}{2} \| \bar{Y}_0 - \bar{Y}_1 \|^2 + \frac{np_0}{2} \| \bar{Y}_0 -\theta \|^2 \right)  \leq \frac{1}{\alpha}  \right\} \\
&= \left\{\theta\in\Theta: -\frac{np_0}{2} \| \bar{Y}_0 - \bar{Y}_1 \|^2 + \frac{np_0}{2} \| \bar{Y}_0 -\theta \|^2  \leq \log \left(\frac{1}{\alpha}\right)  \right\} \\
&= \left\{\theta\in\Theta: \frac{np_0}{2} \| \bar{Y}_0 -\theta \|^2  \leq \log \left(\frac{1}{\alpha}\right)  + \frac{np_0}{2} \| \bar{Y}_0 - \bar{Y}_1 \|^2 \right\} \\
&= \left\{\theta\in\Theta: \| \bar{Y}_0 -\theta \|^2  \leq \frac{2}{np_0} \log \left(\frac{1}{\alpha}\right)  + \| \bar{Y}_0 - \bar{Y}_1 \|^2 \right\}~. \\
\end{align*}

The squared radius is thus $r^2(C_n^{\spl}(\alpha)) = \frac{2}{np_0} \log \left(\frac{1}{\alpha}\right)  + \| \bar{Y}_0 - \bar{Y}_1 \|^2$. By (\ref{eq:p0_chisq}), the expected squared radius at a given value of $p_0$ is
\begin{align*}
\footnotesize s(p_0) = \frac{2}{np_0}\log\left(\frac{1}{\alpha}\right) + \left(\frac{1}{np_0} + \frac{1}{n(1-p_0)}\right)d~.
\end{align*}

We can now minimize this function:
\begin{align*}
0 &\overset{\text{set}}{=} \frac{\partial}{\partial p_0} s(p_0) = \frac{-2}{np_0^2}\log\left(\frac{1}{\alpha}\right) - \frac{d}{np_0^2} + \frac{d}{n(1-p_0)^2} \\
&\Updownarrow \\
0 &= -2(1-p_0)^2 \log\left(\frac{1}{\alpha}\right) - d(1-p_0)^2 + dp_0^2 \\
&= -2(1-2p_0+p_0^2) \log\left(\frac{1}{\alpha}\right) - d(1-2p_0 + p_0^2) + dp_0^2 \\
&= -2\log\left(\frac{1}{\alpha}\right) + 4p_0\log\left(\frac{1}{\alpha}\right) - 2p_0^2\log\left(\frac{1}{\alpha}\right) - d + 2dp_0 - dp_0^2 + dp_0^2 \\
&= p_0^2 \left(-2\log\left(\frac{1}{\alpha}\right)\right) + p_0\left(4\log\left(\frac{1}{\alpha}\right) + 2d \right) + \left(-2\log\left(\frac{1}{\alpha}\right) - d\right).
\end{align*}
This is now a quadratic expression in $p_0$. Thus, this formula is solved by
\begin{align*}
p_0 &= \frac{-4\log\left(\frac{1}{\alpha}\right) - 2d \pm \sqrt{\left(4\log\left(\frac{1}{\alpha}\right) + 2d\right)^2 - 4\left(-2\log\left(\frac{1}{\alpha}\right)\right)\left(-2\log\left(\frac{1}{\alpha}\right) - d \right)}}{2\left(-2\log\left(\frac{1}{\alpha}\right)\right)} \\
&= \frac{4\log\left(\frac{1}{\alpha}\right) + 2d \pm \sqrt{4d^2 + 8d\log\left(\frac{1}{\alpha}\right)}}{4\log\left(\frac{1}{\alpha}\right)}~.
\end{align*}
We now consider the $\pm$ choice. In the $+$ direction, we have
$$p_0 = \frac{4\log\left(\frac{1}{\alpha}\right) + 2d + \sqrt{4d^2 + 8d\log\left(\frac{1}{\alpha}\right)}}{4\log\left(\frac{1}{\alpha}\right)} = 1 + \frac{2d + \sqrt{4d^2 + 8d\log\left(\frac{1}{\alpha}\right)}}{4\log\left(\frac{1}{\alpha}\right)} > 1~.$$
However, in the $-$ direction, we can show that $p_0 \in \left(\frac{1}{2}, 1\right)$. We note that 
\begin{align*}
2d &< \sqrt{4d^2 + 8d\log\left(\frac{1}{\alpha}\right)} < \sqrt{4d^2 + 8d\log\left(\frac{1}{\alpha}\right) + 4\left(\log\left(\frac{1}{\alpha}\right)\right)^2} \\
&= \sqrt{\left(2d+2\log\left(\frac{1}{\alpha}\right)\right)^2} = 2d+2\log\left(\frac{1}{\alpha}\right).
\end{align*}
So
\begin{align*}
p_0 &= 1 + \frac{2d - \sqrt{4d^2 + 8d\log\left(\frac{1}{\alpha}\right)}}{4\log\left(\frac{1}{\alpha}\right)} < 1 + \frac{2d - 2d}{4\log\left(\frac{1}{\alpha}\right)} = 1 \\
&\text{and} \\
p_0 &= 1 + \frac{2d - \sqrt{4d^2 + 8d\log\left(\frac{1}{\alpha}\right)}}{4\log\left(\frac{1}{\alpha}\right)} > 1 + \frac{2d - 2d - 2\log\left(\frac{1}{\alpha}\right)}{4\log\left(\frac{1}{\alpha}\right)} = 1 - \frac{1}{2} = \frac{1}{2}~.
\end{align*}
This means that $$ p_0^* = 1 - \frac{\sqrt{4d^2 + 8d\log\left(\frac{1}{\alpha}\right)} - 2d}{4\log\left(\frac{1}{\alpha}\right)}$$ optimizes $s(p_0)$, and $p_0^* \in \left(\frac{1}{2}, 1\right)$. Furthermore, this optimum must be a minimum, since for any $p_0 \in (0,1)$, 
$$\frac{\partial^2}{\partial p_0^2}\: s(p_0) = \frac{4}{np_0^3}\log\left(\frac{1}{\alpha}\right) + \frac{2d}{np_0^3} + \frac{2d}{n(1-p_0)^3} > 0~.$$
We can use L'H\^{o}pital's Rule to show that $p_0^* \to \frac{1}{2}$ as $d\to\infty$:
\begin{align*}
\lim_{d\to\infty} p_0^* &= 1 -  \lim_{d\to\infty}  \frac{\sqrt{4d^2 + 8d\log\left(\frac{1}{\alpha}\right)} - 2d}{4\log\left(\frac{1}{\alpha}\right)} \\
&= 1 - \lim_{d\to\infty}  \frac{\sqrt{4 + (8/d)\log\left(1/\alpha\right)} - 2}{(4/d)\log(1/\alpha)} \\
&= 1 - \lim_{d\to\infty} \frac{\frac{1}{2} \left(4 + (8/d) \log(1/\alpha)\right)^{-1/2} (-8/d^2) \log(1/\alpha)}{(-4/d^2) \log(1/\alpha)} \\
&= 1 - \lim_{d\to\infty}  \left(4 + (8/d) \log(1/\alpha)\right)^{-1/2} \\
&= \frac{1}{2}~.
\end{align*}
We conclude that as $d\to\infty$ for fixed $\alpha$, the optimal choice of $p_0^* \to 0.5$.
\end{proof}

\begin{theorem}
Suppose $Y_1, \ldots, Y_n$ are iid observations from  $N(\theta^*, I_d)$. Split the sample such that $\D_0$ and $\D_1$ each contain $n/2$ observations. Use $\D_0$ and $\D_1$ to define the split and cross-fit sets. Then $C_n^\CF(\alpha)$ is a subset of a translation of the split LRT set, recentered at $\bar{Y}$. That is, $C_n^\CF(\alpha) \subseteq \left\{\theta \in \Theta: \|\theta - \bar{Y}\|^2 < (4/n)\log(1/\alpha) + \|\bar{Y}_0 - \bar{Y}_1\|^2 \right\}$, and hence $\vol\{C_n^\CF(\alpha)\} \leq \vol\{C_n^{\spl}(\alpha)\}$. Furthermore, if and only if $\bar{Y}_0 = \bar{Y}_1$, $C_n^\CF(\alpha)$ and $C_n^\spl(\alpha)$ have equal volume and are in fact the same set.
\end{theorem}

\begin{proof}
Let $\theta\in C_n^{\CF}(\alpha)$. Then
\begin{align*}
\exp&\left(-\frac{n}{4}\|\bar{Y}_0 - \bar{Y}_1\|^2 + \frac{n}{4} \|\bar{Y}-\theta\|^2 \right) \\
&= \exp\left(-\frac{n}{4}\|\bar{Y}_0 - \bar{Y}_1\|^2 + \frac{n}{4} \left\|\frac{1}{2}(\bar{Y}_0-\theta) + \frac{1}{2}(\bar{Y}_1-\theta)\right\|^2 \right) \\
&\leq \exp\left(-\frac{n}{4}\|\bar{Y}_0 - \bar{Y}_1\|^2 + \frac{n}{8}\|\bar{Y}_0 - \theta\|^2 + \frac{n}{8} \|\bar{Y}_1 - \theta\|^2 \right) \stepcounter{equation}\tag{\theequation}\label{eq:cfp0_ineq1} \\ 
&= \exp\left(-\frac{n}{8}\|\bar{Y}_0 - \bar{Y}_1\|^2 + \frac{n}{8}\|\bar{Y}_0 - \theta\|^2 -\frac{n}{8}\|\bar{Y}_0 - \bar{Y}_1\|^2 + \frac{n}{8} \|\bar{Y}_1 - \theta\|^2 \right) \\
&\leq \frac{1}{2}\Bigg[ \exp\left(-\frac{n}{4}\|\bar{Y}_0 - \bar{Y}_1\|^2 + \frac{n}{4}\|\bar{Y}_0 - \theta\|^2 \right) + \\
&\hspace*{3em} \exp\left(-\frac{n}{4} \|\bar{Y}_0 - \bar{Y}_1\|^2 + \frac{n}{4} \|\bar{Y}_1 - \theta\|^2 \right) \Bigg] \stepcounter{equation}\tag{\theequation}\label{eq:cfp0_ineq2} \\
&< \frac{1}{\alpha}~.
\end{align*}
Line (\ref{eq:cfp0_ineq1}) holds because $\|\cdot\|^2$ is convex. Line (\ref{eq:cfp0_ineq2}) holds because $\exp(\cdot)$ is convex. Thus, \\ $C_n^{\CF}(\alpha) \subseteq \left\{\theta\in\Theta : \|\bar{Y} - \theta\|^2 < (4/n) \log(1/\alpha) + \|\bar{Y}_0 - \bar{Y}_1\|^2 \right\}$, which has the same volume as $C_n^{\spl}(\alpha) = \left\{\theta\in\Theta: \|\bar{Y}_0 - \theta\|^2 < (4/n) \log\left(1/\alpha\right) + \|\bar{Y}_0 - \bar{Y}_1\|^2 \right\}$. Hence, it also holds that $\vol\left(C_n^{\CF}(\alpha)\right) \leq \vol\left(C_n^{\spl}(\alpha)\right)$.

Now suppose $\bar{Y}_0 = \bar{Y}_1$. Since $\|\cdot\|^2$ and $\exp(\cdot)$ are strictly convex, equality holds in (\ref{eq:cfp0_ineq1}) and (\ref{eq:cfp0_ineq2}) only in this case. This means that if and only if $\bar{Y}_0 = \bar{Y}_1$,
\begin{align*}
C_n^{\spl}(\alpha) &= \left\{ \theta \in \Theta: \exp\left(-\frac{n}{4}\|\bar{Y}_0 - \bar{Y}_1\|^2 + \frac{n}{4} \|\bar{Y}_0 - \theta\|^2 \right) < \frac{1}{\alpha} \right\} \\
&= \left\{ \theta \in \Theta: \exp\left(-\frac{n}{4}\|\bar{Y}_0 - \bar{Y}_1\|^2 + \frac{n}{4} \|\bar{Y}-\theta\|^2 \right) < \frac{1}{\alpha} \right\} \\
&= \left\{ \theta \in \Theta: \frac{1}{2} \exp\left(-\frac{n}{4}\|\bar{Y}_0 - \bar{Y}_1\|^2\right) \left\{ \exp\left(\frac{n}{4}\|\bar{Y}_0 - \theta\|^2\right) + \exp\left(\frac{n}{4}\|\bar{Y}_1 - \theta\|^2\right) \right\} < \frac{1}{\alpha} \right\} \\
&= C_n^{\CF}(\alpha)~.
\end{align*}
Thus,  $\vol\left(C_n^{\CF}(\alpha)\right) = \vol\left(C_n^{\spl}(\alpha)\right)$.
\end{proof}

\begin{theorem}
Let $f_d(x)$ be the probability density function of the $\chi^2_d$ distribution, and let $c_{\alpha, d}$ be the upper $\alpha$ quantile of the $\chi^2_d$ distribution. Assume $c_{\alpha, d} + \log(\alpha) > d - 2$. Then
\begin{align*}
\P\left[r\{C_n^{\spl}(\alpha)\} / r\{C_n^\LRT(\alpha)\} \leq 2\right] &\geq 1 - \alpha - \log(1/\alpha) f_d\{c_{\alpha, d} + \log(\alpha)\} \\ 
\text{and} \quad \P\left[r\{C_n^{\spl}(\alpha)\} / r\{C_n^\LRT(\alpha)\} \leq 2\right] &\leq 1 - \alpha - \log(1/\alpha) f_d(c_{\alpha, d}).
\end{align*}
As $d\to\infty$ for fixed $\alpha \leq 0.17$, $\log(1/\alpha) f_d\{c_{\alpha, d} + \log(\alpha)\}$ and $\log(1/\alpha) f_d(c_{\alpha, d})$ both converge to 0.
\end{theorem}

\begin{proof}
We divide this proof into a proof of the bounds on $\P\left[r\{C_n^{\spl}(\alpha)\} / r\{C_n^\LRT(\alpha)\} \leq 2\right]$, a proof of the fact that $d\geq 2$ and $\alpha \leq 0.17$ implies $c_{\alpha, d} + \log(\alpha) > d - 2$, and a proof of the behavior as $d\to\infty$ for $\alpha \leq 0.17$.
\newline \newline
\textit{Proof of bounds in Theorem~\ref{thm:bounds}.}
\newline \newline
We use the fact that $r^2(C_n^{\spl}(\alpha)) = \frac{4}{n}\log(1/\alpha) + \|\bar{Y}_0 - \bar{Y}_1\|^2$. As established in the proof of Theorem~\ref{thm:split_p0} and the derivation of equation~\ref{eq:sq_rad_split}, we know that $\|\bar{Y}_0 - \bar{Y}_1\|^2 \overset{d}{=} (4/n)\chi^2_d$. Let $X\sim \chi^2_d$. Note that $\log(\alpha) < 0$. Then
\begin{align*}
\P\left(r(C_n^{\spl}(\alpha)) \: / \: r(C_n^\LRT(\alpha)) \leq 2\right) &= \P\left(r^2(C_n^{\spl}(\alpha)) \: / \: r^2(C_n^\LRT(\alpha)) \leq 4\right) \\
&= \P\left(r^2(C_n^{\spl}(\alpha)) \leq \frac{4}{n} c_{\alpha, d} \right) \\
&= \P\left(\frac{4}{n}\log(1/\alpha) + \frac{4}{n}X \leq  \frac{4}{n} c_{\alpha, d} \right) \\
&= \P\left(\log(1/\alpha) + X \leq c_{\alpha, d} \right) \\
&= \P(X \leq c_{\alpha, d} + \log(\alpha)) \\
&= \P(X \leq c_{\alpha, d}) - \P(c_{\alpha, d} + \log(\alpha) \leq X \leq c_{\alpha, d}) \\
&= 1 - \alpha - \P(c_{\alpha, d} + \log(\alpha) \leq X \leq c_{\alpha, d})~.
\end{align*}
Now we need to bound $\P(c_{\alpha, d} + \log(\alpha) \leq X \leq c_{\alpha, d})$. Under the assumed conditions, we show that the $\chi^2_d$ pdf is decreasing on $[c_{\alpha, d} + \log(\alpha), c_{\alpha, d}]$. Let $f_d(x)$ be the $\chi^2_d$ pdf. Since $$\frac{\partial}{\partial x} f_d(x) = \frac{1}{2^{d/2} \: \Gamma(d/2)} \left[\left(\frac{d}{2}-1\right) x^{d/2 - 2} e^{-x/2} + x^{d/2 - 1} \left(-\frac{1}{2}e^{-x/2}\right) \right],$$ $f_d(\cdot)$ is decreasing at $x$ if and only if $$\frac{1}{2^{d/2} \: \Gamma(d/2)} \left[\left(\frac{d}{2}-1\right) x^{d/2 - 2} e^{-x/2} + x^{d/2 - 1} \left(-\frac{1}{2}e^{-x/2}\right) \right] < 0~.$$ Re-writing, this implies $$x^{d/2 - 1} \left(\frac{1}{2}e^{-x/2}\right) > \left(\frac{d}{2}-1\right) x^{d/2 - 2} e^{-x/2}~,$$ which holds if and only if $x > d-2$.

By our initial assumption, $c_{\alpha, d} + \log(\alpha) > d - 2.$ Thus, $f_d(x)$ is decreasing on \mbox{$[c_{\alpha, d} + \log(\alpha), c_{\alpha, d}]$.} Since the interval has length $\log(1/\alpha)$, $$\log(1/\alpha) f_d(c_{\alpha, d})  \:\: \leq  \:\: \P(c_{\alpha, d} + \log(\alpha) \leq X \leq c_{\alpha, d})  \:\: \leq  \:\: \log(1/\alpha) f_d(c_{\alpha,d} + \log(\alpha))~.$$ The bounds on $\P\left(r(C_n^{\spl}(\alpha)) \: / \: r(C_n^\LRT(\alpha)) \leq 2\right)$ follow immediately. 
\newline \newline
\textit{Proof that Theorem~\ref{thm:bounds} condition is satisfied for $d \geq 2$ and $\alpha \leq 0.17$.}
\newline \newline
In the text, we note that if $d \geq 2$ and $\alpha \leq 0.17$, then $c_{\alpha, d} + \log(\alpha) > d - 2$. To see this, we use a fact from \cite{inglot2010inequalities}: For $d\geq 2$ and $\alpha \leq 0.17$,  it holds that $c_{\alpha, d} \geq d + 2\log(1/\alpha) - 5/2$. This implies 
$$c_{\alpha, d} + \log(\alpha) \geq d + \log(1/\alpha) - 5/2 \geq d + \log(1/0.17) - 5/2 > d - 2~,$$
which concludes the argument. 
\newline \newline
\textit{Proof of behavior as $d\to\infty$ for $\alpha \leq 0.17$ in Theorem~\ref{thm:bounds}.}
\newline \newline
Assume $d\geq 3$ and $\alpha \leq 0.17$. Above we showed that if $f_d(x)$ is the $\chi^2_d$ pdf, then $f_d(x)$ is decreasing in $x$ for $x > d - 2$. We know that 
$$d - 2 < d + \log(1/.17) - 5/2 \leq d + \log(1/\alpha) - 5/2~.$$
Also, because $c_{\alpha, d} \geq d + 2\log(1/\alpha) - 5/2$ for $d\geq 2$ and $\alpha \leq 0.17$, 
$$c_{\alpha, d} \geq c_{\alpha, d} + \log(\alpha) \geq d + 2\log(1/\alpha) - \log(1/\alpha) - 5/2 = d + \log(1/\alpha) - 5/2~.$$
Hence $f_d(x)$ is decreasing for $x \geq d + \log(1/\alpha) - 5/2$. If we prove that $$\lim_{d\to\infty} f_d(d + \log(1/\alpha) - 5/2) = 0~,$$ we can conclude that
\begin{align*}
\lim_{d\to\infty} \log(1/\alpha) f_d\{c_{\alpha, d} + \log(\alpha)\} &= 0 \quad\text{and} \\ 
\lim_{d\to\infty} \log(1/\alpha) f_d(c_{\alpha, d}) &= 0~.
\end{align*}
Since the $\chi^2_d$ pdf uses the Gamma function, we will make use of Stirling's formula. This formula states that $\Gamma(m) \sim \sqrt{2\pi (m-1)} (m-1)^{m-1} / \exp(m-1)$, where $\sim$  means that as $m\to\infty$, the ratio of the two sides converges to 1. We see that
\begin{align*}
f_d&(d + \log(1/\alpha) - 5/2) \\
&= \frac{1}{2^{d/2} \Gamma(d/2)} \left\{d + \log\left(\frac{1}{\alpha}\right) - \frac{5}{2}\right\}^{d/2-1}\exp\left[-\frac{1}{2} \left\{d + \log\left(\frac{1}{\alpha}\right) - \frac{5}{2} \right\} \right] \\
&\sim 2^{-d/2} \left\{2\pi \left(\frac{d}{2} - 1\right)\right\}^{-1/2} \left(\frac{d}{2} - 1\right)^{-(d/2 - 1)} \exp\left(\frac{d}{2} - 1\right) \times \\ 
&\qquad \left\{d + \log\left(\frac{1}{\alpha}\right) - \frac{5}{2}\right\}^{d/2-1}\exp\left[-\frac{1}{2} \left\{d + \log\left(\frac{1}{\alpha}\right) - \frac{5}{2} \right\} \right] \\
&= 2^{-d/2} \left\{2\pi \left(\frac{d}{2} - 1\right)\right\}^{-1/2} \left(\frac{d}{2} - 1\right)^{-(d/2 - 1)} \times \\ 
&\qquad \left\{d + \log\left(\frac{1}{\alpha}\right) - \frac{5}{2}\right\}^{d/2-1}\exp\left\{-1 - \frac{1}{2}\log\left(\frac{1}{\alpha}\right) + \frac{5}{4} \right\} \\
&= 2^{-(d/2 - 1)} \left(\frac{1}{2} \right) \left\{2\pi \left(\frac{d}{2} - 1\right)\right\}^{-1/2} \left(\frac{d}{2} - 1\right)^{-(d/2 - 1)} \times \\ 
&\qquad \left\{d + \log\left(\frac{1}{\alpha}\right) - \frac{5}{2}\right\}^{d/2-1}\exp\left\{\frac{1}{4} - \frac{1}{2}\log\left(\frac{1}{\alpha}\right) \right\} \\
&= \frac{1}{2} \left(\pi (d - 2)\right)^{-1/2} \left(d - 2\right)^{-(d/2 - 1)} \left\{d + \log\left(\frac{1}{\alpha}\right) - \frac{5}{2}\right\}^{d/2-1}\exp\left\{\frac{1}{4} - \frac{1}{2}\log\left(\frac{1}{\alpha}\right) \right\} \\
&= \frac{1}{2} \left(\pi (d - 2)\right)^{-1/2} \left(\frac{d - 2 + \log(1/\alpha) - 1/2}{d-2} \right)^{d/2 - 1} \exp\left\{1/4 - (1/2)\log(1/\alpha) \right\} \\
&= \frac{1}{2} \left(\pi (d - 2)\right)^{-1/2} \left(1 + \frac{\log(1/\alpha) - 1/2}{d-2} \right)^{d/2 - 1} \exp\left\{1/4 - (1/2)\log(1/\alpha) \right\}.
\end{align*}
Since $\lim_{m\to\infty} (1 + x/m)^m = \exp(x)$, we see that
\begin{align*}
\lim_{d\to\infty} \left(1 + \frac{\log(1/\alpha) - 1/2}{d-2} \right)^{d/2 - 1} &= \left( \lim_{d\to\infty} \left\{1 + \frac{\log(1/\alpha) - 1/2}{d - 2} \right\}^{d-2} \right)^{1/2} \\
&= \left( \exp\left\{\log(1/\alpha) - 1/2\right\} \right)^{1/2} \\
&= \left\{(1/\alpha) \exp(-1/2) \right\}^{1/2} \\
&= (1/\alpha)^{1/2} \exp(-1/4)~.
\end{align*}
This implies that 
\begin{align*}
f_d(d + \log(1/\alpha) - 5/2) &\sim  \frac{1}{2} \left(\pi (d - 2)\right)^{-1/2} (1/\alpha)^{1/2}  \exp(-1/4) \exp\left\{1/4 - (1/2)\log(1/\alpha) \right\} \\
&= \frac{1}{2} \left(\pi (d - 2)\right)^{-1/2} (1/\alpha)^{1/2}  \exp\left\{- (1/2)\log(1/\alpha) \right\}  \\
&= \frac{1}{2\sqrt{\pi(d-2)}}~.
\end{align*}
We conclude that $\lim_{d\to\infty} f_d(d + \log(1/\alpha) - 5/2) = 0$.
\end{proof}

Before proving Theorem~\ref{thm:hybrid_valid}, we establish Lemma~\ref{lemma:ripr} and Lemma~\ref{lemma:Rn_leq_1}.

\begin{lemma} \label{lemma:ripr}
Assume the doughnut null test setting. Let $\mathcal{P}_{\Theta_0}$ be the set of all convex combinations of $N(\theta, I_d)$ densities such that $\|\theta\| \in [0.5, 1]$. When $\|\bar{Y}_1\| > 1$ and $\hat{\theta}_1 = \bar{Y}_1$, the RIPR of $p_{\hat{\theta}_1}$ onto $\mathcal{P}_{\Theta_0}$ is $p_{\hat{\theta}_1 / \|\hat{\theta}_1\|}$.
\end{lemma}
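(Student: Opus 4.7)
The plan is to bypass a direct minimization of the Kullback--Leibler divergence over mixtures and instead invoke the sufficient-condition half of Lemma~4.1 of \cite{li1999estimation}: if a candidate $p^\star\in\mathcal{P}_{\Theta_0}$ satisfies $\E_{p_{\hat\theta_1}}\{p(Y)/p^\star(Y)\}\leq 1$ for every $p\in\mathcal{P}_{\Theta_0}$, then $p^\star$ is the RIPR. This follows from a Jensen step: for any $p\in\mathcal{P}_{\Theta_0}$,
\begin{equation*}
D_\KL(p_{\hat\theta_1}\,\|\,p) - D_\KL(p_{\hat\theta_1}\,\|\,p^\star)
 = \E_{p_{\hat\theta_1}}\{\log(p^\star(Y)/p(Y))\}
 \geq -\log \E_{p_{\hat\theta_1}}\{p(Y)/p^\star(Y)\}
 \geq 0.
\end{equation*}
My candidate is $p^\star = p_u$ with $u = \hat\theta_1/\|\hat\theta_1\|$, which lies in $\mathcal{P}_{\Theta_0}$ since $\|u\|=1\in[0.5,1]$.

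\textbf{Key computation.} For an arbitrary $\theta$ with $\|\theta\|\in[0.5,1]$, I would write
\begin{equation*}
\frac{p_\theta(y)}{p_u(y)} = \exp\!\left\{(\theta-u)^T y + \tfrac{1}{2}(\|u\|^2 - \|\theta\|^2)\right\}
\end{equation*}
and then apply the Gaussian moment generating function $\E\{e^{v^T Y}\}=\exp(v^T \hat\theta_1 + \tfrac{1}{2}\|v\|^2)$ for $Y\sim N(\hat\theta_1, I_d)$. After substituting $v=\theta-u$, the exponent simplifies using $u^T\hat\theta_1=\|\hat\theta_1\|$ and $\|u\|^2=1$ to
\begin{equation*}
(\theta-u)^T\hat\theta_1 + \tfrac{1}{2}\|\theta-u\|^2 + \tfrac{1}{2}(\|u\|^2-\|\theta\|^2)
 = (\|\hat\theta_1\|-1)(\theta^T u - 1).
\end{equation*}

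\textbf{Conclusion via Cauchy--Schwarz.} Because $\|\hat\theta_1\|>1$, the first factor is positive, so the bound $\E_{p_{\hat\theta_1}}\{p_\theta(Y)/p_u(Y)\}\leq 1$ reduces to $\theta^T u\leq 1$. Since $u$ is a unit vector and $\|\theta\|\leq 1$, Cauchy--Schwarz yields $\theta^T u\leq\|\theta\|\|u\|\leq 1$, with equality only when $\theta=u$. Linearity of expectation extends this bound to every convex combination $p=\int p_\theta\,d\mu(\theta)\in\mathcal{P}_{\Theta_0}$, so $p_u$ satisfies the sufficient condition and is thus the RIPR; uniqueness is provided by Lemma~4.1 of \cite{li1999estimation}.

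\textbf{Main obstacle.} The argument is mostly algebra once one chooses the right framing; the real obstacle is precisely avoiding the naive strategy of minimizing $D_\KL(p_{\hat\theta_1}\,\|\,p)$ directly over the convex hull, which would require optimizing over Gaussian mixtures and handling the log of an integral. The sufficient-condition route collapses this to a single Gaussian MGF calculation, and the Cauchy--Schwarz step is what turns the geometric intuition---that the nearest point of the annulus $\{\|\theta\|\in[0.5,1]\}$ to $\hat\theta_1$ is its radial projection onto the unit sphere---into the required inequality. Care is needed only to confirm that the bound survives taking convex combinations, which is immediate because $p\mapsto\E_{p_{\hat\theta_1}}\{p(Y)/p_u(Y)\}$ is linear in $p$.
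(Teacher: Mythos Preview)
Your proposal is correct and follows essentially the same structure as the paper's proof: both reduce the RIPR claim, via the same Jensen step, to showing $\E_{\hat\theta_1}\{p_\theta(Y)/p_u(Y)\}\leq 1$ for each component $\theta$, and then compute this Gaussian expectation explicitly. Your factored exponent $(\|\hat\theta_1\|-1)(\theta^T u-1)$ together with Cauchy--Schwarz is a slightly cleaner finish than the paper's, which writes the same exponent as $(u-\hat\theta_1)^T(u-\theta_k)$ and argues geometrically (with a companion figure) that the angle between these two vectors lies in $[90^\circ,270^\circ]$.
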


\begin{proof}
Suppose $\|\bar{Y}_1\| > 1$. Defining $\hat{\theta}_1 = \bar{Y}_1$ as in Table \ref{tab:subsample_split_ripr}, $\|\hat{\theta}_1\| > 1$. The RIPR of $\hat{\theta}_1$ onto the convex set $\mathcal{P}_{\Theta_0}$ minimizes $D_\KL(p_{\hat{\theta}_1} \| p_0)$ out of all densities $p_0 \in \mathcal{P}_{\Theta_0}$. Suppose $p_0 \in \mathcal{P}_{\Theta_0}$. Then we can write $p_0$ as a mixture of $N(\theta_k, I_d)$ densities. We write $p_0 = \sum_{k=1}^K w_k p_{\theta_k}$, where $K\in\mathbb{N}$, $\sum_{k=1}^K w_k = 1$, and for each $k=1,\ldots, K$, $0 \leq w_k \leq 1$ and $\|\theta_k\| \in [0.5, 1]$. Note that $p_{\hat{\theta}_1 / \|\hat{\theta}_1\|} \in \mathcal{P}_{\Theta_0}$. To prove that $D_\KL(p_{\hat{\theta}_1} \: \| \: p_{\hat{\theta}_1 / \|\hat{\theta}_1\|}) = \inf_{p_0 \in \mathcal{P}_{\Theta_0}} D_\KL(p_{\hat{\theta}_1} \: \| \: p_0)$, we show that $D_\KL(p_{\hat{\theta}_1} \: \| \: p_{\hat{\theta}_1 / \|\hat{\theta}_1\|}) \leq D_\KL(p_{\hat{\theta}_1} \: \| \: \sum_{k=1}^K w_k p_{\theta_k})$:
\begin{align*}
D_\KL&\left(p_{\hat{\theta}_1} \: \Big| \Big| \: \sum_{k=1}^K w_k p_{\theta_k}\right) - D_\KL\left(p_{\hat{\theta}_1} \: \| \: p_{\hat{\theta}_1 / \|\hat{\theta}_1\|} \right) \\
&= \int_{\R^d} p_{\hat{\theta}_1}(y) \log\left(\frac{p_{\hat{\theta}_1}(y)}{\sum_{k=1}^K w_k p_{\theta_k}(y)}\right) dy - \int_{\R^d} p_{\hat{\theta}_1}(y) \log\left(\frac{p_{\hat{\theta}_1}(y)}{p_{\hat{\theta}_1 / \|\hat{\theta}_1\|}(y)}\right) dy \\
&= \int_{\R^d} p_{\hat{\theta}_1}(y) \log\left(\frac{p_{\hat{\theta}_1 / \|\hat{\theta}_1\|}(y)}{\sum_{k=1}^K w_k p_{\theta_k}(y)} \right) dy = - \int_{\R^d} p_{\hat{\theta}_1}(y) \log\left(\frac{\sum_{k=1}^K w_k p_{\theta_k}(y)}{p_{\hat{\theta}_1 / \|\hat{\theta}_1\|}(y)} \right) dy \\
&= -\E_{\hat{\theta}_1} \left[\log\left\{\frac{\sum_{k=1}^K w_k p_{\theta_k}(Y)}{p_{\hat{\theta}_1 / \|\hat{\theta}_1\|}(Y)} \right\} \right] \geq -\log \E_{\hat{\theta}_1}\left\{\frac{\sum_{k=1}^K w_k p_{\theta_k}(Y)}{p_{\hat{\theta}_1 / \|\hat{\theta}_1\|}(Y)} \right\} \stepcounter{equation}\tag{\theequation}\label{eq:ripr_proof_1} \\
&= -\log \left[\sum_{k=1}^K w_k \E_{\hat{\theta}_1}\left\{\frac{p_{\theta_k}(Y)}{p_{\hat{\theta}_1 / \|\hat{\theta}_1\|}(Y)} \right\} \right] \geq -\log \left\{\sum_{k=1}^K w_k (1) \right\} = 0~. \stepcounter{equation}\tag{\theequation}\label{eq:ripr_proof_2} 
\end{align*}
The inequality in (\ref{eq:ripr_proof_1}) holds by Jensen's inequality. The inequality in (\ref{eq:ripr_proof_2}) holds by the following derivation:
{\small
\begin{align*}
\E_{\hat{\theta}_1}&\left\{\frac{p_{\theta_k}(Y)}{p_{\hat{\theta}_1 / \|\hat{\theta}_1\|}(Y)} \right\} \\
&= \int_{\R^d} \frac{1}{(2\pi)^{d/2}} \exp\left(-\frac{1}{2} \|y - \hat{\theta}_1\|^2\right) \frac{\exp\left(-\frac{1}{2} \|y - \theta_k\|^2 \right)}{\exp\left(-\frac{1}{2} \|y - \hat{\theta}_1 / \|\hat{\theta}_1\| \|^2 \right)} dy \\
&= \int_{\R^d} \frac{1}{(2\pi)^{d/2}} \exp\left(-\frac{1}{2} \|y - \hat{\theta}_1\|^2 - \frac{1}{2}\|y - \hat{\theta}_1 + \hat{\theta}_1 - \theta_k\|^2 + \frac{1}{2}\|y - \hat{\theta}_1 + \hat{\theta}_1 - \hat{\theta}_1 / \|\hat{\theta}_1\| \|^2 \right) dy \\
&\\ 
&= \int_{\R^d} \frac{1}{(2\pi)^{d/2}} \exp\bigg(-\frac{1}{2} \|y - \hat{\theta}_1\|^2 - (y - \hat{\theta}_1)^T (\hat{\theta}_1 - \theta_k) - \frac{1}{2} \|\hat{\theta}_1 - \theta_k\|^2 + (y - \hat{\theta}_1)^T (\hat{\theta}_1 - \hat{\theta}_1/\|\hat{\theta}_1\|) + \\
&\hspace*{10em} \frac{1}{2}\|\hat{\theta}_1 - \hat{\theta}_1/\|\hat{\theta}_1\|\|^2 \bigg) dy \\
&= \exp\left( \frac{1}{2}\|\hat{\theta}_1 - \hat{\theta}_1/\|\hat{\theta}_1\|\|^2 - \frac{1}{2} \|\hat{\theta}_1 - \theta_k\|^2 \right) \int_{\R^d} \frac{1}{(2\pi)^{d/2}} \exp\left(-\frac{1}{2} \|y - \hat{\theta}_1\|^2 + (y - \hat{\theta}_1)^T (\theta_k - \hat{\theta}_1 / \|\hat{\theta}_1\|) \right) dy~.
\end{align*}
}
Reinterpreting the integral as an expectation, we see
{\small
\begin{align*}
\E_{\hat{\theta}_1}&\left\{\frac{p_{\theta_k}(Y)}{p_{\hat{\theta}_1 / \|\hat{\theta}_1\|}(Y)} \right\} \\
&= \exp\left( \frac{1}{2}\|\hat{\theta}_1 - \hat{\theta}_1/\|\hat{\theta}_1\|\|^2 - \frac{1}{2} \|\hat{\theta}_1 - \theta_k\|^2 \right) \E_{\hat{\theta}_1}\left[\exp\left\{(Y - \hat{\theta}_1)^T (\theta_k - \hat{\theta}_1 / \|\hat{\theta}_1\|) \right\} \right] \\
&= \exp\left( \frac{1}{2}\|\hat{\theta}_1 - \hat{\theta}_1/\|\hat{\theta}_1\|\|^2 - \frac{1}{2} \|\hat{\theta}_1 - \theta_k\|^2 - \hat{\theta}_1^T (\theta_k - \hat{\theta}_1/\|\hat{\theta}_1\|) \right) \E_{\hat{\theta}_1}\left[\exp\left\{(\theta_k - \hat{\theta}_1 / \|\hat{\theta}_1\|)^T Y \right\} \right] \\
&= \exp\left( \frac{1}{2}\|\hat{\theta}_1 - \hat{\theta}_1/\|\hat{\theta}_1\|\|^2 - \frac{1}{2} \|\hat{\theta}_1 - \theta_k\|^2 - \hat{\theta}_1^T (\theta_k - \hat{\theta}_1/\|\hat{\theta}_1\|) \right) \exp\left\{\hat{\theta}_1^T (\theta_k - \hat{\theta}_1/\|\hat{\theta}_1\|) + \frac{1}{2} \|\theta_k - \hat{\theta}_1/\|\hat{\theta}_1\| \|^2 \right\} \\
&= \exp\left( \frac{1}{2}\|\hat{\theta}_1 - \hat{\theta}_1/\|\hat{\theta}_1\|\|^2 - \frac{1}{2} \|\hat{\theta}_1 - \theta_k\|^2 + \frac{1}{2} \|\theta_k - \hat{\theta}_1/\|\hat{\theta}_1\| \|^2 \right) \\
&= \exp\bigg(\frac{1}{2} \|\hat{\theta}_1\|^2  - \hat{\theta}_1^T \hat{\theta}_1 / \|\hat{\theta}_1\| + \frac{1}{2}\hat{\theta}_1^T \hat{\theta}_1 / \|\hat{\theta}_1\|^2 - \frac{1}{2} \|\hat{\theta}_1\|^2 + \hat{\theta}_1^T \theta_k - \frac{1}{2}\|\theta_k\|^2 + \\ 
&\hspace*{4em} \frac{1}{2}\|\theta_k\|^2 - \theta_k^T \hat{\theta}_1 / \|\hat{\theta}_1\| + \frac{1}{2} \hat{\theta}_1^T \hat{\theta}_1 / \|\hat{\theta}_1\|^2 \bigg) \\
&= \exp\left(\hat{\theta}_1^T \hat{\theta}_1 / \|\hat{\theta}_1\|^2 - \hat{\theta}_1^T \hat{\theta}_1 / \|\hat{\theta}_1\| - \theta_k^T \hat{\theta}_1 / \|\hat{\theta}_1\| + \hat{\theta}_1^T \theta_k  \right) \\
&= \exp\left\{(\hat{\theta}_1 / \|\hat{\theta}_1\| - \hat{\theta}_1)^T (\hat{\theta}_1 / \|\hat{\theta}_1\| - \theta_k) \right\} \\
&\leq \exp(0) \stepcounter{equation}\tag{\theequation}\label{eq:ripr_proof_3} \\
&= 1~.
\end{align*}
}
To justify (\ref{eq:ripr_proof_3}), note that 
\begin{equation*}
(\hat{\theta}_1 / \|\hat{\theta}_1\| - \hat{\theta}_1)^T (\hat{\theta}_1 / \|\hat{\theta}_1\| - \theta_k) = \Big\|\hat{\theta}_1 / \|\hat{\theta}_1\| - \hat{\theta}_1\Big\| \Big\|\hat{\theta}_1 / \|\hat{\theta}_1\| - \theta_k\Big\| \text{cos}(\gamma)~, 
\end{equation*}
where $\gamma$ is the angle between $\hat{\theta}_1 / \|\hat{\theta}_1\| - \hat{\theta}_1$ and $\hat{\theta}_1 / \|\hat{\theta}_1\| - \theta_k$. Recall that the outer border of $\Theta_0$ is a sphere, $\|\theta_k\| \in [0.5, 1]$, $\|\hat{\theta}_1\| > 1$, and $\hat{\theta}_1 / \|\hat{\theta}_1\|$ is on the outer border of $\Theta_0$. Thus, $\gamma$ will always be between $90^\circ$ and $270^\circ$. (See Figure~\ref{fig:circles_annotated}.) This implies that $(\hat{\theta}_1 / \|\hat{\theta}_1\| - \hat{\theta}_1)^T (\hat{\theta}_1 / \|\hat{\theta}_1\| - \theta_k) \leq 0$.
\end{proof}

\begin{figure}
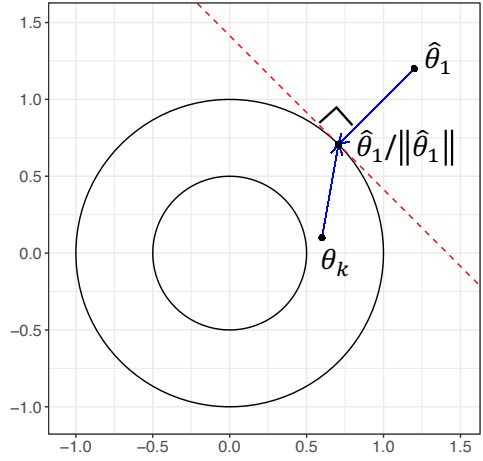

\figuresize{.35}
\figurebox{20pc}{25pc}{}[SuppFigure1.pdf]
\caption{Lemma~\ref{lemma:ripr} companion diagram. The angle between $\hat{\theta}_1 / \|\hat{\theta}_1\| - \hat{\theta}_1$ and $\hat{\theta}_1 / \|\hat{\theta}_1\| - \theta_k$ must be between $90^\circ$ and $270^\circ$.}
\label{fig:circles_annotated}
\end{figure}

\begin{lemma} \label{lemma:Rn_leq_1}
Assume the doughnut null test setting. Let $R_n = \prod_{Y_i \in \D_0} \{p_{\hat{\theta}_1}(Y_i) / p_{\hat{\theta}_1 / \|\hat{\theta}_1\|}(Y_i) \}$. If $\theta^*\in\Theta_0$, then $\E_{\theta^*}\{R_n \one(\|\bar{Y}_1\| > 1) \mid \D_1 \} \leq \one(\|\bar{Y}_1\| > 1).$
\end{lemma}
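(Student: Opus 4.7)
The plan is to condition on $\mathcal{D}_1$, pull out the $\mathcal{D}_1$-measurable indicator, and then on the event $\{\|\bar{Y}_1\| > 1\}$ apply the RIPR property established in Lemma~\ref{lemma:ripr}. Since $\hat{\theta}_1 = \bar{Y}_1$ and the indicator $\one(\|\bar{Y}_1\| > 1)$ are functions of $\mathcal{D}_1$ alone, and since $\mathcal{D}_0 \perp \mathcal{D}_1$, we can write
\begin{equation*}
\E_{\theta^*}\{R_n \one(\|\bar{Y}_1\| > 1) \mid \mathcal{D}_1\} = \one(\|\bar{Y}_1\| > 1) \cdot \E_{\theta^*}\{R_n \mid \mathcal{D}_1\}.
\end{equation*}
If $\|\bar{Y}_1\| \leq 1$ the indicator vanishes and the inequality holds trivially, so it suffices to bound $\E_{\theta^*}\{R_n \mid \mathcal{D}_1\}$ by $1$ on the event $\{\|\bar{Y}_1\| > 1\}$.

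On this event, Lemma~\ref{lemma:ripr} identifies $p_{\hat{\theta}_1/\|\hat{\theta}_1\|}$ as the RIPR $p_0^*$ of $p_{\hat{\theta}_1}$ onto $\mathcal{P}_{\Theta_0}$. Using the conditional iid-ness of the $Y_i \in \mathcal{D}_0$ given $\mathcal{D}_1$,
\begin{equation*}
\E_{\theta^*}\{R_n \mid \mathcal{D}_1\} = \prod_{Y_i \in \mathcal{D}_0} \E_{\theta^*}\!\left\{\frac{p_{\hat{\theta}_1}(Y_i)}{p_{\hat{\theta}_1/\|\hat{\theta}_1\|}(Y_i)} \:\Big|\: \mathcal{D}_1\right\}.
\end{equation*}
For each factor, I would perform the standard symmetry swap
\begin{equation*}
\E_{\theta^*}\!\left\{\frac{p_{\hat{\theta}_1}(Y)}{p_0^*(Y)} \:\Big|\: \mathcal{D}_1\right\} = \int \frac{p_{\theta^*}(y) p_{\hat{\theta}_1}(y)}{p_0^*(y)}\,dy = \E_{\hat{\theta}_1}\!\left\{\frac{p_{\theta^*}(Y)}{p_0^*(Y)} \:\Big|\: \mathcal{D}_1\right\},
\end{equation*}
and then invoke the defining RIPR inequality (Lemma~4.1 of \cite{li1999estimation}): since $\theta^* \in \Theta_0$, the right-hand expectation is at most $1$.

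Multiplying these bounds over $Y_i \in \mathcal{D}_0$ gives $\E_{\theta^*}\{R_n \mid \mathcal{D}_1\} \leq 1$ on $\{\|\bar{Y}_1\| > 1\}$, which combined with the first display yields the claim. The one substantive step is the swap identity and the invocation of the RIPR bound with the correct reference measure; everything else is bookkeeping about what is measurable with respect to $\mathcal{D}_1$ versus $\mathcal{D}_0$. I do not anticipate any real obstacle beyond taking care that the RIPR identification in Lemma~\ref{lemma:ripr} genuinely applies on the event $\{\|\bar{Y}_1\| > 1\}$ (which forces $\|\hat{\theta}_1\| > 1$, the hypothesis of that lemma).
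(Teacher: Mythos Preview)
Your proposal is correct and matches the paper's own proof essentially step for step: split on the value of $\one(\|\bar{Y}_1\|>1)$, invoke Lemma~\ref{lemma:ripr} on the nontrivial event, factor the product by conditional independence, and bound each factor by $1$ via the RIPR inequality after the symmetry swap $\E_{\theta^*}\{p_{\hat\theta_1}/p_0^*\}=\E_{\hat\theta_1}\{p_{\theta^*}/p_0^*\}$. The paper states the swap and RIPR bound in the main text (Approach~3) and simply cites it in the proof, whereas you spell it out explicitly; otherwise the arguments are identical.
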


\begin{proof}
If $\D_1$ satisfies $\|\bar{Y}_1\| \leq 1$, then
\begin{equation*}
\E_{\theta^*}\{R_n \one(\|\bar{Y}_1\| > 1) \mid \D_1 \} = 0 = \one(\|\bar{Y}_1\| > 1)~.
\end{equation*}

Now suppose $\D_1$ satisfies $\|\bar{Y}_1\| > 1$. Then $\|\hat{\theta}_1\| > 1$, and $p_{\hat{\theta}_1 / \|\hat{\theta}_1\|}$ is the RIPR of $p_{\hat{\theta}_1}$ onto the convex set of densities $\mathcal{P}_{\Theta_0}$, as proved in Lemma~\ref{lemma:ripr}. Since $\theta^*\in\Theta_0$, $\hat{\theta}_1\in\Theta_1$, and $p_{\hat{\theta}_1 / \|\hat{\theta}_1\|}$ is the RIPR of $p_{\hat{\theta}_1}$ onto $\mathcal{P}_{\Theta_0}$, we know $\E_{\theta^*}\{p_{\hat{\theta}_1}(Y) / p_{\hat{\theta}_1 / \|\hat{\theta}_1\|}(Y)\} \leq 1$, as explained under \textit{Approach 3: Subsampled hybrid LRT} in the main text. So
\begin{align*}
\E_{\theta^*}\{R_n \one(\|\bar{Y}_1\| > 1) \mid \D_1 \} &= \E_{\theta^*}\left[\prod_{Y_i \in \D_0} \{p_{\hat{\theta}_1}(Y_i) / p_{\hat{\theta}_1 / \|\hat{\theta}_1\|}(Y_i) \}  \right] \\
&\overset{\text{iid}}{=} \prod_{i=1}^{n/2} \E_{\theta^*}\left\{ p_{\hat{\theta}_1}(Y_i) / p_{\hat{\theta}_1 / \|\hat{\theta}_1\|}(Y_i) \right\} \\
&\leq 1 \\
&= \one(\|\bar{Y}_1\| > 1)~.
\end{align*}
\end{proof}

\begin{theorem}
In the doughnut null hypothesis test setting, assume the subsampled test statistics $U_{n,b} = \L_{0,b}(\hat{\theta}_{1,b}) \: / \: \L_{0,b}(\hat{\theta}_{0,b}^\spl)$ and $R_{n,b} = \L_{0,b}(\hat{\theta}_{1,b}) \: / \: \L_{0,b}(\hat{\theta}_{0,b}^\RIPR)$, $1\leq b\leq B$. The test that rejects $H_0$ when $$\frac{1}{B} \sum_{b=1}^B \left\{U_{n,b} \one(\|\bar{Y}_{1,b}\| < 0.5) + \one(\|\bar{Y}_{1,b}\| \in [0.5, 1]) + R_{n,b} \one(\|\bar{Y}_{1,b}\| > 1) \right\} \geq 1/\alpha$$ is a valid level $\alpha$ test.
\end{theorem}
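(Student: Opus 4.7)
The plan is to invoke Markov's inequality on the averaged subsampled statistic, so it suffices to show that under $H_0$ the expectation of the aggregate statistic is bounded by $1$. By linearity of expectation over the $B$ subsamples, this reduces to proving, for every single split into $\D_{0,b}$ and $\D_{1,b}$, that
\begin{equation*}
\E_{\theta^*}\!\left\{ U_{n,b}\,\one(\|\bar{Y}_{1,b}\| < 0.5) + \one(\|\bar{Y}_{1,b}\| \in [0.5, 1]) + R_{n,b}\,\one(\|\bar{Y}_{1,b}\| > 1) \right\} \leq 1.
\end{equation*}
I would handle this by conditioning on $\D_{1,b}$, noting that each indicator $\one(\|\bar{Y}_{1,b}\| \in A)$ is $\D_{1,b}$-measurable and can therefore be pulled outside the inner conditional expectation.

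With $\D_{1,b}$ fixed, exactly one of the three indicators is nonzero. I would then handle the three cases separately. In the middle case $\|\bar{Y}_{1,b}\| \in [0.5,1]$, the contribution is deterministically $1$, so the inner expectation equals $\one(\|\bar{Y}_{1,b}\| \in [0.5, 1])$. In the case $\|\bar{Y}_{1,b}\| < 0.5$, the relevant term is $U_{n,b} = \mathcal{L}_{0,b}(\hat{\theta}_{1,b})/\mathcal{L}_{0,b}(\hat{\theta}_{0,b}^{\spl})$; since $\theta^* \in \Theta_0$ implies $\mathcal{L}_{0,b}(\hat{\theta}_{0,b}^{\spl}) \geq \mathcal{L}_{0,b}(\theta^*)$, the same computation used in the proof of Theorem~\ref{thm:valid} (applied conditionally on $\D_{1,b}$) gives $\E_{\theta^*}\{U_{n,b} \mid \D_{1,b}\} \leq 1$. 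In the case $\|\bar{Y}_{1,b}\| > 1$, the term is $R_{n,b}$, and Lemma~\ref{lemma:Rn_leq_1} directly yields $\E_{\theta^*}\{R_{n,b}\,\one(\|\bar{Y}_{1,b}\| > 1) \mid \D_{1,b}\} \leq \one(\|\bar{Y}_{1,b}\| > 1)$.

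Combining the three cases, the inner conditional expectation is bounded above by $\one(\|\bar{Y}_{1,b}\| < 0.5) + \one(\|\bar{Y}_{1,b}\| \in [0.5, 1]) + \one(\|\bar{Y}_{1,b}\| > 1) = 1$. Taking outer expectation and then averaging over $b$ gives expected value at most $1$ for the aggregate statistic, and Markov's inequality finishes the proof that the test has level $\alpha$.

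The only real subtlety will be the split-LRT case when $\|\bar{Y}_{1,b}\| < 0.5$, where I must be careful that $\hat{\theta}_{0,b}^{\spl}$ is defined as the constrained MLE within the nonconvex set $\Theta_0$ (the doughnut), so that $\mathcal{L}_{0,b}(\hat{\theta}_{0,b}^{\spl}) \geq \mathcal{L}_{0,b}(\theta^*)$ still holds under $H_0$; the projection formula in Table~\ref{tab:subsample_split_ripr} confirms this. Everything else is a clean decomposition, so this is the main conceptual step rather than a technical obstacle.
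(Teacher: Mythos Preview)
Your proposal is correct and follows essentially the same route as the paper's proof: Markov's inequality on the averaged statistic, linearity over $b$, conditioning on $\D_{1,b}$ so the indicators become deterministic, bounding $U_{n,b}$ by $\mathcal{L}_{0,b}(\hat{\theta}_{1,b})/\mathcal{L}_{0,b}(\theta^*)$ via $\theta^*\in\Theta_0$ and then invoking Theorem~\ref{thm:valid}, and invoking Lemma~\ref{lemma:Rn_leq_1} for the RIPR case. The paper writes the full chain of inequalities explicitly and finishes by noting the three indicator expectations sum to $\P_{\theta^*}(\|\bar{Y}_1\|<0.5)+\P_{\theta^*}(\|\bar{Y}_1\|\in[0.5,1])+\P_{\theta^*}(\|\bar{Y}_1\|>1)=1$, which is exactly your ``inner conditional expectation $\leq 1$'' step followed by outer expectation.
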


\begin{proof}
Assume $\theta^* \in \Theta_0$. The probability of falsely rejecting $H_0$ is
{\small
\begin{align*}
\P_{\theta^*}&\left[\frac{1}{B} \sum_{b=1}^B \left\{U_{n,b} \one(\|\bar{Y}_{1,b}\| < 0.5) + \one(\|\bar{Y}_{1,b}\| \in [0.5, 1]) + R_{n,b} \one(\|\bar{Y}_{1,b}\| > 1) \right\} \geq 1/\alpha \right] \\
&\leq \alpha \E_{\theta^*}\left[\frac{1}{B} \sum_{b=1}^B \left\{U_{n,b} \one(\|\bar{Y}_{1,b}\| < 0.5) + \one(\|\bar{Y}_{1,b}\| \in [0.5, 1]) + R_{n,b} \one(\|\bar{Y}_{1,b}\| > 1) \right\} \right] \\
&\leq \alpha \E_{\theta^*}\left[\frac{1}{B} \sum_{b=1}^B \left\{T_{n,b}(\theta^*) \one(\|\bar{Y}_{1,b}\| < 0.5) + \one(\|\bar{Y}_{1,b}\| \in [0.5, 1]) + R_{n,b} \one(\|\bar{Y}_{1,b}\| > 1) \right\} \right] \stepcounter{equation}\tag{\theequation}\label{eq:hybrid_valid_1} \\
&= \alpha \E_{\theta^*} \left\{T_{n}(\theta^*) \one(\|\bar{Y}_1\| < 0.5) + \one(\|\bar{Y}_{1}\| \in [0.5, 1]) + R_n \one(\|\bar{Y}_{1}\| > 1) \right\} \\
&= \alpha \E_{\theta^*}\left[ \E_{\theta^*}\left\{T_{n}(\theta^*) \one(\|\bar{Y}_{1}\| < 0.5) \mid \D_1 \right\} \right] + \alpha\P_{\theta^*}(\|\bar{Y}_1\| \in [0.5, 1]) + \alpha \E_{\theta^*}\left[\E_{\theta^*} \left\{R_{n} \one(\|\bar{Y}_{1}\| > 1) \mid \D_1 \right\} \right]   \\
&\leq \alpha \E_{\theta^*}\left[\one(\|\bar{Y}_{1}\| < 0.5) \E_{\theta^*}\left\{T_{n}(\theta^*)  \mid \D_1 \right\} \right] + \alpha\P_{\theta^*}(\|\bar{Y}_1\| \in [0.5, 1]) + \alpha \E_{\theta^*}\{\one(\|\bar{Y}_1\| > 1)\} \stepcounter{equation}\tag{\theequation}\label{eq:hybrid_valid_2} \\
&\leq \alpha \E_{\theta^*}\{\one(\|\bar{Y}_1\| <  0.5)\} + \alpha\P_{\theta^*}(\|\bar{Y}_1\| \in [0.5, 1]) + \alpha \P_{\theta^*}\{\one(\|\bar{Y}_{1}\| > 1) \} \stepcounter{equation}\tag{\theequation}\label{eq:hybrid_valid_3} \\
&= \alpha \left\{\P_{\theta^*}(\|\bar{Y}_1\| < 0.5) + \P_{\theta^*}(\|\bar{Y}_1\| \in [0.5, 1]) + \P_{\theta^*}(\|\bar{Y}_1\| > 1) \right\} \\
&= \alpha~.
\end{align*}
}
(\ref{eq:hybrid_valid_1}) holds because $\hat{\theta}_{0,b}^\spl = \argmax{\theta \in \Theta_0} \L_{0,b}(\theta)$. Since $\theta^* \in \Theta_0$, 
\begin{equation*}
U_{n,b} = \L_{0,b}(\hat{\theta}_{1}) / \L_{0,b}(\hat{\theta}_{0,b}^\spl) \leq \L_{0,b}(\hat{\theta}_{1}) / \L_{0,b}(\theta^*) = T_{n,b}(\theta^*)~.
\end{equation*}
(\ref{eq:hybrid_valid_2}) holds by Lemma~\ref{lemma:Rn_leq_1}. (\ref{eq:hybrid_valid_3}) holds because $\E_{\theta^*}\{T_n(\theta^*) \mid \D_1\} \leq 1$, as established by Theorem~\ref{thm:valid}.
\end{proof}

\section{Derivations of Equations} \label{sec:supp_eq}

\begin{proof}[Derivation of Equation~\ref{eq:C_n_usual}]
The usual likelihood ratio confidence set for $\theta^*\in\R^d$ is given by $$C_n^\LRT(\alpha) = \left\{\theta\in\Theta: 2\log \frac{\L(\bar{Y})}{\L(\theta)} \leq c_{\alpha,d} \right\},$$ where $c_{\alpha, d}$ is the upper $\alpha$ quantile of the $\chi^2_d$ distribution. $\bar{Y}$ is the sample mean of the $Y_i$ observations, and it is also the MLE estimate for $\theta^*$. We re-write this confidence set such that the squared radius of the set is apparent. Note that

\begin{align*}
2\log \frac{\L(\bar{Y})}{\L(\theta)} &= 2\log\left(\frac{\Pi_{i=1}^n \exp\left(-\frac{1}{2} (Y_i - \bar{Y})^T (Y_i - \bar{Y})\right)}{\Pi_{i=1}^n \exp\left(-\frac{1}{2} (Y_i - \theta)^T (Y_i - \theta)\right)} \right) \\
&= 2\log\left( \exp\left(-\frac{1}{2}\sum_{i=1}^n (Y_i - \bar{Y})^T (Y_i - \bar{Y}) + \frac{1}{2}\sum_{i=1}^n (Y_i-\theta)^T (Y_i-\theta) \right) \right) \\
&= - \sum_{i=1}^n (Y_i - \bar{Y})^T (Y_i - \bar{Y}) + \sum_{i=1}^n (Y_i-\theta)^T (Y_i-\theta) \\
&= \sum_{i=1}^n \left(-(Y_i - \bar{Y})^T (Y_i - \bar{Y}) + (Y_i - \bar{Y} + \bar{Y} - \theta)^T (Y_i - \bar{Y} + \bar{Y} - \theta) \right) \\
&= \sum_{i=1}^n \Bigg(-(Y_i - \bar{Y})^T (Y_i - \bar{Y}) + (Y_i-\bar{Y})^T(Y_i - \bar{Y}) + \\
&\hspace*{5em} 2(Y_i-\bar{Y})^T (\bar{Y}-\theta) + (\bar{Y}-\theta)^T (\bar{Y} - \theta) \Bigg) \\
&= n \|\bar{Y} - \theta\|^2~.
\end{align*}
The final step holds because the first two terms cancel and the summation over the third term equals 0. Therefore, $$C_n^\LRT(\alpha) = \left\{\theta\in \Theta: \|\theta - \bar{Y}\|^2 \leq c_{\alpha, d} / n\right\}.$$
This matches the set from equation~\ref{eq:C_n_usual}.
\end{proof}

\begin{proof}[Derivation of Equation~\ref{eq:C_n_split}]
Let $\hat{\theta}_1 = \bar{Y}_1$ be the sample mean of the $n/2$ observations in $\D_1$. Where $$T_n(\theta) = \frac{\L_0(\hat{\theta}_1)}{\L_0(\theta)}~,$$ the universal confidence set using the split likelihood ratio statistic is $$C_n^{\spl}(\alpha) = \left\{\theta\in\Theta : T_n(\theta) < \frac{1}{\alpha} \right\}.$$ We also re-write this confidence set such that the squared radius of the set is apparent. Note that

{\small
\begin{align*}
T_n(\theta) &= \frac{\Pi_{Y_i\in \D_0} \exp\left(-\frac{1}{2} (Y_{i}-\hat{\theta}_1)^T (Y_{i} - \hat{\theta}_1) \right)}{\Pi_{Y_i\in \D_0} \exp\left(-\frac{1}{2} (Y_{i} - \theta)^T (Y_{i}-\theta) \right) } \\
&= \exp\left(\sum_{Y_i\in \D_0} \left(-\frac{1}{2}(Y_{i}-\bar{Y}_1)^T(Y_{i}-\bar{Y}_1) + \frac{1}{2}(Y_{i}-\theta)^T (Y_{i}-\theta) \right) \right) \\
&= \exp\Bigg(\sum_{Y_i\in \D_0} \Bigg(-\frac{1}{2}(Y_{i} - \bar{Y}_0 + \bar{Y}_0 - \bar{Y}_1)^T(Y_{i}- \bar{Y}_0 + \bar{Y}_0 - \bar{Y}_1) + \\
&\hspace*{7em} \frac{1}{2}(Y_{i}- \bar{Y}_0 + \bar{Y}_0-\theta)^T (Y_{i} - \bar{Y}_0 + \bar{Y}_0 -\theta) \Bigg) \Bigg) \\
&= \exp\Bigg( \sum_{Y_i\in \D_0} \Bigg( -\frac{1}{2}\left[(Y_{i}-\bar{Y}_0)^T(Y_{i}-\bar{Y}_0) + 2(Y_{i}-\bar{Y}_0)^T(\bar{Y}_0-\bar{Y}_1) + (\bar{Y}_0-\bar{Y}_1)^T(\bar{Y}_0-\bar{Y}_1) \right] + \\
&\hspace*{8em} \frac{1}{2}\left[(Y_{i}-\bar{Y}_0)^T(Y_{i}-\bar{Y}_0) + 2(Y_{i}-\bar{Y}_0)^T(\bar{Y}_0-\theta) + (\bar{Y}_0-\theta)^T(\bar{Y}_0-\theta) \right] \Bigg) \Bigg) \stepcounter{equation}\tag{\theequation} \label{eq:split_deriv_1} \\ 
&= \exp\left(\sum_{Y_i\in \D_0} \left(-\frac{1}{2}(\bar{Y}_0-\bar{Y}_1)^T(\bar{Y}_0-\bar{Y}_1) + \frac{1}{2}(\bar{Y}_0-\theta)^T(\bar{Y}_0-\theta) \right) \right) \\
&= \exp\left(-\frac{n}{4} \|\bar{Y}_0 - \bar{Y}_1\|^2 + \frac{n}{4} \|\bar{Y}_0 - \theta\|^2 \right). \stepcounter{equation}\tag{\theequation} \label{eq:split_deriv_2} 
\end{align*}
} 
The first and fourth terms of (\ref{eq:split_deriv_1}) cancel, and the cross-product terms equal 0 upon taking the summation. (\ref{eq:split_deriv_2}) holds because $\D_0$ contains $n/2$ elements. Therefore,
\begin{align*}
C_n^{\spl}(\alpha) &= \left\{\theta\in\Theta: T_n(\theta) < \frac{1}{\alpha} \right\} \\
&= \left\{\theta\in\Theta: \exp\left(-\frac{n}{4} \|\bar{Y}_0 - \bar{Y}_1\|^2 + \frac{n}{4} \|\bar{Y}_0 - \theta\|^2 \right) < \frac{1}{\alpha} \right\} \\
&= \left\{\theta\in\Theta: -\frac{n}{4} \|\bar{Y}_0 - \bar{Y}_1\|^2 + \frac{n}{4} \|\bar{Y}_0 - \theta\|^2 < \log\left(\frac{1}{\alpha}\right) \right\} \\
&= \left\{\theta\in\Theta:  \frac{n}{4} \|\bar{Y}_0 - \theta\|^2 < \log\left(\frac{1}{\alpha}\right) + \frac{n}{4} \|\bar{Y}_0 - \bar{Y}_1\|^2 \right\} \\
&= \left\{\theta\in\Theta:  \|\bar{Y}_0 - \theta\|^2 < \frac{4}{n} \log\left(\frac{1}{\alpha}\right) + \|\bar{Y}_0 - \bar{Y}_1\|^2 \right\},
\end{align*}
which concludes our derivation of equation~\ref{eq:C_n_split}.
\end{proof}

\begin{proof}[Derivation of Equation~\ref{eq:sq_rad_split}]
From the definition of $C_n^\spl(\alpha)$, we see that $r^2(C_n^\spl(\alpha)) = (4/n)\log(1/\alpha) + \|\bar{Y}_0 - \bar{Y}_1\|^2$. Let $Y_{0i}$ and $Y_{1i}$, $i = 1, \ldots, n/2$, represent the observations in $\D_0$ and $\D_1$. Note that 
\begin{align*}
\|\bar{Y}_0 - \bar{Y}_1\|^2 &= \left\| \frac{2}{n} \sum_{i=1}^{n/2} (Y_{0i} - Y_{1i}) \right\|^2 = \frac{4}{n} \left\| \frac{1}{\sqrt{n}} \sum_{i=1}^{n/2} (Y_{0i} - Y_{1i}) \right\|^2 \overset{d}{=} \frac{4}{n} \chi^2_d~.
\end{align*}
To see why the last step holds, note that $Y_1, \ldots, Y_n \overset{\text{iid}}{\sim} N(\theta^*, I_d)$. So for any $i$, ${Y_{0i} - Y_{1i} \overset{\text{iid}}{\sim} N(0, 2I_d)}$. Then $\sum_{i=1}^{n/2} (Y_{0i} - Y_{1i}) \overset{\text{iid}}{\sim} N\left(0, \frac{n}{2} (2I_d)\right)$, and $\frac{1}{\sqrt{n}}\sum_{i=1}^{n/2} (Y_{0i} - Y_{1i}) \overset{\text{iid}}{\sim} N(0, I_d)$. This implies that ${r^2(C_n^{\spl}(\alpha)) \overset{d}{=} (4/n)\log(1/\alpha) + (4/n)\chi_d^2}$. Therefore, $\E[r^2(C_n^{\spl}(\alpha))] = (4/n)\log\left(1/\alpha\right) + (4/n)d.$
\end{proof}

\begin{proof}[Derivation of Equation~\ref{eq:ratio_sq_rad_1}]
From equation \ref{eq:split_usual}, we know that 
\begin{align*}
\frac{\E\left[r^2(C_n^{\spl}(\alpha))\right]}{r^2(C_n^\LRT(\alpha))} &= \frac{4\log(1/\alpha) + 4d}{c_{\alpha,d}}~.
\end{align*}
For $d\geq 1$ and $\alpha \in (0,1)$, \cite{inglot2010inequalities} shows the upper bound 
$$c_{\alpha, d} \leq d + 2\log\left(\frac{1}{\alpha} \right) + 2\sqrt{d\log\left(\frac{1}{\alpha} \right)}~.$$
Also, for $d\geq 2$ and $\alpha \leq 0.17$, \cite{inglot2010inequalities} shows the lower bound
$$c_{\alpha, d} \geq d + 2\log\left(\frac{1}{\alpha} \right) - \frac{5}{2}~$$ Combining these facts, we see that for $d\geq 2$ and $\alpha \leq 0.17$,
\begin{align*}
\frac{4\log(1/\alpha) + 4d}{2\log(1/\alpha) + d + 2\sqrt{d\log(1/\alpha)}} \leq \frac{\E\left[r^2(C_n^{\spl}(\alpha))\right]}{r^2(C_n^\LRT(\alpha))} \leq \frac{4\log(1/\alpha) + 4d}{2\log(1/\alpha) + d - 5/2}~.
\end{align*}
This concludes the derivation of equation~\ref{eq:ratio_sq_rad_1}.
\end{proof}

\begin{proof}[Derivation of Equation~\ref{eq:ratio_sq_rad_2}]
From equation~\ref{eq:split_usual}, we know that 
\begin{align*}
\frac{\E\left[r^2(C_n^{\spl}(\alpha))\right]}{r^2(C_n^\LRT(\alpha))} &= \frac{4\log(1/\alpha) + 4d}{c_{\alpha,d}}~.
\end{align*}
The lower bound of equation~\ref{eq:ratio_sq_rad_2} is the same as the lower bound from equation~\ref{eq:ratio_sq_rad_1}. We consider the upper bound. Suppose $d=1$ and $\alpha \leq \exp\left(-\frac{5(1+\sqrt{5})}{4} \right)$. Let $t = -2 + \sqrt{5 + 2\log(1/\alpha)}$. We will show that $c_{\alpha, 1} \geq t^2$ in several steps:
\\[12pt]
\noindent \textit{Step 1:} Show that $t^2 + 4t - 2 < 2\log(1/\alpha)$.
\begin{align*}
t^2 + 4t - 2 &= \left(-2 + \sqrt{5 + 2\log(1/\alpha)} \right)^2 + 4(-2 + \sqrt{5 + 2\log(1/\alpha)}) - 2 \\
&= 4 - 4 \sqrt{5 + 2\log(1/\alpha)} + 5 + 2\log(1/\alpha) - 8 + 4\sqrt{5 + 2\log(1/\alpha)} - 2 \\
&= 2\log(1/\alpha) - 1 \\
&< 2\log(1/\alpha)~.
\end{align*}
\\[12pt]
\noindent \textit{Step 2:} Show that $\log(1/\alpha) > t^2/2 + 2\log(t) + \log(\sqrt{2\pi})$.
Starting with the result from Step 1,
\begin{align*}
\log(1/\alpha) &> \frac{t^2}{2} + 2t - 1 \\
&\geq \frac{t^2}{2} + 2(\log(t) + 1) - 1 \qquad \text{since $t\geq \log(t) + 1$ for $t>0$} \\
&= \frac{t^2}{2} + 2\log(t) + 1 \\
&> \frac{t^2}{2} + 2\log(t) + \log(\sqrt{2\pi})~.
\end{align*}
\\[12pt]
\noindent \textit{Step 3:} Show that $t^2 - 1 \geq t$. We start by showing that $t \geq \frac{1}{2}(1+\sqrt{5})$ follows from our definitions of $t$ and $\alpha$. Since $$\alpha \leq \exp\left(-\frac{5(1+\sqrt{5})}{4}\right),$$ it holds that $$\frac{1}{\alpha} \geq \exp\left(\frac{5(1+\sqrt{5})}{4}\right).$$
Then $8\log(1/\alpha) \geq 10(1+\sqrt{5})$, which implies $4(5+2\log(1/\alpha)) \geq 25 + 10\sqrt{5} + 5$.
Taking the square root of both sides, $$2\sqrt{5 + 2\log(1/\alpha)} \geq 5 + \sqrt{5}.$$ Since we set $t = -2 + \sqrt{5 + 2\log(1/\alpha)}$, this implies $$t \geq \frac{1}{2}(1+\sqrt{5})~.$$

The roots of the convex function $t^2 - t - 1$ are at $t = (1 \pm \sqrt{5})/2$. At $t \geq (1/2)(1 + \sqrt{5})$, we know $t^2 - 1 \geq t$.
\\[12pt]
\noindent \textit{Step 4:} Show that $t^2 \leq c_{\alpha, 1}$. Starting with the results of steps 2 and 3,
\begin{align*}
\log(t^2 - 1) - t^2/2 - \log(\sqrt{2\pi}) &> \log(t^2 - 1) + 2\log(t) + \log(\alpha) \\
&\geq 3\log(t) + \log(\alpha)~.
\end{align*}
Exponentiating, 
\begin{align*}
\left(t^2 - 1\right) \exp\left(-t^2 / 2\right) \left(\frac{1}{\sqrt{2\pi}} \right) &\geq t^3 \alpha~.
\end{align*}
So 
\begin{align*}
\left(\frac{1}{t} - \frac{1}{t^3}\right) \exp\left(-t^2 / 2\right) \left(\frac{1}{\sqrt{2\pi}} \right) &\geq \alpha~.
\end{align*}
If $Z\sim N(0,1)$ and $X = Z^2 \sim \chi^2_1$, then using an inequality on $\P(Z\geq t)$ from Section 2.1 of \cite{pollard2015} and Section 7.1 of \cite{feller1968probability}, 
\begin{align*}
\P(X\geq t^2) = 2\P(Z \geq t) > \P(Z \geq t) \geq \left(\frac{1}{t} - \frac{1}{t^3}\right) \exp\left(-t^2 / 2\right) \left(\frac{1}{\sqrt{2\pi}} \right) \geq \alpha~.
\end{align*}
This implies that $c_{\alpha, 1} \geq t^2 = 2\log(1/\alpha) + 9 - 4\sqrt{5 + 2\log(1/\alpha)}$. Then for $d = 1$ and $\alpha \leq \exp\left(-\frac{5(1+\sqrt{5})}{4} \right),$ 
\begin{align*}
\frac{4\log(1/\alpha) + 4d}{2\log(1/\alpha) + d + 2\sqrt{d\log(1/\alpha)}} \leq \frac{\E[r^2\{C_n^{\spl}(\alpha)\}]}{r^2\{C_n^\LRT(\alpha)\}} \leq \frac{4\log(1/\alpha) + 4d}{2\log(1/\alpha) + 9 - 4\sqrt{5 + 2\log(1/\alpha)}}~.
\end{align*}
Since we are working with $d=1$, we conclude that
\begin{align*}
\frac{4\log(1/\alpha) + 4}{2\log(1/\alpha) + 1 + 2\sqrt{\log(1/\alpha)}} \leq \frac{\E[r^2\{C_n^{\spl}(\alpha)\}]}{r^2\{C_n^\LRT(\alpha)\}} \leq \frac{4\log(1/\alpha) + 4}{2\log(1/\alpha) + 9 - 4\sqrt{5 + 2\log(1/\alpha)}}~,
\end{align*}
as claimed.
\end{proof}

\begin{proof}[Derivation of Equation~\ref{eq:power_usual}]
The classical LRT set is $$C_n^\LRT(\alpha) = \left\{\theta\in\Theta: \|\bar{Y} - \theta\|^2 \leq c_{\alpha,d} \: / \: n\right\},$$ where $c_{\alpha, d}$ is the upper $\alpha$ quantile of the $\chi^2_d$ distribution. Suppose we are testing $H_0: \theta^* = 0$ versus $H_1: \theta^* \neq 0$. The power of the classical LRT at the true $\theta^*$ is thus $$\power(C_n^\LRT(\alpha); \theta^*) = \P_{\theta^*}\left(\|\bar{Y}\|^2 >  c_{\alpha,d} / n \right).$$ We can express the power function of the classical LRT in terms of the CDF of a noncentral $\chi^2$ distribution. Let us denote $\theta^* = (\theta^*_1, \ldots, \theta^*_d)$. We see that 
\begin{align*}
n\|\bar{Y}\|^2 &=  \left\| \frac{1}{\sqrt{n}} \sum_{i=1}^n Y_i \right\|^2 =\sum_{j=1}^d \left(\frac{1}{\sqrt{n}} \sum_{i=1}^n Y_{ij} \right)^2.
\end{align*}
For each dimension $j$, $n^{-1/2} \sum_{i=1}^n Y_{ij}\sim N(\theta^*_j\sqrt{n}, 1)$. So this follows a non-central $\chi^2$ distribution given by
\begin{align*}
n\|\bar{Y}\|^2 &\overset{d}{=} \chi^2\left(df = d, \lambda = \sum_{j=1}^d n(\theta^*_j)^2 \right) \overset{d}{=} \chi^2\left(df = d, \lambda = n \|\theta^*\|^2 \right).
\end{align*}

Let $\Phi(\cdot)$ represent the standard normal CDF. Suppose $X\sim \chi^2(df = d, \lambda = n\|\theta^*\|^2)$. As $d\to\infty$ or as $\lambda \to\infty$, it holds that $$\frac{X - (d + n\|\theta^*\|^2)}{\sqrt{2(d + 2n\|\theta^*\|^2)}} \approx N(0,1).$$ See \cite{chun2009normal}. Using the Normal approximation to the non-central chi-squared CDF, the power of the classical LRT is
\begin{align*}
\power(C_n^\LRT(\alpha); \theta^*) &= \P_{\theta^*}\left(\|\bar{Y}\|^2 >  c_{\alpha,d} / n \right) \\
&= \P_{\theta^*}\left(n\|\bar{Y}\|^2 >  c_{\alpha,d} \right) \\
&= \P_{\theta^*}\left(\frac{n\|\bar{Y}\|^2 - d - n\|\theta^*\|^2}{\sqrt{2(d + 2n\|\theta^*\|^2)}} > \frac{c_{\alpha, d} - d - n\|\theta^*\|^2}{\sqrt{2(d + 2n\|\theta^*\|^2)}} \right) \\
&\approx 1 - \Phi\left(\frac{c_{\alpha, d} - d - n\|\theta^*\|^2}{\sqrt{2(d + 2n\|\theta^*\|^2)}} \right) \\
&= \Phi\left(\frac{d + n\|\theta^*\|^2 - c_{\alpha, d}}{\sqrt{2(d + 2n\|\theta^*\|^2)}} \right).
\end{align*}
This matches the expression from equation~\ref{eq:power_usual}.
\end{proof}

\begin{proof}[Derivation of Equation~\ref{eq:power_subsplit}]
Using methods from the derivation of equation \ref{eq:power_usual}, we can find a representation for the approximate power of the limiting subsampling LRT set as $B \to \infty$. From equation \ref{eq:subsplit_radius}, 
\begin{align*}
C_n^{\subsplit}(\alpha) \approx \left\{\theta\in\Theta: \|\bar{Y}-\theta\|^2 < \frac{10}{3n} \log\left(\left(\frac{5}{2}\right)^{d/2} \frac{1}{\alpha} \right) \right\}
\end{align*}
So the power of the limit of subsampling LRT for a test of $H_0: \theta^* = 0$ versus $H_1: \theta^* \neq 0$ is
\begin{align*}
\power(C_n^{\subsplit}(\alpha); \theta^*) &\approx \P_{\theta^*}\left(n\|\bar{Y}\|^2 \geq \frac{10}{3} \log\left(\left(\frac{5}{2}\right)^{d/2} \frac{1}{\alpha} \right) \right) \\
&= \P_{\theta^*}\left(\frac{n\|\bar{Y}\|^2 - d - n\|\theta^*\|^2}{\sqrt{2(d + 2n\|\theta^*\|^2)}} \geq \frac{(10/3)\log\left((5/2)^{d/2} (1/\alpha)\right) - d - n\|\theta^*\|^2}{\sqrt{2(d + 2n\|\theta^*\|^2)}} \right) \\
&\approx \Phi\left(\frac{1}{\sqrt{2(d + 2n\|\theta^*\|^2)}} \left[d + n\|\theta^*\|^2 - \frac{10}{3} \log\left\{\left(\frac{5}{2}\right)^{d/2} \frac{1}{\alpha} \right\} \right] \right).
\end{align*}
This matches the expression from equation~\ref{eq:power_subsplit}.
\end{proof}

\section{Convexity of confidence sets} \label{sec:convex}

We show that $C_n^\LRT(\alpha)$, $C_n^\spl(\alpha)$, $C_n^\CF(\alpha)$, and $C_n^\subsplit(\alpha)$ are convex sets.

\subsection{$C_n^\LRT(\alpha)$ is a convex set.}

Suppose $\theta_1 \in C_n^{\LRT}(\alpha)$ and $\theta_2 \in C_n^{\LRT}(\alpha)$. Then $\|\theta_1 - \bar{Y}\|^2 \leq c_{\alpha,d}/n$ and $\|\theta_2 - \bar{Y}\|^2 \leq c_{\alpha,d}/n$. Fix $t\in (0,1)$, and let $\theta_3 = t\theta_1 + (1-t)\theta_2$. Since $\|\cdot\|^2$ is convex, we see
\begin{align*}
\|\theta_3 - \bar{Y}\|^2 &= \|t\theta_1 - t\bar{Y} + (1-t)\theta_2 - (1-t)\bar{Y}\|^2 \\
&\leq t \|\theta_1 - \bar{Y}\|^2 + (1-t)  \|\theta_2 - \bar{Y}\|^2 \\
&\leq t \left(c_{\alpha,d}/n\right) + (1-t) \left(c_{\alpha,d}/n\right) \\
&= c_{\alpha,d}/n~.
\end{align*}
We conclude that $\theta_3 \in C_n^{\LRT}(\alpha)$, so $C_n^{\LRT}(\alpha)$ is convex.

\subsection{$C_n^\spl(\alpha)$ and $C_n^\subsplit(\alpha)$ are convex sets.}

Since $C_n^\spl(\alpha)$ is the same as $C_n^\subsplit(\alpha)$ with $B = 1$, it suffices to show that $C_n^\subsplit(\alpha)$ is a convex set. Recall that
\begin{equation*}
C_n^{\subsplit}(\alpha) = \left\{\theta \in \Theta: \frac{1}{B} \sum_{b=1}^{B} \exp\left(-\frac{n}{4}\| \bar{Y}_{0,b} - \bar{Y}_{1,b}\|^2 + \frac{n}{4}\| \bar{Y}_{0,b} - \theta\|^2 \right) < \frac{1}{\alpha} \right\}.
\end{equation*}
Suppose $\theta_1 \in C_n^{\subsplit}(\alpha)$ and $\theta_2 \in C_n^{\subsplit}(\alpha)$. Fix $t\in (0,1)$, and let $\theta_3 = t\theta_1 + (1-t)\theta_2$. We see
{\small
\begin{align*}
\frac{1}{B} \sum_{b=1}^{B} \exp &\left(-\frac{n}{4}\| \bar{Y}_{0,b} - \bar{Y}_{1,b}\|^2 + \frac{n}{4}\| \bar{Y}_{0,b} - \theta_3\|^2 \right) \\
&= \frac{1}{B} \sum_{b=1}^{B} \exp \left(-\frac{n}{4}\| \bar{Y}_{0,b} - \bar{Y}_{1,b}\|^2 + \frac{n}{4}\| \bar{Y}_{0,b} - t \theta_1 - (1-t)\theta_2 \|^2 \right) \\
&= \frac{1}{B} \sum_{b=1}^{B} \exp \left(-\frac{n}{4}\| \bar{Y}_{0,b} - \bar{Y}_{1,b}\|^2 + \frac{n}{4}\| t (\bar{Y}_{0,b} - \theta_1) + (1-t)(\bar{Y}_{0,b} - \theta_2) \|^2 \right) \\
&\leq \frac{1}{B} \sum_{b=1}^{B} \exp \left(-\frac{n}{4}\| \bar{Y}_{0,b} - \bar{Y}_{1,b}\|^2 + \frac{n}{4} t \| \bar{Y}_{0,b} - \theta_1 \|^2 + \frac{n}{4} (1-t) \|\bar{Y}_{0,b} - \theta_2 \|^2 \right)  \stepcounter{equation}\tag{\theequation}\label{eq:convex1} \\
&= \frac{1}{B} \sum_{b=1}^{B} \left[ \exp \left(-\frac{n}{4} t \| \bar{Y}_{0,b} - \bar{Y}_{1,b}\|^2 \right) \exp\left( \frac{n}{4} t \| \bar{Y}_{0,b} - \theta_1 \|^2\right)\right] \times \\
&\hspace{4.5em} \left[ \exp \left(-\frac{n}{4} (1-t) \| \bar{Y}_{0,b} - \bar{Y}_{1,b}\|^2 \right)  \exp \left( \frac{n}{4} (1-t) \|\bar{Y}_{0,b} - \theta_2 \|^2 \right) \right] \\
&\leq \frac{1}{B} \left[\sum_{b=1}^B \exp \left(-\frac{n}{4} \| \bar{Y}_{0,b} - \bar{Y}_{1,b}\|^2 \right) \exp\left( \frac{n}{4} \| \bar{Y}_{0,b} - \theta_1 \|^2\right) \right]^t \times \\
&\hspace{4.5em} \left[\sum_{b=1}^B \exp \left(-\frac{n}{4} \| \bar{Y}_{0,b} - \bar{Y}_{1,b}\|^2 \right)  \exp \left( \frac{n}{4}  \|\bar{Y}_{0,b} - \theta_2 \|^2 \right) \right]^{1-t} \stepcounter{equation}\tag{\theequation}\label{eq:convex2}  \\
&\leq \frac{1}{B} \left(\frac{B}{\alpha}\right)^t \left(\frac{B}{\alpha}\right)^{1-t} \stepcounter{equation}\tag{\theequation}\label{eq:convex3}  \\
&= \frac{1}{\alpha}~.
\end{align*}
}
Inequality (\ref{eq:convex1}) holds because $\|\cdot\|^2$ is convex. Inequality (\ref{eq:convex2}) is due to H{\"o}lder's inequality with $p=\frac{1}{t}$ and $q=\frac{1}{1-t}$. (Hence,  $\frac{1}{p} + \frac{1}{q} = 1$.) Inequality (\ref{eq:convex3}) is true based on the initial assumption that $\theta_1 \in C_n^{\subsplit}(\alpha)$ and $\theta_2 \in C_n^{\subsplit}(\alpha)$. We conclude that $\theta_3 \in C_n^{\subsplit}(\alpha)$, so $C_n^{\subsplit}(\alpha)$ is convex.

\subsection{$C_n^\CF(\alpha)$ is a convex set.}

Recall that
\begin{equation*}
C_n^\CF(\alpha) = \Bigg\{\theta\in\Theta : \exp\left(\frac{n}{4}\|\bar{Y}_0 - \theta\|^2\right) + \exp\left(\frac{n}{4}\|\bar{Y}_1 - \theta\|^2 \right) < \frac{2}{\alpha} \exp\left(\frac{n}{4}\|\bar{Y}_0 - \bar{Y}_1\|^2\right) \Bigg\}~.
\end{equation*}
Suppose $\theta_1 \in C_n^{\CF}(\alpha)$ and $\theta_2 \in C_n^{\CF}(\alpha)$. Fix $t\in (0,1)$, and let $\theta_3 = t\theta_1 + (1-t)\theta_2$. Since \mbox{$\|\cdot\|^2$} and $\exp(\cdot)$ are convex, we see
\begin{align*}
\exp&\left(\frac{n}{4}\|\bar{Y}_0 - \theta_3\|^2\right) + \exp\left(\frac{n}{4}\|\bar{Y}_1 - \theta_3\|^2 \right) \\
&= \exp\left(\frac{n}{4}\|t(\bar{Y}_0 - \theta_1) + (1-t)(\bar{Y}_0 - \theta_2)\|^2\right) + \exp\left(\frac{n}{4}\|t(\bar{Y}_1 - \theta_1) + (1-t)(\bar{Y}_1 - \theta_2)\|^2 \right) \\
&\leq \exp\left(\frac{n}{4} t \|\bar{Y}_0 - \theta_1\|^2 + \frac{n}{4}(1-t) \|\bar{Y}_0 - \theta_2\|^2\right) + \exp\left(\frac{n}{4} t \|\bar{Y}_1 - \theta_1\|^2 + \frac{n}{4}(1-t) \|\bar{Y}_1 - \theta_2\|^2\right) \\
&\leq t \exp\left(\frac{n}{4} \|\bar{Y}_0 - \theta_1\|^2\right) + (1-t) \exp\left(\frac{n}{4} \|\bar{Y}_0 - \theta_2\|^2\right) + \\
&\qquad t \exp\left(\frac{n}{4} \|\bar{Y}_1 - \theta_1\|^2\right) + (1-t) \exp\left( \frac{n}{4} \|\bar{Y}_1 - \theta_2\|^2\right) \\
&= t \left\{\exp\left(\frac{n}{4} \|\bar{Y}_0 - \theta_1\|^2\right) + \exp\left(\frac{n}{4} \|\bar{Y}_1 - \theta_1\|^2\right) \right\} + \\
&\qquad (1-t) \left\{\exp\left(\frac{n}{4} \|\bar{Y}_0 - \theta_2\|^2\right) + \exp\left( \frac{n}{4} \|\bar{Y}_1 - \theta_2\|^2\right) \right\} \\
&< t \left\{\frac{2}{\alpha} \exp\left(\frac{n}{4}\|\bar{Y}_0 - \bar{Y}_1\|^2\right) \right\} + (1-t) \left\{\frac{2}{\alpha} \exp\left(\frac{n}{4}\|\bar{Y}_0 - \bar{Y}_1\|^2\right) \right\} \\
&= \frac{2}{\alpha} \exp\left(\frac{n}{4}\|\bar{Y}_0 - \bar{Y}_1\|^2\right).
\end{align*}
We conclude that $\theta_3 \in C_n^{\CF}(\alpha)$, so $C_n^{\CF}(\alpha)$ is convex.

\section{Simulated cross-fit sets with varying $p_0$} \label{sec:supp_sim_CF}

Under general $p_0$, the cross-fit set is defined as 
\begin{align*}
C_n^{\CF}(\alpha) &= \Bigg\{\theta\in\Theta : \frac{1}{2}\Bigg[ \exp\left(-\frac{np_0}{2}\|\bar{Y}_0 - \bar{Y}_1\|^2 + \frac{np_0}{2}\|\bar{Y}_0 - \theta\|^2 \right) + \\
&\qquad \qquad \exp\left(-\frac{n(1-p_0)}{2} \|\bar{Y}_0 - \bar{Y}_1\|^2 + \frac{n(1-p_0)}{2} \|\bar{Y}_1 - \theta\|^2 \right) \Bigg] < \frac{1}{\alpha} \Bigg\}~. \stepcounter{equation}\tag{\theequation} \label{eq:crossfit_p0} 
\end{align*}
Note that at any $\theta$, the test statistic that defines $C_n^{\CF}(\alpha)$ has a distribution that is symmetric around $p_0 = 0.5$. Hence, the test statistic has the same distribution at $p_0$ and $1-p_0$ for any $p_0 \in (0, 0.5]$. Figure~\ref{fig:crossfit_p0} presents examples of cross-fit sets at $p_0 \in \{0.1, 0.3, 0.5\}$ on a single sample of 1000 observations simulated from a $N(0, I_2)$ distribution. In this example, we see that the region with $p_0 = 0.5$ has the smallest diameter and area.

\begin{figure}
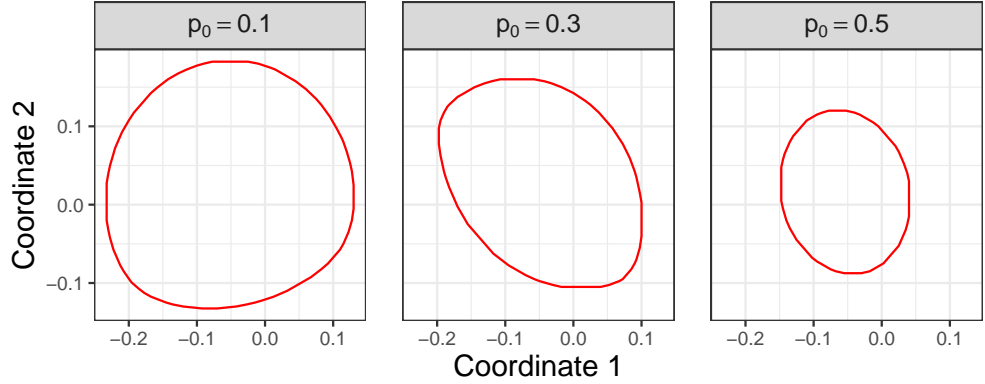

\figuresize{.8}
\figurebox{20pc}{25pc}{}[SuppFigure2.pdf]
\caption{Simulated cross-fit regions at varying $p_0$, using a single data sample.}
\label{fig:crossfit_p0}
\end{figure}

Based on Figure~\ref{fig:crossfit_p0} and the symmetry around $p_0 = 0.5$ in (\ref{eq:crossfit_p0}), we conjecture that $p_0 = 0.5$ will minimize the expected size of the cross-fit sets. We also conduct more extensive simulations at $d = 2$ to study the area of the cross-fit sets. To produce Figure~\ref{fig:CF_area_by_p0}, we simulate 100 datasets of 1000 $N(0, I_2)$ observations. We construct cross-fit sets for $p_0 \in \{0.05, 0.10, \ldots, 0.95\}$ on each dataset. We compute the area of each set by evaluating the cross-fit test statistic over a two-dimensional grid of $\theta$ values, checking which points are inside the set, constructing the convex hull of these points, and computing the area of the convex hull. (From Section~\ref{sec:convex}, recall that the cross-fit set is itself convex.) The average area is minimized at $p_0 = 0.5$.

\begin{figure}
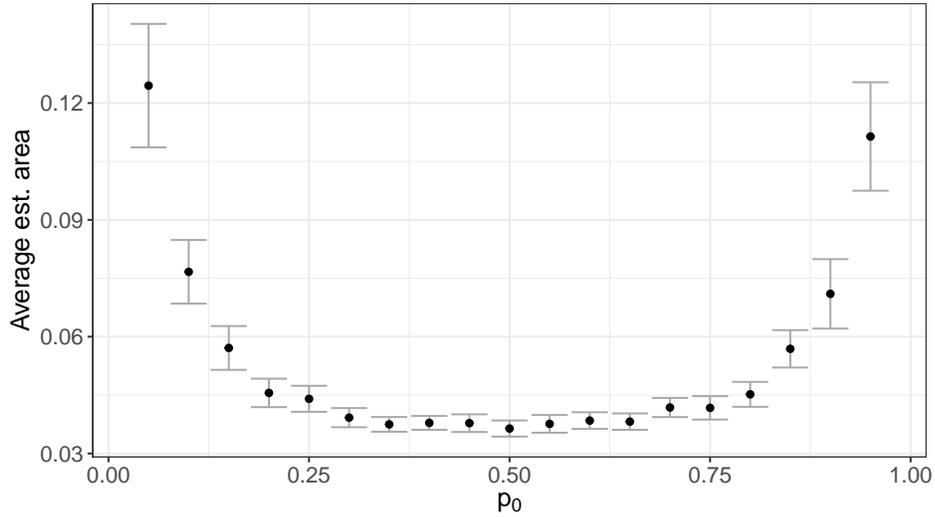

\figuresize{.7}
\figurebox{20pc}{25pc}{}[SuppFigure3.pdf]
\caption{Simulated cross-fit area at $\alpha = 0.1$, $d = 2$, and varying $p_0$. We simulate 100 datasets of 1000 $N(0, I_2)$ observations and construct cross-fit sets for $p_0 \in \{0.05, 0.10, \ldots, 0.95\}$ on each dataset. Where $\hat{\mu}_{p_0}$ is the average simulated area at $p_0$ and $\hat{\sigma}_{p_0}$ is the standard deviation of the simulated area of $p_0$ over 100 simulations, the error bars represent the confidence intervals $[\hat{\mu}_{p_0} - 1.96 \hat{\sigma}_{p_0} / \sqrt{100}, \hat{\mu}_{p_0} + 1.96 \hat{\sigma}_{p_0} / \sqrt{100}]$. Choosing $p_0 = 0.5$ minimizes the average simulated area.}
\label{fig:CF_area_by_p0}
\end{figure}

Figures~\ref{fig:CF_area_by_sq_dist} and \ref{fig:CF_area_by_sq_dist_and_p0} provide another perspective on these simulations by plotting the estimated area against $\|\bar{Y}_0 - \bar{Y}_1\|^2$. Figure~\ref{fig:CF_area_by_sq_dist} aggregates the results across all $p_0$, while Figure~\ref{fig:CF_area_by_sq_dist_and_p0} plots the results separately for each $p_0$. In both settings, we see that smaller values of $p_0$ are associated with smaller estimated areas. By (\ref{eq:p0_chisq}), we know that
$$\E\left[\|\bar{Y}_0 - \bar{Y}_1\|^2\right] = \left(\frac{1}{np_0} + \frac{1}{n(1-p_0)} \right)d~.$$
Then 
$$\frac{\partial}{\partial p_0} \E\left[\|\bar{Y}_0 - \bar{Y}_1\|^2\right] = \frac{-d}{np_0^2} + \frac{d}{n(1-p_0)^2}~,$$
which equals 0 at $p_0 = 0.5$. In addition, for all $p_0 \in (0, 1)$,
$$\frac{\partial^2}{\partial p_0^2} \E\left[\|\bar{Y}_0 - \bar{Y}_1\|^2\right] = \frac{2d}{np_0^3} + \frac{2d}{n(1-p_0)^3} > 0~.$$
Hence $\E\left[\|\bar{Y}_0 - \bar{Y}_1\|^2\right]$ is minimized at $p_0 = 0.5$. We have observed that smaller $\|\bar{Y}_0 - \bar{Y}_1\|^2$ is associated with smaller cross-fit set area at $d=2$, and we have shown that $\E\left[\|\bar{Y}_0 - \bar{Y}_1\|^2\right]$ is minimized at $p_0 = 0.5$. Together, these facts provide additional evidence in favor of the optimality of $p_0 = 0.5$ for the cross-fit case. 

\begin{figure}
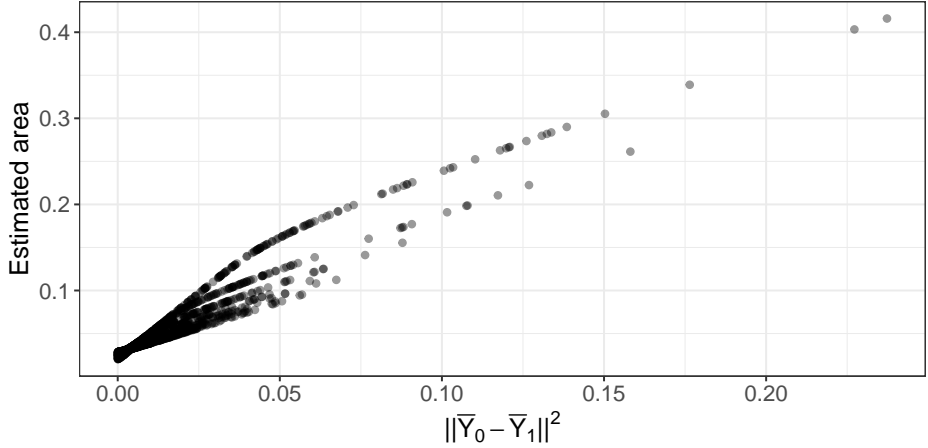

\figuresize{.7}
\figurebox{20pc}{25pc}{}[SuppFigure4.pdf]
\caption{Using the same simulations as Figure~\ref{fig:CF_area_by_p0}, we plot the simulated cross-fit area against \mbox{$\|\bar{Y}_0 - \bar{Y}_1\|^2$.} Hence, this includes simulations across all $p_0 \in \{0.05, 0.10, \ldots, 0.95\}$. Lower values of $\|\bar{Y}_0 - \bar{Y}_1\|^2$ are associated with smaller estimated areas.}
\label{fig:CF_area_by_sq_dist}
\end{figure}

\begin{figure}
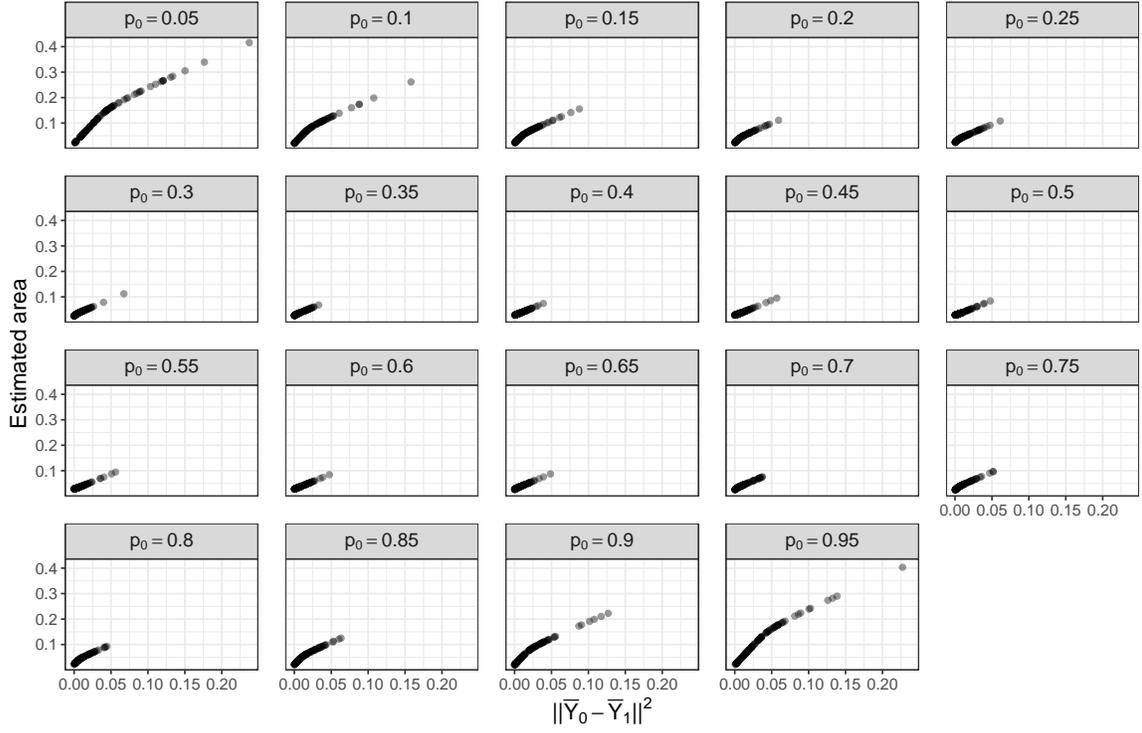

\figuresize{.6}
\figurebox{20pc}{25pc}{}[SuppFigure5.pdf]
\caption{Using the same simulations as Figures~\ref{fig:CF_area_by_p0} and \ref{fig:CF_area_by_sq_dist}, we plot the simulated cross-fit area against $\|\bar{Y}_0 - \bar{Y}_1\|^2$ separately for each $p_0$. Within each choice of $p_0$, simulations with smaller $\|\bar{Y}_0 - \bar{Y}_1\|^2$ also have smaller estimated areas.}
\label{fig:CF_area_by_sq_dist_and_p0}
\end{figure}

\section{Power of Tests of $H_0: \|\theta^*\| \in [0.5, 1]$} \label{sec:supp_intersect_test}

\subsection{Exact Formula for Power of Intersection Test}

In section~\ref{sec:example}, we present hypothesis tests for $H_0: \|\theta^*\| \in [0.5, 1]$ versus $H_1: \|\theta^*\| \notin [0.5, 1]$. The power of the intersection method that we present is tractable. We derive a formula for the intersection method's power at $\theta^*$. From the intersection method's description, we reject $H_0$ if and only if $C_n^\text{LRT}(\alpha) \cap (\S_1 \backslash \S_{0.5}) = \emptyset$, where $C_n^\text{LRT}(\alpha) = \left\{\theta\in\Theta: \|\theta - \bar{Y}\|^2 \leq c_{\alpha,d} / n\right\}$. This is equivalent to rejecting $H_0$ if and only if $\hat{\theta}^{\text{proj}} \notin C_n^\text{LRT}(\alpha)$, where
\[\hat{\theta}^{\text{proj}} = \begin{cases}
0.5 \: \bar{Y} / \|\bar{Y}\| &\text{if  } \|\bar{Y}\| < 0.5 \\
\bar{Y} &\text{if  } \|\bar{Y}\| \in [0.5, 1.0] \\
\bar{Y} / \|\bar{Y}\| &\text{if  } \|\bar{Y}\| > 1
\end{cases}.
\]

In \textbf{Case 2}, we have $\|\bar{Y}\|\in [0.5, 1]$. In this setting, it is always true that $\hat{\theta}^{\text{proj}} = \bar{Y} \in C_n^\text{LRT}(\alpha)$. So we will never reject $H_0$ in this case. We consider \textbf{Case 1} ($\|\bar{Y}\| < 0.5$) and \textbf{Case 3} ($\|\bar{Y}\| > 1$). For $\|\theta^*\|\notin [0.5, 1.0]$, the power is given by 
\begin{align*}
\text{Power}(\theta^*) &= \P_{\theta^*}\left(\left\|\bar{Y}/\|\bar{Y}\| - \bar{Y} \right\|^2 > c_{\alpha, d}/n, \: \|\bar{Y}\| > 1 \right) + \\
&\qquad \P_{\theta^*}\left(\left\|0.5\:\bar{Y}/\|\bar{Y}\| - \bar{Y} \right\|^2 > c_{\alpha, d}/n, \: \|\bar{Y}\| < 0.5 \right).
\end{align*} 
We know that $n\|\bar{Y}\|^2 \sim \chi^2(df = d, \lambda = n\|\theta^*\|^2)$. We will use this fact to write $\text{Power}(\theta^*)$ in terms of this non-central $\chi^2$ CDF.
\\[12pt]
\textbf{Case 1.} Note that 
\begin{align*}
\left\|0.5\:\bar{Y}/\|\bar{Y}\| - \bar{Y} \right\|^2 &= \frac{\bar{Y}^T \bar{Y}}{4\|\bar{Y}\|^2} - 2 \frac{\bar{Y}^T \bar{Y}}{2\|\bar{Y}\|} + \|\bar{Y}\|^2 = \frac{1}{4} - \|\bar{Y}\| + \|\bar{Y}\|^2 = \left(\|\bar{Y}\|- \frac{1}{2} \right)^2.
\end{align*}
Then we write
\begin{align*}
\P_{\theta^*}&\left(\left\|0.5\:\bar{Y}/\|\bar{Y}\| - \bar{Y} \right\|^2 > c_{\alpha, d}/n, \: \|\bar{Y}\| < 1/2 \right) \\
&= \P_{\theta^*}\left(\left(\|\bar{Y}\|-1/2 \right)^2 > c_{\alpha,d}/n, \:\|\bar{Y}\| < 1/2 \right) \\
&= \P_{\theta^*}\left(1/2 - \|\bar{Y}\| > (c_{\alpha,d}/n)^{1/2},\: \|\bar{Y}\| < 1/2 \right) \\
&= \P_{\theta^*}\left(\|\bar{Y}\| < 1/2 - (c_{\alpha,d}/n)^{1/2},\: \|\bar{Y}\| < 1/2 \right) \\
&= \P_{\theta^*}\left(\|\bar{Y}\| < 1/2 - (c_{\alpha,d}/n)^{1/2} \right) \\
&= \one\left(c_{\alpha, d}/n < 1/4 \right) \P_{\theta^*}\left(\|\bar{Y}\| < 1/2 - (c_{\alpha,d}/n)^{1/2}\right) \\
&= \one\left(n > 4c_{\alpha, d} \right) \P_{\theta^*}\left(\|\bar{Y}\|^2 < 1/4 - \sqrt{c_{\alpha,d}/n} + c_{\alpha,d}/n\right) \\
&= \one\left(n > 4c_{\alpha, d} \right) \P_{\theta^*}\left(n\|\bar{Y}\|^2 < n/4 - \sqrt{nc_{\alpha,d}} + c_{\alpha,d}\right) \\
&= \one\left(n > 4c_{\alpha, d} \right) F_{d, n\|\theta^*\|^2}\left(n/4 - \sqrt{nc_{\alpha,d}} + c_{\alpha,d}\right)~, \stepcounter{equation}\tag{\theequation} \label{eq:intersect_case3} 
\end{align*}
where $F_{d, n\|\theta^*\|^2}$ is the non-central $\chi^2(df = d, \lambda = n\|\theta^*\|^2)$ CDF.
\\[12pt]
\textbf{Case 3.} Note that 
\begin{align*}
\left\|\bar{Y}/\|\bar{Y}\| - \bar{Y} \right\|^2 &= \frac{\bar{Y}^T \bar{Y}}{\|\bar{Y}\|^2} - 2 \frac{\bar{Y}^T \bar{Y}}{\|\bar{Y}\|} + \|\bar{Y}\|^2 = 1 - 2\|\bar{Y}\| + \|\bar{Y}\|^2 = \left(\|\bar{Y}\|-1 \right)^2.
\end{align*}
Then we write
\begin{align*}
\P_{\theta^*}&\left(\left\|\bar{Y}/\|\bar{Y}\| - \bar{Y} \right\|^2 > c_{\alpha, d}/n, \: \|\bar{Y}\| > 1 \right) \\
&= \P_{\theta^*}\left(\left(\|\bar{Y}\|-1 \right)^2 > c_{\alpha,d}/n, \: \|\bar{Y}\|^2 > 1 \right) \\
&= \P_{\theta^*}\left(\|\bar{Y}\|-1 > (c_{\alpha,d}/n)^{1/2}, \: \|\bar{Y}\|^2 > 1 \right) \\
&= \P_{\theta^*}\left(\|\bar{Y}\| > 1 + (c_{\alpha,d}/n)^{1/2},\: \|\bar{Y}\|^2 > 1 \right) \\
&= \P_{\theta^*}\left(\|\bar{Y}\|^2 > 1 + (2/\sqrt{n})c_{\alpha,d}^{1/2} + c_{\alpha,d}/n,\: \|\bar{Y}\|^2 > 1 \right) \\
&= \P_{\theta^*}\left(\|\bar{Y}\|^2 > 1 + (2/\sqrt{n})c_{\alpha,d}^{1/2} + c_{\alpha,d}/n \right) \\
&= \P_{\theta^*}\left(n\|\bar{Y}\|^2 > n + 2\sqrt{n c_{\alpha,d}} + c_{\alpha,d} \right) \\
&= 1 - F_{d, n\|\theta^*\|^2}(n +   2\sqrt{n c_{\alpha,d}} + c_{\alpha,d}), \stepcounter{equation}\tag{\theequation} \label{eq:intersect_case2} 
\end{align*}
where $F_{d, n\|\theta^*\|^2}$ is the non-central $\chi^2(df = d, \lambda = n\|\theta^*\|^2)$ CDF.

For a given $\|\theta^*\|\notin [0.5, 1]$, our calculation of $\text{Power}(\theta^*)$ is given by (\ref{eq:intersect_case2}) + (\ref{eq:intersect_case3}). That is, 
\begin{align*}
\text{Power}(\theta^*) &= 1 - F_{d, n\|\theta^*\|^2}(n +   2\sqrt{n c_{\alpha,d}} + c_{\alpha,d}) +  \\
&\qquad \one\left(n > 4c_{\alpha, d} \right) F_{d, n\|\theta^*\|^2}\left(n/4 - \sqrt{nc_{\alpha,d}} + c_{\alpha,d}\right).
\end{align*}

Figure~\ref{fig:power_intersect} compares this calculated power to the simulated power of the intersection method from Figure~\ref{fig:power_interval}. The points correspond to the simulated power, and the curves trace out the calculated power. The calculated and simulated powers align.

\begin{figure}
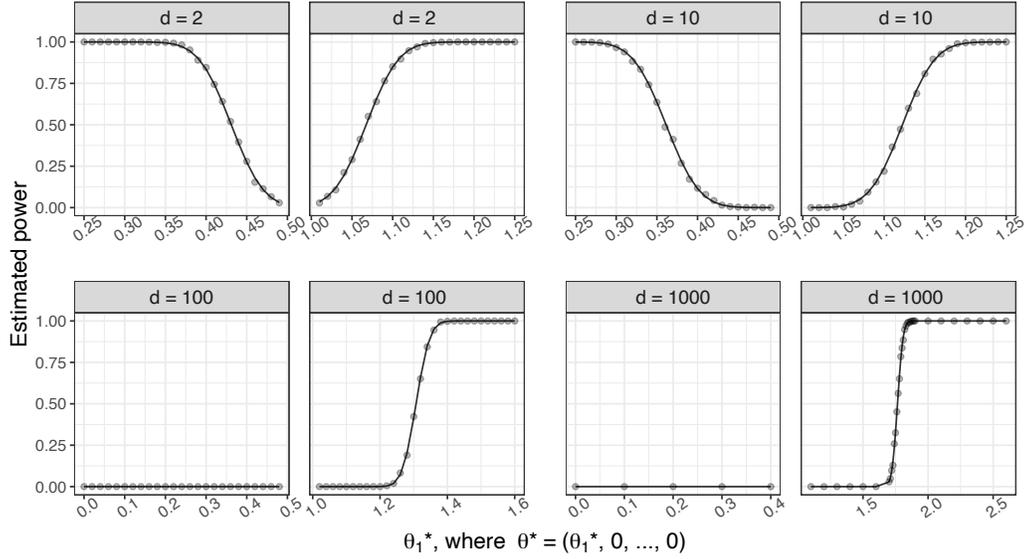

\figuresize{.6}
\figurebox{20pc}{25pc}{}[SuppFigure6.pdf]
\caption{Calculated power of $H_0: \|\theta^*\| \in [0.5, 1.0]$ versus $H_1: \|\theta^*\| \notin [0.5, 1.0]$ using the intersection method. We compare the simulated power to the calculation (\ref{eq:intersect_case2}) + (\ref{eq:intersect_case3}). The points correspond to the simulated power, and the curves trace out the calculated power.}
\label{fig:power_intersect}
\end{figure}

\subsection{Cases of the Subsampled Hybrid LRT}

The subsampled hybrid test of $H_0: \|\theta^*\| \in [0.5, 1]$ versus $H_1: \|\theta^*\| \notin [0.5, 1]$ takes one of three approaches within each repeated subsample:
\begin{enumerate}
\item If $\|\bar{Y}_{1,b}\| < 0.5$, use the split LRT statistic $U_n$ on the $b^{th}$ subsample.
\item If $\|\bar{Y}_{1,b}\| \in [0.5, 1]$, set the $b^{th}$ subsample's test statistic to 1.
\item If $\|\bar{Y}_{1,b}\| > 1$, use the RIPR LRT statistic $R_n$ on the $b^{th}$ subsample.
\end{enumerate}    

Figure~\ref{fig:case_props} shows the proportion of these three cases that make up the hybrid test. We consider all $\|\theta^*\|$ values from Figure~\ref{fig:power_interval} of the main paper, as well as cases where $\|\theta^*\|$ is within the null region. At any given value of $d$ and $\|\theta^*\|$, the three proportions sum to 1. Interestingly, although $\|\bar{Y}_{1,b}\| < 0.5$ approximately 95\% of the time when $\|\theta^*\| = 0$ and $d = 100$, the hybrid test has approximately zero power at that choice of parameters. We derive this fact in section~\ref{sec:hybrid_power_d100}. In addition, when $d = 1000$ we see that $\|\bar{Y}_{1,b}\| > 1$ in all simulations, even at $\theta^* = 0$. In section~\ref{sec:hybrid_power_d1000}, we see why  this setting has approximately zero power as well.

\begin{figure}
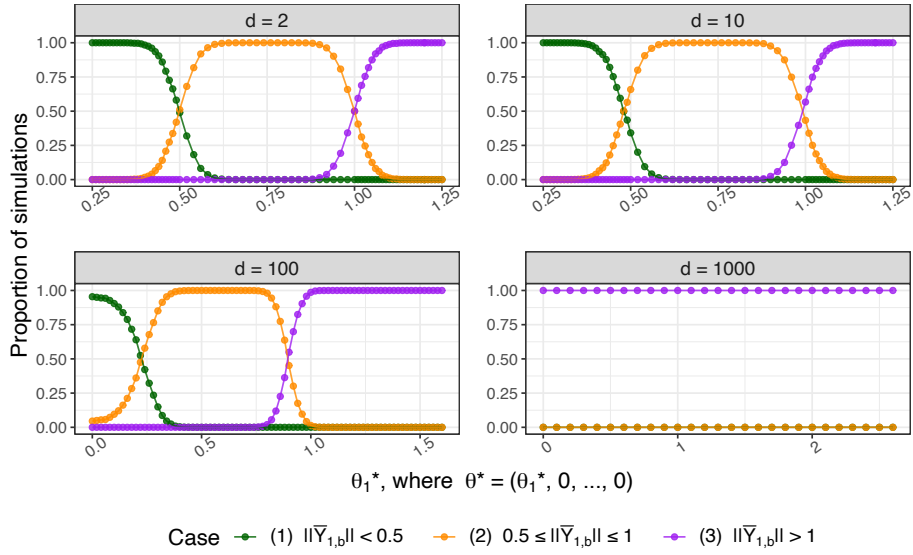

\figuresize{.6}
\figurebox{20pc}{25pc}{}[SuppFigure7.pdf]
\caption{Proportions of three cases that compose the hybrid LRT. We set $\alpha = 0.10$ and $n=1000$, and we perform 1000 simulations at each value of $\|\theta^*\|$. We subsample $B = 100$ times.}
\label{fig:case_props}
\end{figure}

\subsection{Hybrid power when $\theta^* = 0$, $d = 100$, and $n = 1000$} \label{sec:hybrid_power_d100}

When $\theta^* = 0$, $d = 100$, and $n = 1000$, Figure~\ref{fig:case_props} shows that $\|\bar{Y}_{1,b}\| < 0.5$ (case 1) occurs with probability of approximately 0.95, and $\|\bar{Y}_{1,b}\| \in [0.5, 1]$ (case 2) occurs with probability of approximately 0.05. At these parameters, the hybrid method has power of approximately 0, as shown in Figure~\ref{fig:power_interval} in the main paper. We consider the power of the hybrid method at a single split of the data:

{\scriptsize
\begin{align*}
\P&_{\theta^* = 0}(U_n \one(\|\bar{Y}_1\| < 0.5) + \one(\|\bar{Y}_1\| \in [0.5, 1]) + R_n \one(\|\bar{Y}_1\| > 1) \geq 1/\alpha) \\
&= \underbrace{\P_{\theta^* = 0}(\|\bar{Y}_1\| < 0.5, \|\bar{Y}_0\| < 0.5)}_{A_1} \underbrace{\P_{\theta^* = 0}\left(\exp\left(-\frac{n}{4}\|\bar{Y}_0 - \bar{Y}_1\|^2 + \frac{n}{4}\|\bar{Y}_0 - 0.5 \bar{Y}_0 / \|\bar{Y}_0\|\|^2 \right) \geq \frac{1}{\alpha} \: \Big| \: \|\bar{Y}_1\| < 0.5, \|\bar{Y}_0\| < 0.5 \right)}_{A_2} + \\
&\qquad \underbrace{\P_{\theta^* = 0}(\|\bar{Y}_1\| < 0.5, \|\bar{Y}_0\| \in [0.5, 1])}_{B_1} \underbrace{\P_{\theta^* = 0}\left(\exp\left(-\frac{n}{4}\|\bar{Y}_0 - \bar{Y}_1\|^2 + \frac{n}{4}\|\bar{Y}_0 - \bar{Y}_0 \|^2 \right) \geq \frac{1}{\alpha} \: \Big| \: \|\bar{Y}_1\| < 0.5, \|\bar{Y}_0\| \in [0.5, 1] \right)}_{B_2} + \\
&\qquad \underbrace{\P_{\theta^* = 0}(\|\bar{Y}_1\| < 0.5, \|\bar{Y}_0\| > 1)}_{C_1} \underbrace{\P_{\theta^* = 0}\left(\exp\left(-\frac{n}{4}\|\bar{Y}_0 - \bar{Y}_1\|^2 + \frac{n}{4}\|\bar{Y}_0 - \bar{Y}_0/ \|\bar{Y}_0\| \|^2 \right) \geq \frac{1}{\alpha} \: \Big| \: \|\bar{Y}_1\| < 0.5, \|\bar{Y}_0\| > 1 \right)}_{C_2} + \\
&\qquad \underbrace{\P_{\theta^* = 0}(\|\bar{Y}_1\| \in [0.5, 1])}_{D_1} \underbrace{\P_{\theta^* = 0}(1 \geq 1/\alpha \mid \|\bar{Y}_1\| \in [0.5, 1])}_{D_2} + \\
&\qquad \underbrace{\P_{\theta^* = 0}(\|\bar{Y}_1\| > 1)}_{E_1} \underbrace{\P_{\theta^* = 0}\left(\exp\left(-\frac{n}{4}\|\bar{Y}_0 - \bar{Y}_1\|^2 + \frac{n}{4}\|\bar{Y}_0 - \bar{Y}_1 / \|\bar{Y}_1\| \|^2 \right) \geq \frac{1}{\alpha} \: \Big| \: \|\bar{Y}_1\| > 1 \right)}_{E_2}.
\end{align*}
}
The probabilities $B_2$ and $D_2$ equal 0. In addition,
\begin{align*}
\P_{\theta^* = 0}(\|\bar{Y}_0\| > 1) &= \P_{\theta^* = 0}(\|\bar{Y}_1\| > 1) \\
&= \P_{\theta^* = 0}\left(\frac{n}{2} \|\bar{Y}_1\|^2 > \frac{n}{2}\right) \\
&= \P(\chi^2_{df = 100} > 1000/2) \\
&\approx 0.
\end{align*}
So $C_1$ and $E_1$ are also approximately 0. That means we only need to consider $A_1 A_2$. Working with the joint probability, we see

{\small
\begin{align*}
A_1 A_2 &= \P_{\theta^* = 0}\left(\exp\left(-\frac{n}{4}\|\bar{Y}_0 - \bar{Y}_1\|^2 + \frac{n}{4}\|\bar{Y}_0 - 0.5 \bar{Y}_0 / \|\bar{Y}_0\|\|^2 \right) \geq \frac{1}{\alpha}, \: \|\bar{Y}_1\| < 0.5, \: \|\bar{Y}_0\| < 0.5 \right) \\
&\leq \P_{\theta^* = 0} \left(\|\bar{Y}_0 - \bar{Y}_1\|^2 < \|\bar{Y}_0 - 0.5 \bar{Y}_0 / \|\bar{Y}_0\| \|^2, \|\bar{Y}_0\| < 0.5\right) \\
&\leq \P_{\theta^* = 0} (\|\bar{Y}_0 - \bar{Y}_1\|^2 < 0.25) \\
&= \P((4/n)\chi^2_{df = 100} < 1/4) \\
&= \P(\chi^2_{df = 100} < 1000/16) \\
&\approx 0.001.
\end{align*}
}
This means that at a single split of the data, the power at $\|\theta^*\| = 0$, $d = 100$, and $n = 1000$ is
{
\begin{align*}
\P_{\theta^* = 0}(U_n \one(\|\bar{Y}_1\| < 0.5) + \one(\|\bar{Y}_1\| \in [0.5, 1]) + R_n \one(\|\bar{Y}_1\| > 1) \geq 1/\alpha) \leq 0.001.
\end{align*}
}

\subsection{Hybrid power when $\theta^* = 0$, $d = 1000$, and $n = 1000$} \label{sec:hybrid_power_d1000}

When $\theta^* = 0$, $d = 1000$, and $n = 1000$, we see that the hybrid method selects case 3 ($\|\bar{Y}_{1,b}\| > 1$) in all simulations. This is essentially choosing the wrong case, since $\|\theta^*\| = 0 < 0.5$. Numerically, we can show that the hybrid method will have power of approximately 0 at these parameters. Again, we consider a single split of the data. 

{\small
\begin{align*}
\P_{\theta^* = 0}&(U_n \one(\|\bar{Y}_1\| < 0.5) + \one(\|\bar{Y}_1\| \in [0.5, 1]) + R_n \one(\|\bar{Y}_1\| > 1) \geq 1/\alpha) \\
&= \underbrace{\P_{\theta^* = 0}(\|\bar{Y}_1\| < 0.5)}_{A_1} \underbrace{\P_{\theta^* = 0}\left(U_n \geq 1/\alpha \: \Big| \: \|\bar{Y}_1\| < 0.5 \right)}_{A_2} + \\
&\qquad \underbrace{\P_{\theta^* = 0}(\|\bar{Y}_1\| \in [0.5, 1])}_{B_1} \underbrace{\P_{\theta^* = 0}(1 \geq 1/\alpha \mid \|\bar{Y}_1\| \in [0.5, 1])}_{B_2} + \\
&\qquad \underbrace{\P_{\theta^* = 0}(\|\bar{Y}_1\| > 1)}_{C_1} \underbrace{\P_{\theta^* = 0}\left(\exp\left(-\frac{n}{4}\|\bar{Y}_0 - \bar{Y}_1\|^2 + \frac{n}{4}\|\bar{Y}_0 - \bar{Y}_1 / \|\bar{Y}_1\| \|^2 \right) \geq \frac{1}{\alpha} \: \Big| \: \|\bar{Y}_1\| > 1 \right)}_{C_2}.
\end{align*}
}

The probability $B_2$ equals 0. In addition, $A_1$ is approximately 0 because
\begin{align*}
\P_{\theta^* = 0}(\|\bar{Y}_1\| < 0.5) &= \P_{\theta^* = 0}\left((n/2) \|\bar{Y}_1\|^2 < n/8 \right) \\
&= \P(\chi^2_{df = 1000} < 1000/8) \\
&\approx 0.
\end{align*}

So the probability of rejecting $H_0$ at this choice of parameters is approximately
\begin{align*}
C_1 C_2 &= \P_{\theta^* = 0}\left(\exp\left(-\frac{n}{4}\|\bar{Y}_0 - \bar{Y}_1\|^2 + \frac{n}{4}\|\bar{Y}_0 - \bar{Y}_1 / \|\bar{Y}_1\| \|^2 \right) \geq \frac{1}{\alpha}, \|\bar{Y}_1\| > 1 \right) \\
&\leq \P_{\theta^* = 0}\left( \|\bar{Y}_0 - \bar{Y}_1\|^2 < \|\bar{Y}_0 - \bar{Y}_1 / \|\bar{Y}_1\| \|^2, \|\bar{Y}_1\| > 1 \right) \\
&= \P_{\theta^* = 0}\left(\|\bar{Y}_0\|^2 - 2\bar{Y}_0^T \bar{Y}_1 + \|\bar{Y}_1\|^2 < \|\bar{Y}_0\|^2 - 2\bar{Y}_0^T \bar{Y}_1/\|\bar{Y}_1\| + 1, \|\bar{Y}_1\| > 1  \right) \\
&= \P_{\theta^* = 0}\left(2\bar{Y}_0^T \bar{Y}_1 (1 / \|\bar{Y}_1\| - 1) + \|\bar{Y}_1\|^2 < 1, \|\bar{Y}_1\| > 1  \right) \\
&= \P_{\theta^* = 0}\left(2\bar{Y}_0^T \bar{Y}_1 (1 - \|\bar{Y}_1\|) /  \|\bar{Y}_1\| < 1 - \|\bar{Y}_1\|^2, \|\bar{Y}_1\| > 1  \right) \\
&= \P_{\theta^* = 0}\left(2\bar{Y}_0^T \bar{Y}_1 (1 - \|\bar{Y}_1\|) /  \|\bar{Y}_1\| < (1 - \|\bar{Y}_1\|) (1 + \|\bar{Y}_1\|), \|\bar{Y}_1\| > 1  \right) \\
&= \P_{\theta^* = 0}\left(2\bar{Y}_0^T \bar{Y}_1  > \|\bar{Y}_1\| (1 + \|\bar{Y}_1\|), \|\bar{Y}_1\| > 1  \right) \\
&\leq \P_{\theta^* = 0}\left(\bar{Y}_0^T \bar{Y}_1 > 1\right).
\end{align*}
Let $\sigma = 1/\sqrt{500}$. Since $\bar{Y}_0$ and $\bar{Y}_1$ are averages of 500 $N(0, I_d)$ random variables, we see that $\bar{Y}_0 \sim N(0, \sigma^2 I_d)$ and $\bar{Y}_1 \sim N(0, \sigma^2 I_d)$. Let $\lambda = -d/2 + (1/2)\sqrt{d^2 + 4/\sigma^4}$. (This choice of $\lambda$ minimizes $\E[\exp(\lambda \bar{Y}_0^T \bar{Y}_1)] / \exp(\lambda)$ out of $\lambda > 0$.) Let $\nu = \sigma / (1 - \sigma^4 \lambda^2)^{1/2}$. We derive
{\footnotesize
\begin{align*}
\P&_{\theta^* = 0}\left(\bar{Y}_0^T \bar{Y}_1 > 1\right) \\
&= \P_{\theta^* = 0}\left(\exp\left( \lambda \bar{Y}_0^T \bar{Y}_1 \right) > \exp(\lambda) \right) \\
&\leq \E_{\theta^* = 0}\left[\exp\left( \lambda \bar{Y}_0^T \bar{Y}_1 \right)\right] / \exp(\lambda) \\
&= \exp(-\lambda) \int_{\R^d} \int_{\R^d} \frac{1}{(2\pi)^d |\sigma^2 I_d|} \exp\left( -\frac{1}{2\sigma^2} \|\bar{Y}_0\|^2 - \frac{1}{2\sigma^2} \|\bar{Y}_1\|^2 + \lambda \bar{Y}_0^T \bar{Y}_1 \right) d\bar{Y}_0 d\bar{Y}_1 \\
&= \exp(-\lambda) \int_{\R^d} \frac{1}{(2\pi)^{d/2} |\sigma^2 I_d|^{1/2}} \exp\left(-\frac{1}{2\sigma^2} \|\bar{Y}_1\|^2\right) \left\{ \int_{\R^d} \frac{1}{(2\pi)^{d/2} |\sigma^2 I_d|^{1/2}} \exp\left( -\frac{1}{2\sigma^2} \|\bar{Y}_0\|^2 + \lambda \bar{Y}_0^T \bar{Y}_1 \right) d\bar{Y}_0 \right\} d\bar{Y}_1 \\
&= \exp(-\lambda) \int_{\R^d} \frac{1}{(2\pi)^{d/2} |\sigma^2 I_d|^{1/2}} \exp\left(-\frac{1}{2\sigma^2} \|\bar{Y}_1\|^2\right) \left\{ \E\left[\exp((\lambda \bar{Y}_1)^T \bar{Y}_0) \mid \bar{Y}_1 \right] \right\} d\bar{Y}_1 \\
&= \exp(-\lambda) \int_{\R^d} \frac{1}{(2\pi)^{d/2} |\sigma^2 I_d|^{1/2}} \exp\left(-\frac{1}{2\sigma^2} \|\bar{Y}_1\|^2\right) \exp\left(\frac{1}{2} \lambda^2 \sigma^2 \|\bar{Y}_1\|^2 \right) d\bar{Y}_1 \\
&= \exp(-\lambda) \int_{\R^d} \frac{1}{(2\pi)^{d/2} |\sigma^2 I_d|^{1/2}} \exp\left(-\frac{1}{2} \left(\frac{1}{\sigma^2} - \sigma^2 \lambda^2 \right) \|\bar{Y}_1\|^2 \right) d\bar{Y}_1 \\
&= \exp(-\lambda) \int_{\R^d} \frac{1}{(2\pi)^{d/2} |\sigma^2 I_d|^{1/2}} \exp\left(-\frac{1}{2} \left(\frac{1 -\sigma^4 \lambda^2}{\sigma^2}\right) \|\bar{Y}_1\|^2 \right) d\bar{Y}_1 \\
&= \exp(-\lambda) \int_{\R^d} \frac{1}{(2\pi)^{d/2} |\sigma^2 I_d|^{1/2}} \exp\left(-\frac{1}{2\nu^2} \|\bar{Y}_1\|^2 \right) d\bar{Y}_1 \\
&= \exp(-\lambda) \frac{|\nu^2 I_d|^{1/2}}{|\sigma^2 I_d|^{1/2}} \int_{\R^d} \frac{1}{(2\pi)^{d/2} |\nu^2 I_d|^{1/2}} \exp\left(-\frac{1}{2\nu^2} \|\bar{Y}_1\|^2 \right) d\bar{Y}_1 \\
&= \exp(-\lambda)(\nu/\sigma)^d \\
&\approx \exp(-207) (1.1)^{1000} \\
&\approx 0.
\end{align*}
}

At a single split of the data, the power at $\|\theta^*\| = 0$, $d = 1000$, and $n = 1000$ is approximately 0 because
\begin{align*}
\P_{\theta^* = 0}&(U_n \one(\|\bar{Y}_1\| < 0.5) + \one(\|\bar{Y}_1\| \in [0.5, 1]) + R_n \one(\|\bar{Y}_1\| > 1) \geq 1/\alpha) \\
&\approx \P_{\theta^* = 0}\left(\exp\left(-\frac{n}{4}\|\bar{Y}_0 - \bar{Y}_1\|^2 + \frac{n}{4}\|\bar{Y}_0 - \bar{Y}_1 / \|\bar{Y}_1\| \|^2 \right) \geq \frac{1}{\alpha}, \|\bar{Y}_1\| > 1 \right) \\
&\leq \P_{\theta^* = 0}\left(\bar{Y}_0^T \bar{Y}_1 > 1\right) \\
&\approx 0.
\end{align*}

\end{document}